\tikzstyle{Arrow} = [
\newcommand*\circled[1]{\tikz[baseline=(char.base)]{
    \node[shape=circle,draw,solid,inner sep=2pt] (char) {#1};}}
\newcommand{\blind}{0}
\theoremstyle{plain}
\newtheorem{theorem}{Theorem}[section]
\newtheorem{proposition}[theorem]{Proposition}
\newtheorem{lemma}[theorem]{Lemma}
\theoremstyle{definition}
\newtheorem{definition}{Definition}[section]
\theoremstyle{remark}
\newtheorem{remark}{Remark}[section]
\newcommand{\eps}{\varepsilon}
\newcommand{\R}{\mathbb{R}}
\newcommand{\PR}{{\mathbb{P}}}
\newcommand{\E}{\mathbb{E}}
\newcommand{\indep}{\perp\kern-1.25ex\perp}
\newcommand{\Id}{\mathrm{Id}}
\DeclareMathOperator*{\minmax}{min\,/\, max}
\DeclareMathOperator*{\var}{var}
\DeclareMathOperator*{\cov}{cov}
\DeclareMathOperator*{\corr}{corr}
\DeclareMathOperator*{\sd}{sd}
\DeclareMathOperator*{\covh}{\widehat{cov}}
\DeclareMathOperator*{\Span}{span}
\DeclareMathOperator*{\pir}{PIR}
\newcommand{\rh}{\hat{R}}
\newcommand{\fh}{\hat{f}}
\newcommand{\hth}{\hat{\theta}}
\newcommand{\bh}{\hat{\beta}}
\newcommand{\xd}{\dot{X}}
\newcommand{\xt}{\tilde{X}}
\newcommand{\yc}{\mathcal{Y}}
\newcommand{\xc}{\mathcal{X}}
\newcommand{\zc}{\mathcal{Z}}
\newcommand{\wc}{\mathcal{W}}
\newcommand{\hc}{\mathcal{H}}
\newcommand{\pd}{\dot{p}}
\newcommand{\varpm}{\mathbin{\vcenter{\hbox{%
        \oalign{\hfil$\displaystyle+$\hfil\cr
          \noalign{\kern-.3ex}
          $\displaystyle[{-}]$\cr}%
      }}}}
\newlength{\dhatheight}
\newcommand{\hathat}[1]{%
  \settoheight{\dhatheight}{\ensuremath{\hat{#1}}}%
  \addtolength{\dhatheight}{-0.3ex}%
  \hat{\vphantom{\rule{1pt}{\dhatheight}}%
    \smash{\hat{#1}}}}
\newcommand\addtag[1][]{%
  \refstepcounter{equation}\hfill(\theequation)%
  \notblank{#1}{\label{#1}}{}}
\newcommand{\bt}[1]{{\color{black} #1}}
\newcommand*{\addFileDependency}[1]{
\typeout{(#1)}
%
%
\@addtofilelist{#1}
%
\IfFileExists{#1}{}{\typeout{No file #1.}}
}\makeatother
\newcommand*{\myexternaldocument}[1]{%
\externaldocument{#1}%
\addFileDependency{#1.tex}%
\addFileDependency{#1.aux}%
}
\begin{document}

\def\spacingset#1{\renewcommand{\baselinestretch}%
{#1}\small\normalsize} \spacingset{1}


\if0\blind
{
  \title{\bf Optimization-based Sensitivity Analysis for Unmeasured Confounding using Partial Correlations}
  \author{Tobias Freidling\thanks{
    TF is supported by a PhD studentship from GlaxoSmithKline Research \& Development.}\hspace{.2cm}\thanks{New affiliation: Department of Mathematics, \'Ecole polytechnique f\'ed\'erale de Lausanne, Switzerland}\\
    {\small Statistical Laboratory, DPMMS, University of Cambridge, United Kingdom}\\
    and \\
    Qingyuan Zhao\thanks{QZ is partly supported by the EPSRC grant EP/V049968/1.} \\
    {\small Statistical Laboratory, DPMMS, University of Cambridge, United Kingdom}}
    \date{}
  \maketitle
} \fi

\if1\blind
{
  \bigskip
  \bigskip
  \bigskip
  \begin{center}
    {\Large\bf Optimization-based Sensitivity Analysis for Unmeasured Confounding using Partial Correlations}
\end{center}
  \medskip
} \fi


\bigskip
\begin{abstract}
Causal inference necessarily relies upon untestable assumptions; hence, it is crucial to assess the robustness of obtained results to violations of identification assumptions. However, such sensitivity analysis is only occasionally undertaken in practice, as many existing methods \bt{require analytically tractable solutions and their results are often difficult to interpret.}
We take a more flexible approach to sensitivity analysis and view it as a constrained stochastic optimization problem. This work focuses on sensitivity analysis for a linear causal effect when an unmeasured confounder and a potential instrument are present. We show how the bias of the OLS and TSLS estimands can be expressed in terms of partial correlations. Leveraging the algebraic rules that relate different partial correlations, practitioners can specify intuitive sensitivity models which bound the bias. We further show that the heuristic ``plug-in'' sensitivity interval may not have any confidence guarantees; instead, we propose a bootstrap approach to construct sensitivity intervals which performs well in numerical simulations. We illustrate the proposed methods with a real study on the causal effect of education on earnings and provide user-friendly visualization tools.
\end{abstract}

\noindent%
{\it Keywords:}  Causal inference, instrumental variables, partial identification, stochastic optimization
\vfill

\newpage
\spacingset{1.75} 

\section{Introduction}
In many scientific disciplines, provisional causal knowledge is
predominantly generated from observational data as randomized
controlled experiments are often infeasible or too costly. Because the
treatment is not randomly assigned in an observational study, any
causal conclusions must rely on untestable assumptions, such as
absence of unmeasured confounders or validity of instrumental
variables. Hence, the causal inference is inherently sensitive to
violations of any identification and modelling assumptions, so
reseachers are advised to investigate the robustness of their
results.

The importance of sensitivity analysis has been emphasized in guidelines for designing and reporting observational studies \citep{strobe,pcori}. Yet, it is still rarely conducted in actual studies \citep{thabane2013, de_souza2016}, despite a decade-long line of research. Examples include the seminal work of \citet{cornfield} on the effect of smoking on lung cancer, the Rosenbaum sensitivity model \citep{rosenbaum1983,rosenbaum87}, the E-value proposed by \citet{ding2016}, or the marginal sensitivity model \citep{tan_distributional_2006,zhao,dorn}. Since specifying a meaningful and coherent sensitivity model is oftentimes an intricate task, many previous works concentrate on simple scenarios where closed-form solutions are available. This, however, renders many methods inapplicable or hard to interpret in practical data analysis.

Compared with previous work, a crucial distinction of this article is that we do not require closed-form solutions \bt{for the upper or lower bound on an estimand of interest}. Instead, we cast sensitivity analysis as a constrained stochastic optimization problem. \bt{While the drivers of potential bias are not as obvious as for closed-form solutions, optimization-based sensitivity analysis allows practitioners to define more and more \emph{interpretable} constraints. Moreover, in complex models, it may not be possible to derive closed-form solutions at all.}

We conceptualize sensitivity analysis for the effect of unobserved variables in the following general framework: Consider an i.i.d.\ sample $(V_i,U_i)_{i=1}^n$ from some population, but only the variables $(V_i)_{i=1}^n$ are observed. Denote the joint probability distribution of $(V_i, U_i)$ as $\PR = \PR_{V,U}$ and the marginal distribution of $V_i$ as $\PR_V$. We are interested in estimating and conducting inference for some functional $\beta~{\!=\!}~\beta(\PR_{V,U})$, e.g.\ the causal effect of one observed variable on another. Since this functional generally depends on the joint distribution of $V_i$ and $U_i$, we require additional, \emph{untestable} assumptions on the unmeasured variables to consistently estimate $\beta$. We can assess the robustness of our findings to deviations from such assumptions in these four steps:
\begin{enumerate}
    \item \textbf{Interpretable Sensitivity Parameters} In many cases, $\beta$ can be expressed as a function of two types of parameters, ${\theta=\theta(\PR_V)}$ and $\psi = \psi(\PR_{V,U})$: $\beta(\PR_{V,U}) = \beta(\,\theta(\PR_V),\, \psi(\PR_{V,U})\,)$. The former only depends on $\PR_V$ and can therefore be estimated; the latter additionally depends on the distribution of $U$ and thus cannot be directly estimated. Finding an interpretable parameterization $\psi$ is crucial as practitioners need to express their domain knowledge in terms of plausible values of $\psi$.
    \item \textbf{Specifying Constraints} The assumptions in the primary analysis typically correspond to $\psi$ taking the value (conventionally 0 or 1) that corresponds to the unobserved variables $U$ being ``ignorable''. To investigate the sensitivity to this assumption, partially identified sensitivity analysis asks domain experts to specify a set of plausible $\psi$-values denoted by $\Psi(\theta)$. 
    This set is usually defined in terms of equality and inequality constraints, e.g.\ $\lvert \psi_1 \rvert \leq 0.5$, and may depend on the estimable parameters~$\theta$ as well. Under the condition $\psi \in \Psi(\theta)$, the functional $\beta$ is only partially identified. We call the corresponding set of $\beta$-values the partially identified region~(PIR):
    \begin{equation}\label{eq:def-pir}
      \pir(\PR_V) := \big\{\beta(\theta(\PR_V), \psi)\colon \psi \in \Psi(\theta(\PR_V)) \big\}.
    \end{equation}
    In general, the PIR may be a quite complex object; however, if $\beta$ is real-valued and one-dimensional, we can describe it by solving the following optimization problems:
    \begin{equation} \label{eq:pir-minmax}
        \minmax \beta(\theta(\PR_V), \psi),\qquad \text{subject to } \psi \in \Psi(\theta(\PR_V)).
    \end{equation}
    If the PIR is indeed an interval, the interval from the minimum to the maximum $\beta$-value equals it; otherwise, it is conservative. 
    

    \item \textbf{Solving the Optimization Problem} As both the objective and the constraints in~\eqref{eq:pir-minmax} depend on $\PR_V$, from which we have $n$ samples, this is an instance of stochastic optimization or stochastic programming \citep{shapiro2009}. A natural, plug-in estimator of the optimal values of this problem can be obtained by solving \begin{equation} \label{eq:pir-minmax-plugin}
        \minmax \beta(\hat{\theta}, \psi),\qquad \text{subject to } \psi \in \Psi(\hat{\theta}),
    \end{equation}
    where $\hat{\theta}$ is an estimator of $\theta$ based on the observed
    data.
    \item \textbf{Uncertainty Quantification}
    Constructing confidence intervals in partially identified sensitivity analysis is subtle as there exist two notions of coverage: (1) the true value of $\beta$ or (2) the PIR is covered with high probability. Under either paradigm, uncertainty quantification is a challenging task and few articles address it. We use bootstrapping to construct confidence intervals, where we replace $\hat{\theta}$ with the bootstrap distribution $\hathat{\theta}$ in \eqref{eq:pir-minmax-plugin}.
    \end{enumerate}

In this work, we operationalize the framework above to conduct sensitivity analysis for linear regression and instrumental variables models in the presence of an unmeasured confounder. We generalize the analysis of \citet{ch}, propose a tailored grid search algorithm to solve the resulting optimization problem and provide uncertainty quantification via the bootstrap. In particular, we contribute a flexible and interpretable method to the sparse literature on sensitivity analysis for instrumental variables, e.g.\ \citep{altonji2005,small,wang}.

This article is organized as follows. In Section~\ref{sec:objective}, we use partial correlations as parameters to identify the linear causal effect of interest in terms of $\theta$ and $\psi$. Section~\ref{sec:sensitivity-model} shows how the sensitivity parameters relate to common identification assumptions and provides several ways for practitioners to specify constraints. In Section~\ref{sec:algorithm}, we describe the grid search algorithm. Section~\ref{sec:inference} addresses the construction of sensitivity intervals via the bootstrap. In Section~\ref{sec:example}, we apply our proposed method to a famous study in labour economics by \citet{card} and visualize the results using $b$- and $R$-contour plots. Finally, Section \ref{sec:discussion} concludes this article with a discussion of the developed methodology. Derivations, proofs, and additional results can be found in the supplementary materials. An implementation of our proposed methods is available in the R-package \texttt{optsens} which can be downloaded from \url{https://github.com/tobias-freidling/optsens}. The code to replicate the simulation studies and data analysis in this article can be found at \url{https://github.com/tobias-freidling/optsens-replication}.

\section{Parameterization in terms of Partial Correlations}\label{sec:objective}
Our main goal in this article is to describe a unified approach to sensitivity analysis in the following common situation: We are interested in the causal effect of a one-dimensional treatment $D$ on a continuous outcome~$Y$; moreover, we may also observe some covariates~$X$ and a potential instrument $Z$. Hence, the observed variables are given by $V = (Y,D,X,Z)$. 

In a sensitivity analysis, we are worried about some unmeasured variable $U$ that confounds the causal effect of $D$ on $Y$. This can potentially be addressed by finding an instrumental variable $Z$ for the treatment~$D$, but this instrument may itself be invalid. Readers who are unfamiliar with instrumental variables are referred to Section~\ref{sec:sensitivity-parameters}; one possible graphical representation of the relations of the variables is depicted in Figure~\ref{fig:causal-graph}.

In this paper, the direct causal effect of $D$ on $Y$ -- and thus the objective function -- is defined based on the covariance matrix of $(V,U)$ which is assumed to be positive definite. We define the residual of $Y$ after partialing/regressing out $X$ by $Y^{\perp X} := Y - X^T\var(X)^{-1}\cov(X,Y)$ and use the analogous expressions for other combinations of variables. Then, the direct linear causal effect is given by
\begin{equation*}
    \beta = \beta_{Y\sim D\vert X,Z,U} := \frac{\cov(Y^{\perp X,Z,U}, D^{\perp X,Z,U})}{\var(D^{\perp X,Z,U})},
\end{equation*}
\bt{assuming that $(X,Z,U)$ does not contain descendants of $D$ and $Y$.}
If the involved variables follow a linear structural equation model (LSEM), see for example \citet{spirtes_causation_2000}, $\beta$ equals the coefficient of $D$ in the equation determining $Y$. Even if the true model involves categorical variables or is non-linear, $\beta$ can still have a causal interpretation. For instance, if $D$ is binary, it can be shown that $\beta$ equals a weighted treatment effect contrasting the groups $D=0$ and $D=1$ \citep{angrist2009mostly}.

Since $U$ is unobserved, $\beta$ cannot be estimated. Therefore, researchers usually make additional identification assumptions such as ignorability of $U$ or $Z$ being a valid instrument. If these hold, $\beta$ equals the ordinary least squares (OLS) estimand $\beta_{Y\sim D\vert X,Z}$ or the two-stage least squares (TSLS) estimand $\beta_{D\sim Z\vert X,\, Y\sim Z \vert X}$, respectively. These are defined as
\begin{equation*}
  \beta_{Y\sim D\vert X,Z} := \frac{\cov(Y^{\perp X,Z}, D^{\perp
      X,Z})}{\var(D^{\perp X,Z})} \qquad\text{and}\qquad \beta_{D\sim Z\vert X,\, Y\sim Z \vert X}
  := \frac{\cov(Y^{\perp X}, Z^{\perp
      X})}{\cov(D^{\perp X}, Z^{\perp X})}.
\end{equation*}
In order to assess the sensitivity of these identification approaches, we express the objective $\beta$ in terms of estimable parameters~$\theta$ and sensitivity parameters~$\psi$. To this end, we first summarize (partial) correlations/$R$-values and $R^2$-values.

\subsection{$R$- and $R^2$-values}\label{sec:r2-value}
Correlations/$R$-values and $R^2$-values are often introduced together with the multivariate normal distribution, see e.g.\ \citet[sec.\ 2.5]{anderson}; yet, they rely on no distributional assumptions other than positive definiteness of the covariance matrix of the involved random variables. $R$- and $R^2$-values can be defined for both random variables and samples from them. For brevity, we only state the results for the population versions\bt{, i.e.\ the estimands and not their estimators,} and assume without loss of generality that random variables are centred. More details can be found in Appendix~\ref{app:r2-linear-models}.

Let $Y$ be a random variable, let $X$ and $Z$ be two random
vectors and suppose that they all have finite variances and the covariance matrix of $(Y,X,Z)$ is positive definite. Using the notation of residuals from above, we define $\sigma^2_{Y\sim X} := \var(Y^{\perp X})$; let $\sigma_{Y\sim X}$ denote its square root.

\begin{definition}
\label{def:r2-value}
  
  The (marginal) $R^2$-value of $Y$ on $X$, the partial $R^{2}$-value $Y$ on $X$ given $Z$ and the $f^2$-value of $Y$ on $X$ given $Z$ are
  defined as
  \begin{equation*}\label{eq:r2}
    R^{2}_{Y \sim X}:= 1 - \frac{\sigma^2_{Y\sim X}}{\sigma^2_Y},\qquad
    R^2_{Y\sim X\vert Z} := \frac{R^2_{Y\sim X + Z} - R^2_{Y\sim
        Z}}{1-R^2_{Y \sim Z}},
        \qquad
        f^{2}_{Y \sim X\vert Z} := \frac{R^2_{Y\sim
        X\vert Z}}{1-R^2_{Y\sim X\vert Z}},
  \end{equation*}
  respectively.  
  If $X$ is one-dimensional, 
  the partial $R$- and $f$-value \citep{cohen} are given by 
  \begin{equation*}
  R_{Y\sim X \vert Z} :=\corr(Y^{\perp Z}, X^{\perp Z}),\qquad
  f_{Y\sim X \vert Z} := \frac{R_{Y\sim X\vert
    Z}}{\sqrt{1-R^{2}_{Y\sim X \vert Z}}}.
\end{equation*}
The marginal $f^2$-, $R$- and $f$-values can be further defined by
using an ``empty'' $Z$ in the definitions above; details are omitted.
\end{definition}


The partial $R^2$ takes values in $[0,1]$ and is a measure of how well the variables in $X$ can be linearly combined to explain the variation in $Y$ after already using linear combinations of~$Z$. Values close to 1 indicate high explanatory capability. This simple interpretation makes the $R^{2}$-value a popular tool to assess the goodness of fit of a linear model. The partial $f^{2}$ is a monotone transformation of the partial $R^{2}$ and takes values in $[0,\infty]$. As the square of the partial correlation of $Y$ and $X$ given $Z$ indeed equals the corresponding partial $R^2$-value, we refer to correlations also as $R$-values in the following. They not only capture the strength of association but also the direction of dependence.
Different (partial) $R$- and $R^2$-values are related by a set of algebraic rules which we named the $R^2$-calculus. The full set of relationships along with related results can be found in Appendix~\ref{app:r2-calculus}.

Due to their straightforward interpretation, $R^2$- and partial $R^2$-values are widely used to help practitioners understand the results of sensitivity analyses, e.g.\ \citet{imbens_2003, veitch2020}. The use of partial $R$-values proves particularly useful in parameterizing the bias of ordinary least squares (OLS) \bt{estimands}. This is elaborated next.

\subsection{Parameterization of the Objective}\label{sec:parameterization}
In previous works, \citet{frank}, \citet{hosman} and \citet{oster} conduct sensitivity analysis with parameterizations of the objective that are partly based on $R$- or $R^2$-values. \citet{ch} take this idea further and show that the absolute value of the bias of the regression \bt{estimand} can be expressed solely using partial $R^2$-values and standard deviations. This parameterization can be refined by using partial $R$-values which naturally capture the direction of confounding and expanded to the TSLS estimand.

\begin{proposition}\label{lem:objective}
    \begin{align}
        \beta = \beta_{Y\sim D \vert X,Z,U} &= \beta_{Y\sim D\vert X,Z} - R_{Y\sim U\vert X,Z,D}\,f_{D\sim U\vert X,Z} \frac{\sigma_{Y\sim X+Z+D}}{\sigma_{D\sim X+Z}}, \label{eq:objective-identification}\\[1ex]
        &= \beta_{D\sim Z\vert X, Y\sim Z \vert X} - \left[\frac{f_{Y\sim Z\vert X,D}}{R_{D\sim Z\vert X}} + R_{Y\sim U \vert X, Z, D}\,f_{D\sim U \vert X,Z} \right] \frac{\sigma_{Y\sim X+Z+D}}{\sigma_{D\sim X+Z}}.\label{eq:objective-iv}
    \end{align}
\end{proposition}

\begin{remark}
    Similar formulas can be obtained for the family of $k$-class estimands \citep{theil,nagar} and anchor regression estimands \citep{anchor} as they interpolate between the OLS and TSLS estimands.
\end{remark}
In equation \eqref{eq:objective-identification}, all quantities but $R_{Y\sim U\vert X,Z,D}$ and $R_{D\sim U\vert X,Z}$ can be estimated from the observed data. Hence, we take these partial correlations as sensitivity parameters $\psi$. How we may use them to assess deviations from the identification assumptions of ignorability and/or $Z$ being a valid instrument, is elaborated in Section~\ref{sec:sensitivity-parameters}. First, we show that Proposition~\ref{lem:objective} can also be used to conduct sensitivity analysis for multiple unmeasured confounders by reinterpreting $U$.

\subsection{Multiple Unmeasured Confounders}\label{sec:multiple-confounders}
The assumption that the unmeasured confounder $U$ is one-dimensional has kept the algebra tractable thus far. In order to obtain a bias formula for multiple confounders, a generalization of Proposition~\ref{lem:objective} and more sensitivity parameters are required. Such extensions are explored in the Appendix~\ref{app:multiple-confounders}.


Alternatively, we may view Proposition~\ref{lem:objective} as providing an upper bound on $|\beta_{Y\sim D\vert X,Z} - \beta_{Y\sim D\vert X,Z,U}|$ that can be immediately generalized to multi-dimensional $U$ as stated in the next result. Heuristically, this is because the confounding effects of several unmeasured variables can partly cancel each other; see \citet[sec.\ 4.5]{ch}. To our knowledge, this result is first obtained by \citet{hosman}.

\begin{proposition}\label{lem:objective-multi}
Let $U$ be a random vector. Then,
  \begin{equation*}\label{eq:bias-bound}
    \left|\beta_{Y\sim D\vert X,Z} - \beta \right|
    \leq \sqrt{R^2_{Y\sim U \vert X,Z,D}\,f^2_{D \sim U \vert X,Z}
      \frac{\sigma^2_{Y\sim X+Z+D}}{\sigma^2_{D\sim X+Z}}}.
  \end{equation*}
\end{proposition}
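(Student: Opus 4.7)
The plan is to reduce the multivariate-$U$ case to the scalar case of Proposition~\ref{lem:objective} via a one-dimensional projection of $U$ that preserves $\beta$, and then control the two partial $R$-type factors by a standard monotonicity-of-projection argument.

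First I would construct the scalar surrogate. Let $\gamma_U$ denote the coefficient (sub)vector on $U$ in the population least-squares regression of $Y$ on $(D,X,Z,U)$, and define the scalar
\[
\tilde U := \gamma_U^{T} U.
\]
Because the population residual from the regression $Y \sim (D,X,Z,U)$ is orthogonal to every component of $U$ and in particular to $\tilde U$, the regression of $Y$ on $(D,X,Z,\tilde U)$ yields the \emph{same} coefficient on $D$, i.e.\
\[
\beta \;=\; \beta_{Y\sim D\mid X,Z,U} \;=\; \beta_{Y\sim D\mid X,Z,\tilde U}.
\]
This is the standard fact that collapsing a set of regressors onto their "effective direction" in the current regression preserves the other coefficients; I would state and verify it in one line using Frisch--Waugh--Lovell (or by directly checking orthogonality of the resulting residual to $(D,X,Z)$).

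Now I apply Proposition~\ref{lem:objective} to the scalar $\tilde U$ and take absolute values:
\[
\left|\beta_{Y\sim D\mid X,Z} - \beta\right|
\;=\; \left|R_{Y\sim \tilde U\mid X,Z,D}\right|\,\left|f_{D\sim \tilde U\mid X,Z}\right|\,\frac{\sigma_{Y\sim X+Z+D}}{\sigma_{D\sim X+Z}}.
\]
To obtain the claimed bound it then suffices to show
\[
R_{Y\sim \tilde U\mid X,Z,D}^{2} \;\le\; R_{Y\sim U\mid X,Z,D}^{2}
\qquad\text{and}\qquad
f_{D\sim \tilde U\mid X,Z}^{2} \;\le\; f_{D\sim U\mid X,Z}^{2}.
\]
Both inequalities are instances of the same monotonicity principle: since $\tilde U$ is a linear combination of the components of $U$, the $L^{2}$-span of $\tilde U^{\perp X,Z,D}$ is contained in that of $U^{\perp X,Z,D}$, so the projection of $Y^{\perp X,Z,D}$ onto the larger span explains at least as much variance (this is exactly Lemma~\ref{lem:r2-corr}/the $R^{2}$-calculus in the appendix). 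The same argument, now for $D^{\perp X,Z}$ projected onto $\tilde U^{\perp X,Z}$ versus $U^{\perp X,Z}$, gives the $f^{2}$-bound, because $f^{2}$ is a monotone increasing transformation of $R^{2}$. Combining the two monotonicity inequalities with the displayed identity yields the proposition.

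The only real subtlety, and the main obstacle worth being careful about, is the reduction step: one must verify that $\tilde U$ genuinely reproduces $\beta$ (rather than, say, $\beta_{Y\sim D\mid X,Z}$), and that the monotonicity of partial $R^{2}$ is invoked with the correct conditioning set. Once $\tilde U$ is introduced and Proposition~\ref{lem:objective} applies, the remaining inequalities are immediate from the projection interpretation of partial $R^{2}$, so no new calculation is required beyond invoking the $R^{2}$-calculus in the appendix.
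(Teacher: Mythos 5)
Your proof is correct and follows essentially the same route as the paper's: both reduce to the scalar case by collapsing $U$ onto the single direction $\tilde U=\gamma_U^T U$ given by its regression coefficients in $Y\sim D+X+Z+U$, verify that this preserves $\beta$, apply Proposition~\ref{lem:objective}, and then compare the resulting $R^2$- and $f^2$-factors to those of the full vector $U$. The only (immaterial) difference is that the paper establishes the exact identity $R^2_{Y\sim \tilde U\mid X,Z,D}=R^2_{Y\sim U\mid X,Z,D}$ while you invoke only the monotonicity inequality, which indeed suffices.
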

\bt{
If the true unmeasured confounder $U$ is multi-dimensional, there is a one-dimensional linear combination of its components $U^* = \lambda^T U$ such that $\beta_{Y\sim D\vert X,Z,U} = \beta_{Y\sim D\vert X,Z,U^*}$, cf.\ proof of Proposition~\ref{lem:objective-multi} in Appendix~\ref{app:multiple-confounders}. Hence, only sensitivity analysis for $U^*$ would be required. However, since we do not know $\lambda$, we recommend to use the partial $R^2$-values involving~$U$ instead of $U^*$ to define the sensitivity model. This generally yields conservative results as $U$ has more explanatory capability than $U^*$.
}


\section{Sensitivity Model}\label{sec:sensitivity-model}
Proposition \ref{lem:objective} in the previous section has established the dependence of the objective~$\beta$ on two sensitivity parameters: $R_{D\sim U\vert X,Z}$ and $R_{Y\sim U\vert X,Z,D}$. In this section, we use this parameterization to specify sensitivity models that can assess deviations from the typical identification assumptions of ignorability and existence of a valid instrument. \bt{In Figure~\ref{fig:causal-graph}, we give \emph{one} causal graphical model that illustrates these assumptions; other possible graphs that accord to the required conditional independences may be verified by the familiar d-separation criterion \citep{pearl_2009}.

Further, we denote $[p]:=\{1,\ldots,p\}$. For $I \subseteq [p]$, we define $X_I := (X_i)_{i\in I}$ and $I^c := [p]\setminus I$. Finally, let $X_{-j} := X_{\{j\}^c}$ for any $j \in [p]$. 
}




\bt{
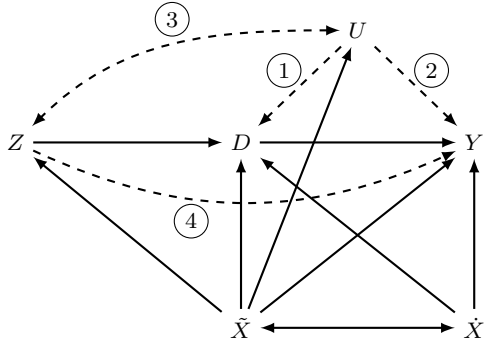
\begin{figure}[t] \centering
  \begin{tikzpicture}
  \node (z) {$Z$};
  \node (d) [right=2.5cm of z] {$D$};
  \node (u) [above right=1.5cm of d] {$U$};
  \node (y) [below right=1.5cm of u] {$Y$};
  \node (x) [below = 1.5cm of d] {$X$};

  \draw[Arrow, thick] (z) -- (d);
  \draw[Arrow, thick] (d) -- (y);
  \draw[Arrow, thick, dashed] (u) -- node[pos=0.3,left=0.08cm] {\circled{1}} (d);
  \draw[Arrow, thick, dashed] (u) -- node[pos=0.3,right=0.08cm] {\circled{2}} (y);
  \draw[Arrow, thick] (x) -- (d);
  \draw[Arrow, thick] (x) -- (y);
  \draw[Arrow, thick] (x) -- (z);

  \draw[latex-latex, thick, dashed, out=180, in=45] (u) to node[above]
  {\circled{3}} (z);
  \draw[Arrow, thick, dashed, out=-20, in = -160] (z) to
  node[pos=0.37, below=-0.05cm] {\circled{4}} (y);

  \draw[Arrow, thick] (x) -- (u);
\end{tikzpicture}
\caption{Causal diagram for regression and instrumental
  variables. Directed edges represent causal effects and bidirected
  edges represent dependence due to unmeasured common causes.}
\label{fig:causal-graph}
\end{figure}
}


\subsection{Sensitivity Parameters}\label{sec:sensitivity-parameters}
The most common identification assumption that allows us to consistently estimate the causal effect of $D$ on $Y$ from observed data, is ``ignorability'' of the unmeasured confounder. Since $\beta$ is defined in terms of the covariance matrix of $(V,U)$, ignorability of $U$ is satisfied if either
\begin{equation}\label{eq:ignorability}
    R_{D\sim U\vert X,Z}=0\qquad \text{or}\qquad R_{Y\sim U\vert X,Z,D} = 0
\end{equation}
holds true. From equation \eqref{eq:objective-identification}, we can directly read off that \eqref{eq:ignorability} suffices to identify $\beta$ via linear regression: $\beta = \beta_{Y\sim D\vert X,Z}$. This makes $R_{D\sim U\vert X,Z}$ and $R_{D\sim U\vert X,Z}$ intuitive sensitivity parameters. More commonly, ignorability is understood as $U$ not being a causal parent of~$D$ and $Y$; in Figure \ref{fig:causal-graph}, this corresponds to the absence of the edges 1 and 2, respectively. In fact, this definition of ignorability implies \eqref{eq:ignorability} if $(V,U)$ adheres to a LSEM. 



Since the violation of ignorability is a common problem -- especially in observational studies --, the method of instrumental variables (IV) is often used to overcome unmeasured confounding. Here, we only provide a very brief introduction to it; the reader is referred to \citet{wooldridge} for a more comprehensive discussion. In our setting, a variable $Z$ is called an instrument for $D$ if (i) it is an independent predictor of $D$, (ii) it is exogenous in the sense that $Z$ is partially uncorrelated with $U$ and (iii) it has no effect on the outcome $Y$ that is not mediated by $D$. These conditions can be expressed as
\begin{equation*}
    \text{(i) } R_{Z\sim D\vert X} \neq 0,\qquad
    \text{(ii) } R_{Z\sim U\vert X} = 0,\qquad
    \text{(iii) } R_{Y\sim Z\vert X,U,D} = 0.
\end{equation*}
While the first condition can be verified with observable data, the second and third cannot and therefore require sensitivity analysis. To this end, we use $R_{Z\sim U\vert X}$ and $R_{Y\sim Z\vert X,U,D}$ as sensitivity parameters for the IV assumptions. If $(V,U)$ follows an LSEM, Figure \ref{fig:causal-graph} graphically depicts the IV assumptions: (i) corresponds to the presence of the directed arrow from $Z$ to $D$; (ii) and (iii) hold if the arrows 3 and 4, respectively, are absent.

\bt{Our approach to sensitivity analysis for instrumental variables differs from \citet{cinelli_omitted_2025} in two notable ways. First, they assume that the effect of the instrument~$Z$ on the outcome~$Y$ is mediated by~$U$. Hence, $R_{Y\sim Z\vert X,U,D}$ is equal to $0$ in their set-up. Second, they use the inversion of Anderson-Rubin tests to define an estimator for the effect of $D$ on $Y$ and formulate their sensitivity model for the regression coefficient of~$Z$ in the reduced form regression $Y = \lambda Z + X\beta + U \gamma + \varepsilon_Y$. In analogy with Proposition~\ref{lem:objective}, the sensitivity parameters in this regression problem are $R_{Z\sim U\vert X}$ and $R_{Y\sim U\vert X,Z}$.}


As the parameterization \eqref{eq:objective-identification} does not involve the IV sensitivity parameters, it is not immediately clear how constraints on $R_{Z\sim U\vert X}$ and $R_{Y\sim Z\vert X,U,D}$ affect $\beta$. We can leverage the $R^2$-calculus to elucidate the relationship between the regression and IV sensitivity parameters. Deferring the derivation to Appendix~\ref{sec:reg-iv-relationship}, we obtain the equations
\begin{equation}\label{eq:reg-iv-connection}
    \begin{aligned}
        f_{Y\sim Z \vert X, U, D}\sqrt{1-R^{2}_{Y\sim U \vert X,Z,D}}
  &= f_{Y\sim Z\vert X,D}\sqrt{1-R^{2}_{Z\sim U \vert X,D}} - R_{Y\sim
    U\vert X, Z, D} R_{Z\sim U\vert X,D},\\
    f_{Z\sim U \vert X,D} \sqrt{1-R^{2}_{D\sim U\vert X,Z}}
  &= f_{Z \sim U\vert X} \sqrt{1-R^{2}_{D\sim Z\vert X}} - R_{D\sim Z\vert X} R_{D\sim U\vert X, Z}.
    \end{aligned}
\end{equation}
\begin{remark}
    Note that we can solve the second equation for $R_{Z\sim U\vert X,D}$ and plug the solution into the first. Hence, \eqref{eq:reg-iv-connection} can be understood as one equation.
\end{remark}
The equations above show that bounding the IV sensitivity parameters may constrain the values of the regression sensitivity parameters as well. Typically, an additional constraint on either $R_{Y\sim U\vert X,Z,D}$ or $R_{D\sim U\vert X,Z}$ is required to ensure that the PIR of the objective has finite length. However, if $Z$ is a valid instrument, \eqref{eq:reg-iv-connection} yields point identification: $\beta = \beta_{Y\sim Z\vert X, D\sim Z\vert X}$. This is elaborated in Appendix~\ref{sec:reg-iv-relationship}.


\subsection{Specifying Constraints}
\begin{table}
  \caption{\label{tab:constraints} Specification of the sensitivity model. Practitioners can specify any combination of senstivity bounds in the second column; the third column states the ensuing optimization constraints. (SB stands for the sensitivity bound in the respective row.)
  }\centering
  \vspace{0.3cm}
  \begin{tabular}{ccll}
    \toprule
    Edge & & Sensitivity bound & Optimization constraints \\
    \midrule
    \circled{1}  $U \rightarrow D$ & (1) & $R_{D\sim U\vert X,Z} \in [B^l_{U\!D}, B^u_{U\!D}]$ & SB\\
         & (2) & $R^2_{D \sim U \vert \xt, \xd_I,Z} \leq b_{U\! D} R^2_{D \sim \xd_J\vert \xt, \xd_I,Z}$ & \eqref{eq:reg-cond-d2}\\
    \circled{2} $U \rightarrow Y$ & (1) & $R_{Y \sim U\vert X,Z,D} \in [B^l_{UY}, B^u_{UY}]$ & SB\\
        & (2) & $R^2_{Y \sim U\vert \xt, \xd_I,Z} \leq b_{UY} R^2_{Y \sim \xd_J\vert \xt, \xd_I,Z}$ & \eqref{eq:reg-con-y22}, \eqref{eq:reg-con-y21} \\
        & (3) & $R^2_{Y \sim U\vert \xt, \xd_I,Z,D} \leq b_{UY} R^2_{Y
               \sim \xd_J\vert \xt, \xd_I,Z,D}$ & \eqref{eq:reg-con-y22},
                                                  \eqref{eq:reg-con-y32},\,\,SB \\
    \midrule
     \circled{3} $U \leftrightarrow Z$ & (1) & $R_{Z \sim U\vert X} \in [B^l_{U\!Z},B^u_{U\!Z}]$ & \eqref{eq:reg-iv-connection},\,\,SB\\
         & (2) & $R^2_{Z \sim U\vert \xt,\xd_{-j}}
         \leq b_{U\! Z} R^2_{Z \sim \xd_j\vert \xt, \xd_{-j}}$ & \eqref{eq:reg-iv-connection}, (\ref{eq:iv-con-z2})\\
     \circled{4} $Z \rightarrow Y$ & (1) & $R_{Y\sim Z\vert X,U,D} \in [B^l_{ZY}, B^u_{ZY}]$ & \eqref{eq:reg-iv-connection},\,\,SB\\
         & (2) & $R^2_{Y\sim Z\vert X,U,D} \leq b_{ZY} R^2_{Y\sim
                \xd_j\vert \xt, \xd_{-j},Z,U,D}$ & \eqref{eq:reg-iv-connection},
                                                   \eqref{eq:iv-con-ex22},\,\,SB\\
    \bottomrule
    \end{tabular}
\end{table}

Having determined the sensitivity parameters, we now focus on putting interpretable constraints on them. Table \ref{tab:constraints} summarizes all bounds that are used in this paper: The constraints on $U\rightarrow D$ and $U\rightarrow Y$ were proposed by \citet{ch}; the IV constraints $U\leftrightarrow Z$ and $Z \rightarrow Y$ are novel. 
Unlike previous works, the optimization approach to sensitivity analysis allows practitioners to use any combination of constraints as we do not need to derive an analytical expression of the optimal values for every single combination. In particular, it is possible to simultaneously conduct sensitivity analysis for the identification assumption of regression and IV.

The provided bounds can be categorized in two groups: direct and comparative constraints. In the first case, we directly bound the sensitivity parameter in question. For instance, the constraint $R_{D\sim U\vert X,Z} \in [B^{l}_{U\!D}, B^u_{U\!D}]$,
where $-1 < B^{l}_{U\!D} < B^u_{U\!D} < 1$, means that the correlation between $D$ and $U$, after accounting for linear effects of $X$ and $Z$, lies within the interval $[B^{l}_{U\!D}, B^u_{U\!D}]$.

\bt{In the second case, we compare the confounding influence of $U$ to that of an observed covariate (or group of covariates). To this end, we assume that the random vector $X \in \R^{p}$ can be partitioned into $\xd\in
\R^{\dot{p}}$ and $\xt \in \R^{\tilde{p}}$ such that $X = (\xd,\xt)$ and $R^2_{U \sim \xd \vert \xt, Z} =0$. Furthermore, we restrict ourselves to comparisons with $\xd$-covariates. This facilitates both the derivation of the sensitivity model and the interpretation of constraints. For example, the bound
\begin{equation}\label{eq:bound-ex1}
    R^2_{Y\sim U\vert \xt, \xd_{-j},Z} \leq b_{UY} R^2_{Y\sim \xd_j\vert \xt, \xd_{-j},Z},
\end{equation}
where $j \in [\pd], b_{U\! Y} \geq 0$, expresses the belief that $U$ can explain at most $b_{U\! Y}$  as much variance in $Y$ as $\xd_j$ can, after accounting for linear effects of $(\xt,\xd_{-j},Z)$. Since $U$ and $\xd_j$ are partially uncorrelated, the explanatory capability of $U$ is not increased or decreased by a ``spurious correlation'' with $\xd_j$ and vice versa. Hence, the predictive strengths of~$U$ and~$\xd_j$ are compared on a level playing field.
} Since the inequality~\eqref{eq:bound-ex1} does not directly constrain the sensitivity parameters of interest, we need to use the algebraic rules of the $R^2$-calculus to connect $R_{Y\sim U\vert X,Z,D}$ and $R_{D\sim U\vert X,Z}$ to the user-specified constraint. Indeed, under assumption $R^2_{U \sim \xd \vert \xt, Z} =0$, \eqref{eq:bound-ex1} is equivalent to a constraint on $R_{Y\sim U\vert X,Z}$ which in turn is related to $R_{Y\sim U\vert X,Z,D}$ and $R_{D\sim U\vert X,Z}$:
\begin{equation}\label{eq:bound-ex2}
  R^2_{Y\sim U \vert X,Z} \leq b_{UY}\frac{R^2_{Y\sim \xd_j\vert \xt, \xd_{-j},Z}}{1-R^2_{Y\sim \xd_j\vert \xt,\xd_{-j},Z}}, \qquad
  R_{Y\sim U \vert X,Z,D} = \frac{R_{Y\sim U \vert X,Z} - R_{Y\sim D\vert X,Z}\, R_{D\sim U \vert X,Z}}{\sqrt{1-R^{2}_{Y\sim D\vert X,Z}}\sqrt{1-R^{2}_{D \sim U \vert X,Z}}}.
\end{equation}
Hence, the user-specified constraint acts on both $R_{Y\sim U\vert X,Z,D}$ and $R_{D\sim U\vert X,Z}$.

To formulate the sensitivity model or rather the constraints of the optimization problem, we translate the interpretable, user-specified bounds into constraints on the sensitivity parameters as demonstrated in the example above. The last column of Table~\ref{tab:constraints} lists the associated optimization constraints for every user-specified bound. These inequalities and equalities and their derivations are deffered to Appendix~\ref{app:sens-model}.

\section{Adapted Grid-search Algorithm}\label{sec:algorithm}
Since users can specify any number and kind of bounds on the
sensitivity parameters, the resulting constraint set
$\Psi(\hat{\theta})$ is potentially very complex. It may be non-convex 
and can contain multiple non-linear equality- and inequality
constraints. For instance, when IV-related bounds are specified, the non-linear equation~\eqref{eq:reg-iv-connection} is always part of the optimization constraints. This only leaves few standard optimization algorithms to
compute a global solution of \eqref{eq:pir-minmax-plugin}. Since they often require careful choice of hyper-parameters and sometimes fail to solve the problem, we propose an adapted grid-search
algorithm that is more robust and tailored to our specific
optimization problem. First, we
characterize the set of potential minimizers and maximizers; then, we
explain how we can use monotonicity of equality constraints to reduce
the number of dimensions of the grid search algorithm.

\subsection{Characterization of the Solution}
According to Proposition~\ref{lem:objective}, the objective $\beta$ is
identified in terms of the sensitivity parameters $(\psi_1, \psi_2) =
(R_{D\sim U\vert X,Z}, R_{Y\sim U\vert X,Z,D})$. Due to its
monotonicity in $\psi_2$, the objective~$\beta$ attains its optimal
values on a subset of the boundary of $\Psi(\hat{\theta})$. In order
to show this, we may express the feasible set as
$\Psi(\hat{\theta}) = \bigcup_{\psi_1 \colon P_{\psi_1}\neq \emptyset} \{\psi_1\} \times P_{\psi_1}$, where $P_{\psi_1} := \{\psi_2 \colon (\psi_1, \psi_2) \in \Psi(\hat{\theta})\}$.
For every fixed $\psi_1$ such that $P_{\psi_1} \neq \emptyset$, the objective $\beta$ is a linear function in $\psi_2$. Hence, for any $\psi_2 \in P_{\psi_1}$, we obtain
\begin{equation*}
    \beta(\hat{\theta}, \psi_1, \min P_{\psi_1}) \,\lesseqgtr\, \beta(\hat{\theta}, \psi_1, \psi_2) \,\lesseqgtr\, \beta(\hat{\theta}, \psi_1, \max P_{\psi_1}),
\end{equation*}
where the direction of the inequalities depends on the sign of $\psi_1$. Therefore, $\psi$-values that minimize/maximize $\beta$ are contained in
\begin{equation*}
    \Psi^*(\hat{\theta}) := \bigcup_{\psi_1 \colon P_{\psi_1}\neq \emptyset} \{\psi_1\} \times  \{\min P_{\psi_1}, \max P_{\psi_1}\},
\end{equation*}
which is a subset of the boundary of $\Psi(\hat{\theta})$. Therefore, it suffices to discretize the \bt{one-dimensional} set $\Psi^*(\hat{\theta})$
instead of \bt{the two-dimensional} $\Psi(\hat{\theta})$ to find an approximate solution to the optimization problem.

\subsection{Transfering Bounds via Monotonicity}
Regular grid search algorithms are highly computationally expensive as their complexity grows exponentially in the number of unknown parameters. However, for our specific optimization problem, the computational costs can be significantly reduced by leveraging the monotonicity of many equality constraints. 

For instance, practitioners may specify the sensitivity bounds $R_{D\sim U\vert X,Z} \in [-0.5, 0.5]$ and $R^2_{Y\sim U\vert \xt, \xd_{-j},Z} \leq 2 R^2_{Y\sim \xd_j\vert \xt, \xd_{-j},Z}$,
which yield the optimization constraints, cf. Table~\ref{tab:constraints}:
\begin{gather*}
    R_{D\sim U\vert X,Z} \in [-0.5, 0.5],\qquad
    R^2_{Y\sim U\vert X,Z} \leq 2\, f^2_{Y\sim \xd_j\vert \xt, \xd_{-j},Z},\\[1ex]
    R_{Y\sim U\vert X,Z,D} = \frac{R_{Y\sim U\vert X,Z} - R_{Y\sim D \vert X,Z}\, R_{D\sim U\vert X,Z}}{\sqrt{1-R^2_{Y\sim D \vert X,Z}} \sqrt{1-R^2_{D\sim U\vert X,Z}}}.
\end{gather*}
A brute-force implementation of grid-search would create a three-dimensional grid of points -- one dimension per $R_{D\sim U\vert X,Z}, R_{Y\sim U\vert X,Z,D}$ and $R_{Y\sim U\vert X,Z}$ -- and only keep points that (approximately) fulfill the optimization constraints. (Partial $R$- and $f$-values that only depend on observed variables are estimated.) 
Instead, we can leverage the fact that, for any fixed $R_{D\sim U\vert X,Z}$-value, $R_{Y\sim U\vert X,Z,D}$ is a monotone function of $R_{Y\sim U\vert X,Z}$. Therefore, we only need to create a one-dimensional grid of $R_{D\sim U\vert X,Z}$ values, i.e.\ discretize $[-0.5, 0.5]$: For every such value, we can directly compute the smallest/largest value of $R_{Y\sim U\vert X,Z,D}$ within $\Psi^*(\hat{\theta})$ by plugging $R_{Y\sim U\vert X,Z} = -\sqrt{2}\,\lvert \fh_{Y\sim \xd_j\vert \xt, \xd_{-j},Z} \rvert$ into the equality constraint. Thus, we can reduce the computational complexity from cubic in the number of points per grid dimension to linear. The lower and upper end of the PIR are then simply estimated by evaluating $\beta$ over the discretization of $\Psi^*(\hat{\theta})$ and taking the smallest and largest value, respectively.

\begin{algorithm}
\setstretch{1.05}
    \SetKwComment{Comment}{/* }{ */}
    \SetSideCommentLeft
    \KwIn{
    bounds on $U \rightarrow D$: $B_{1,j}^l, B_{1,j}^u, B_{1,j}^l(\hth),B_{1,j}^u(\hth)$,\\ \phantom{Inputttt} bounds on $U\rightarrow Y$: $B_{2,j}^l, B_{2,j}^u, B_{3,j}^l(\hth),B_{3,j}^u(\hth), B_{4,j}^l(\hth),B_{4,j}^u(\hth)$,\\
    \phantom{Inputttt} number of grid points $N$}
        \caption{Grid approximation of $\Psi^*(\hth)$ for OLS bounds}
        \label{alg:grid-simple}
        \DontPrintSemicolon
        $\psi_{1}^l \gets \max_j\{B_{1,j}^l, B_{1,j}^l(\hth)\}$ \Comment*[r]{U -> D (1),(2)}
        $\psi_{1}^u \gets \min_j\{B_{1,j}^u, B_{1,j}^u(\hth)\}$\; 
        \For{$i \in \{1,\ldots,N\}$}{

            $\psi_{1,i} \gets \psi_{1}^l + (i-1)\, (\psi_{1}^u-\psi_{1}^l)/(N-1)$\;

            $\psi_{3,i}^l \gets \max_j\{B_{3,j}^l(\hth)\} \vee \max_{j}\{g_3(B_{4,j}^l(\hth), \psi_{1,i}, \hth)\}$ \Comment*[r]{U -> Y (2),(3)}
            $\psi_{3,i}^u \gets \min_j\{B_{3,j}^u(\hth)\} \wedge \min_{j}\{g_3(B_{4,j}^u(\hth), \psi_{1,i}, \hth)\}$\;

            $\psi_{2,i}^l \gets \max_j\{B_{2,j}^l\} \vee g_2(\psi_{3,i}^l, \psi_{1,i}, \hth)$ \Comment*[r]{U -> Y (1)}
            $\psi_{2,i}^u \gets \min_j\{B_{2,j}^u\} \wedge g_2(\psi_{3,i}^u, \psi_{1,i}, \hth)$\;
        }
        \KwRet{$\cup_{i=1}^N \{\psi_{1,i}\} \times \{\psi_{2,i}^l, \psi_{2,i}^u\}$}
\end{algorithm}
The principle of using monotonicity of the equality constraints to reduce the dimension of the grid applies beyond the above example. In Algorithm~\ref{alg:grid-simple}, we sketch the pseudocode for the grid approximation when only bounds on $U\rightarrow D$ and $U\rightarrow Y$ are specified. We use the notation $\psi_1 = R_{D\sim U\vert X,Z}, \psi_2 = R_{Y\sim U\vert X,Z,D}, \psi_3 = R_{Y\sim U\vert X,Z}, \psi_4 = R_{Y\sim U\vert \xt \xd_I, Z, D},$
where $\psi_4$ may be a vector with entries corresponding to different choices of $I$. The $j$-th direct lower and upper bounds on $\psi_k$ are denoted by $B_{k,j}^l$ and $B_{k,j}^u$, respectively. If the bounds arise from comparative user-specified constraints, we add the argument $\hth$. The functions $g_2$ and $g_3$ denote the equality constraints~\eqref{eq:reg-con-y22} and~\eqref{eq:reg-con-y32}, 
and, importantly, are monotonically increasing in $\psi_3$ and $\psi_4$, respectively. This algorithm has linear complexity. More details and the pseudocode for the general case with OLS- and IV-related bounds can be found in Appendix~\ref{app:algorithm}. This more general algorithm has cubic worst-case complexity; evaluating the objective over the discretized set is always linear.

\section{Sensitivity Intervals}\label{sec:inference}

By replacing the estimable (partial) $R$- and $R^2$-values and standard deviations with empirical estimators in the objective and constraints, we obtain an estimator of the PIR by solving the optimization problem \eqref{eq:pir-minmax-plugin}. Assessing the uncertainty of this estimator is a subtle task as even the notion of confidence is not immediately clear in partially identified problems. For a given $0 < \alpha < 1$, we call $S$ a $(1-\alpha)$-sensitivity interval of $\beta$ if
\[
  \PR_V\big(\beta(\theta(\PR_V), \psi) \in S\big) \geq
  1-\alpha \quad \text{for all} \quad \PR_V~\text{and}~\psi \in
  \Psi(\theta(\PR_V)),
\]
and $S$ a $(1-\alpha)$-sensitivity interval of the partially identified region if
\[
  \PR_V\big(\pir(\PR_V) \subseteq S\big) \geq 1-\alpha  \quad
  \text{for all} \quad \PR_V.
\]
Obviously, the second notion of confidence is stronger. For a more detailed discussion on confidence statements in partially identified problems including issues with asymptotic sensitivity intervals, the reader is referred to \citet{imbens}, \citet{stoye} and \citet{molinari2020}.

The construction of sensitivity intervals is an intricate problem. For instance, \citet{ch} derive a formula for the change in variance of the OLS estimator when $U$ could be included in the regression. Combining this result with the bias formula, the confidence interval can be expressed as
\begin{equation*}\label{eq:ch-confint}
  \left[\bh_{Y\sim D\vert X,Z} +\Bigg(\! - \rh_{Y\sim U\vert X,Z,D}\, \fh_{D\sim U\vert X,Z}  \pm \frac{q_{1-\alpha/2}}{\sqrt{n}}\sqrt{\frac{1-\rh^2_{Y\sim U\vert X,Z,D}}{1-\rh^2_{D\sim U\vert X,Z}}}\,
  \Bigg)
  \frac{\hat{\sigma}_{Y\sim X+Z+D}}{\hat{\sigma}_{D\sim X+Z}}\right],
\end{equation*}
where $q_{1-\alpha/2}$ is the $(1-\alpha/2)$-quantile of the standard normal distribution. Since $\hat{\psi}=(\rh_{Y\sim U\vert X,Z,D}, \rh_{D\sim U\vert X,Z})$ cannot be estimated from the observed data, it is a seemingly reasonable idea to minimize/maximize the confidence bounds over $\hat{\psi} \in \Psi(\hat{\theta})$.

Yet, a closer look at this heuristic approach shows that it achieves no obvious confidence guarantees in the frequentist sense. This is because the sensitivity parameter~$\hat{\psi}$ depends on the data and thus its value changes when another sample is drawn. Hence, correct coverage can only be guaranteed if $\hat{\psi}$ was almost certainly contained in~$\Psi(\hat{\theta})$, i.e.\ $\PR(\hat{\psi} \in \Psi(\hat{\theta})) = 1$.
Numerical simulations -- see Appendix~\ref{sec:simulation-study} -- confirm this intuitive argument; in particular, the heuristic interval has coverage 50\% in one setting and above $99\%$ in another, where the nominal coverage is $1 - \alpha = 90\%$ in both.

In this article, we use the bootstrap \citep{efron1994introduction} to construct sensitivity intervals: We compute a collection of estimators $\hathat{\theta}$ using resamples of the observable data, solve the plug-in optimization problem \eqref{eq:pir-minmax-plugin} with $\hat{\theta} = \hathat{\theta}$, and then use the bootstrap distribution to construct one-sided confidence intervals $[\beta_\text{min}^l, \infty)$ and $(-\infty, \beta_\text{max}^u]$ with level $(1 -\alpha/2)$ for the minimal and maximal values, respectively. Finally, we construct the $(1-\alpha)$ sensitivity interval as $[\beta_\text{min}^l, \beta_\text{max}^u]$. In our simulation studies, we compare three different bootstrap procedures: percentile, bias-corrected and accelerated (BCa), and basic (reverse percentile). We found that the former two exhibit coverage close to the nominal level -- even for small sample sizes. By contrast, the empirical coverage of basic bootstrap intervals is 5 to 10 percentage points too low.

\bt{Hence, the bootstrap works well in our simulation study, but the theory requires further investigation.} Previous works by \citet{shapiro91} and \citet{zhao} have studied the asymptotic behaviour of stochastic optimization problems and the use of bootstrap methods to construct confidence/sensitivity intervals. However, due to the complex (non-linear and non-convex) nature of the stochastic constraints, these theoretical guarantees do not cover the optimization problems in this article.

\section{Data Example}\label{sec:example}
We demonstrate the practicality of the proposed method using a
prominent study of the economic return of
schooling. The dataset was compiled by \citet{card} from the National
Longitudinal Survey of Young Men (NLSYM) and contains a sample
of 3010 young men at the age of 14 to 24 in 1966 who were followed up
until 1981. The data is available in the R-package \texttt{ivmodel} \citep{kang_ivmodel_2021}. Card uses several linear models to estimate the causal
effect of education, measured by years of schooling and denoted as
$D$, on the logarithm of earnings, denoted as $Y$. For
brevity, we only consider the most parsimonious model used by
Card which includes, as covariates for adjustment and denoted as $X$,
years of
labour force experience and its square, and indicators for living in the
southern USA, being black and living in a metropolitan area.

\citet{card} recognizes that many researchers are reluctant to
interpret the established positive correlation between education and
earnings as a positive causal effect due to the large number
of potential unmeasured confounders. In our analysis, we
will consider the possibility that an
unmeasured variable $U$, which represents the motivation of the young men, may influence both schooling and salary. To address this issue, Card suggests to
use an instrumental variable, namely the indicator for growing up in
proximity to a 4-year college; this is denoted as $Z$
below. Nonetheless, proximity to college may not be a valid instrumental
variable. For example, growing up near a college may be correlated with
a higher socioeconomic status, more career opportunities, or stronger
motivation. 

\subsection{Sensitivity Analysis with Different Sets of Constraints}\label{sec:sensana-example}
For the purpose of sensitivity analysis, we assume that being black
and living in the southern USA are not directly related with
motivation and treat them as $\xd$; the remaining covariates are
regarded as $\xt$ in the sensitivity analysis. We assume that this
partition satisfies the partial uncorrelatedness condition $R^2_{U \sim \xd \vert \xt, Z} =0$. In this example, we use
comparative bounds to express our beliefs about the effects of the
unmeasured confounder $U$ on $Y$ and $D$. 
We assume that motivation can explain at most 4 times as
much variation in the level of education as being black (denoted as
$\xd_j$) does
after accounting for all other observed covariates, and that
motivation can explain at most 5 times as much variation in
log-earnings as being black does after accounting for the other
covariates and education:
\begin{equation*}
    \text{(B1)}\,\,R^2_{D\sim U\vert \xt, \xd_{-j}, Z} \leq 4\, R^2_{D\sim \xd_j\vert \xt, \xd_{-j},Z},\quad
    \text{(B2)}\,\, R^2_{Y\sim U\vert \xt, \xd_{-j}, Z, D} \leq 5\, R^2_{Y\sim \xd_j\vert \xt, \xd_{-j}, Z, D}.
\end{equation*}
The bounds (B1) and (B2) address deviations from the identification
assumptions of a linear regression. Likewise, we can also specify
deviations from the instrumental variable assumptions. We suppose that
motivation $U$ can explain at most half as much variation in $Z$ (college
proximity) as $\xd_j$ (black) can after accounting for the
effects of $(\xt, \xd_{-j})$. Furthermore, we assume that college
proximity $Z$ can explain at most 10\% as much variance in
log-earnings after
excluding effects of $(X, U, D)$ as being black can explain
log-earnings after accounting for $(\xt, \xd_{-j}, Z, U,
D)$. These assumptions translate to
\begin{equation*}
    \text{(B3)}\,\, R^2_{Z\sim U\vert \xt, \xd_{-j}} \leq 0.5\, R^2_{Z\sim \xd_j\vert \xt, \xd_{-j}},\quad
    \text{(B4)}\,\, R^2_{Y\sim Z\vert X, U, D} \leq 0.1\, R^2_{Y\sim \xd_j\vert \xt, \xd_{-j}, Z, U,D}.
\end{equation*}
When the bound (B1) is not part of the constraints, we additionally require $R_{D\sim U\vert X,Z} \in [-0.98, 0.98]$. 
This ensures that $\lvert R_{D\sim U\vert X,Z}\rvert$ is bounded away from $1$ and that the PIR has finite length.

\begin{figure}[t!]
    \centering
    \includegraphics[scale=0.45]{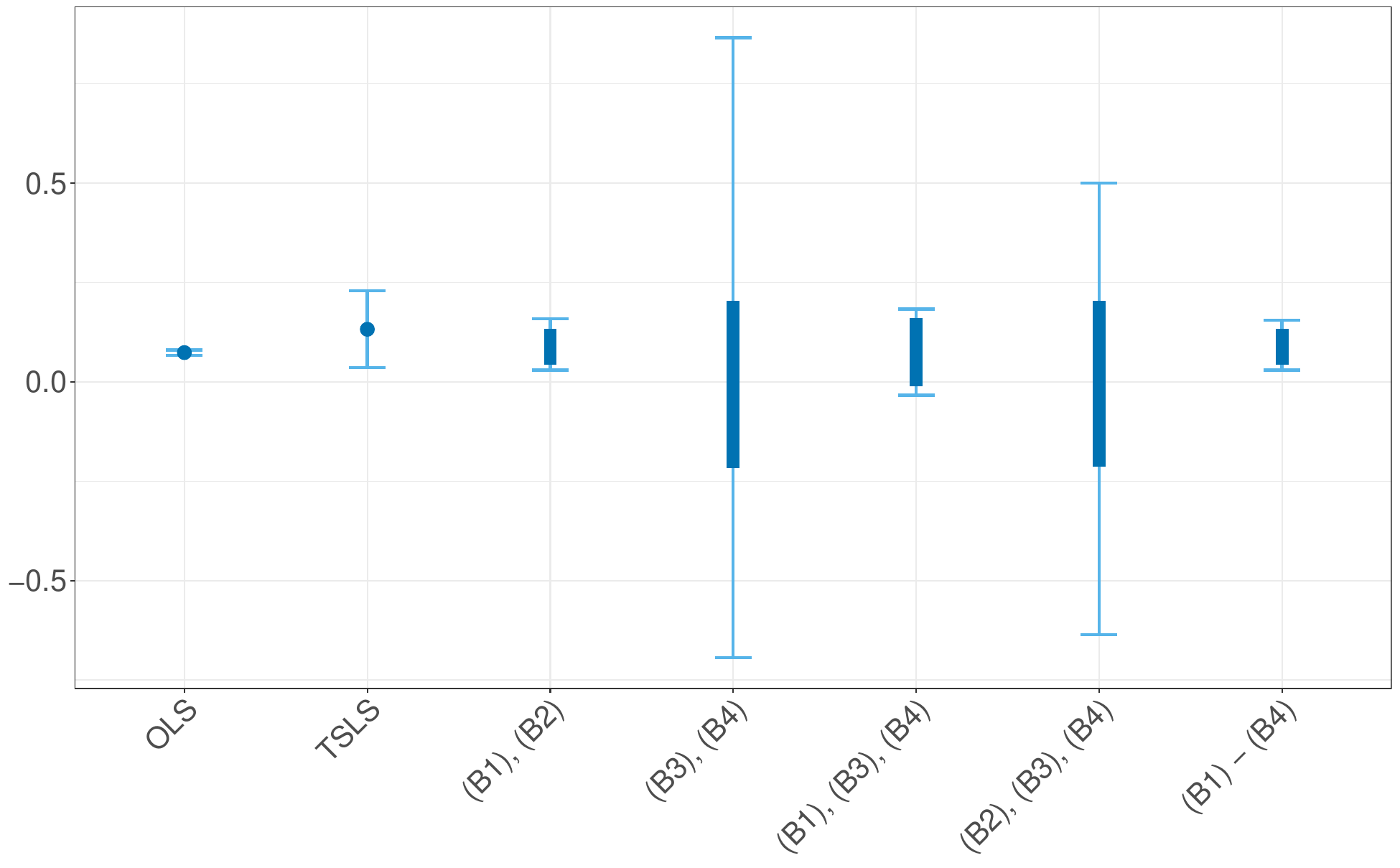}
    \caption{\label{fig:iv-sensint} Two estimation strategies and five sensitivity models for the causal effect $\beta$: Point estimates/estimates of the PIR (thick lines/points); 95\% confidence/sensitivity intervals (thin whiskers).}
  \end{figure}

Figure \ref{fig:iv-sensint} shows the OLS and TSLS estimates as well as their corresponding
95\% confidence intervals. The same plot shows the estimated partially identified regions and 95\% sensitivity intervals (obtained by BCa bootstrap) for five different sensitivity models that involve different combinations of the bounds (B1) to (B4). Both the OLS and the TSLS estimates suggest a statistically significant positive effect of education on earnings. 
 In the first sensitivity model in Figure \ref{fig:iv-sensint}, we
 relax the assumption of no unmeasured confounders, which would be
 required to interpret the OLS estimate causally, and
 assume that the effects of $U$ on $D$ and $Y$ are bounded by (B1)
 and (B2), respectively. In this case, the sensitivity interval remains positive. In other cases, the estimated partially identified regions
 and the sensitivity intervals become very wide whenever
 (B1) is not part of the constraints. Only specifying constraints on the IV sensitivity parameters, such as (B3) and (B4), is generally not sufficient to bound $\lvert R_{D\sim U\vert X,Z}\rvert$ away from 1. This is because the two regression and two IV sensitivity parameters only need to fulfill the equality constraint \eqref{eq:reg-iv-connection} leaving three degrees of freedom. In fact, if the loose bound $R_{D\sim U\vert X,Z} \in [-0.98, 0.98]$ was not imposed, the PIR would have an infinite length and the association between $D$ and $Y$ may be entirely driven by the unmeasured confounder. Furthermore, we notice that adding the bound (B2) to (B3) and (B4) reduces the length of the sensitivity interval but yields the same estimated PIR. Since (B2) is a comparative bound and acts on both regression sensitivity parameters, similarly to \eqref{eq:bound-ex2}, it depends on the specific dataset if (B2) actually bounds $\lvert R_{D\sim U\vert X,Z}\rvert$ away from 1. In the NLSYM data, (B2) does not accomplish this and we need to impose $R_{D\sim U\vert X,Z} \in [-0.98, 0.98]$ to keep the length of the PIR finite.
  
 Moreover, comparing
 the first and last sensitivity model in Figure~\ref{fig:iv-sensint},
 we notice that imposing the IV-related bounds (B3) and (B4) on top of
 (B1) and (B2) does not shorten the estimated PIR and sensitivity
 intervals. These findings suggest that the results of \citet{card}
 are more robust towards deviations from the OLS than from the IV
 assumptions.

\subsection{Contour Plots}
This subsection presents two graphical tools to further aid the interpretation of sensitivity analysis. The main idea is to plot the estimated lower or upper end of the PIR against the sensitivity parameters or the parameters in the comparative bounds. Contour lines in this plot allow practitioners to identify the magnitude of deviation from the identification assumptions that is needed to alter the conclusion of the study qualitatively. This idea dates back at least to \citet{imbens_2003}; our method below extends the proposal in \citet{ch}. 

\subsubsection{$b$-contour Plot}

For comparative bounds, the $b$-factor, such as $b_{U\! D} = 4$ in (B1), controls how tightly the corresponding
sensitivity parameter is constrained. Hence, it is important to gain a
practical understanding of $b$. The $b$-sensitivity contour plot
shows the estimated lower/upper end of the PIR on a grid of
$b$-factors.

\begin{figure}
    \centering
    \begin{minipage}[t]{0.45\textwidth}
        \centering
        \includegraphics[width=\textwidth]{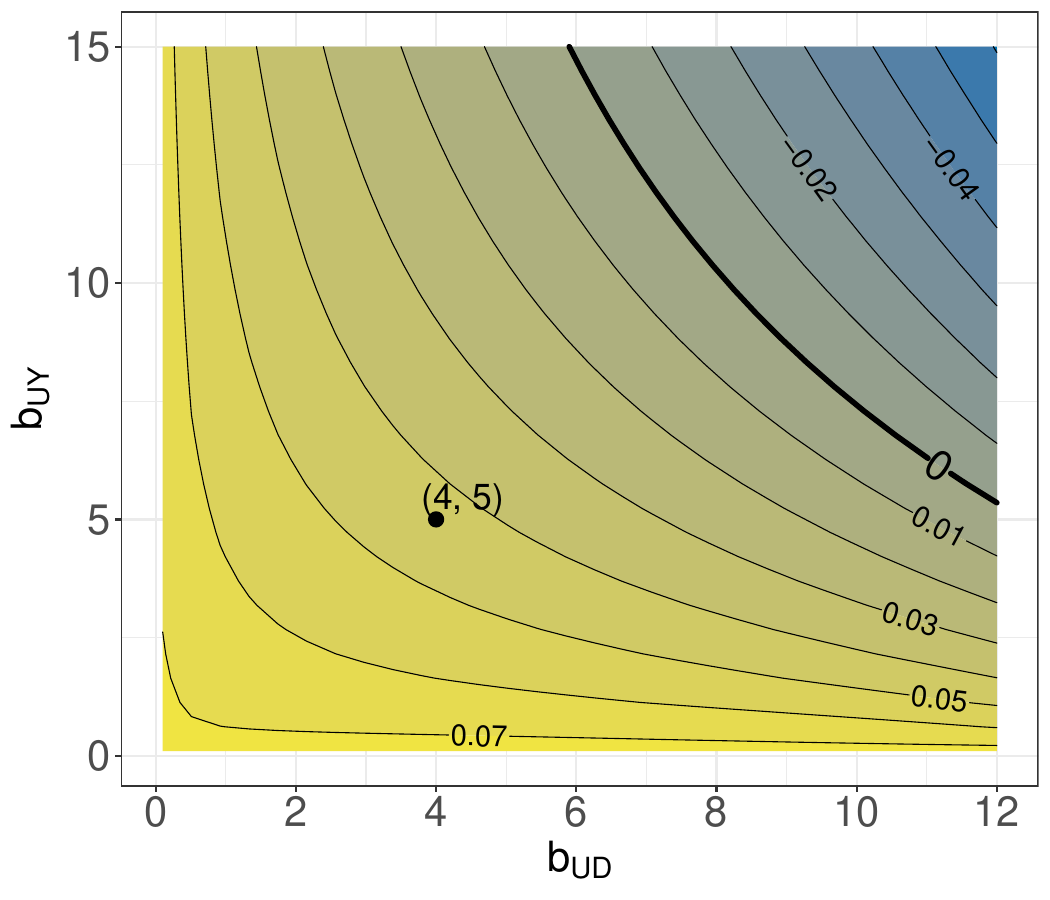}
        \caption{\label{fig:b-contour-reg}$b$-sensitivity contours for (B1), (B2).}
      \end{minipage}
    \hfill
    \begin{minipage}[t]{0.45\textwidth}
        \centering
        \includegraphics[width=\textwidth]{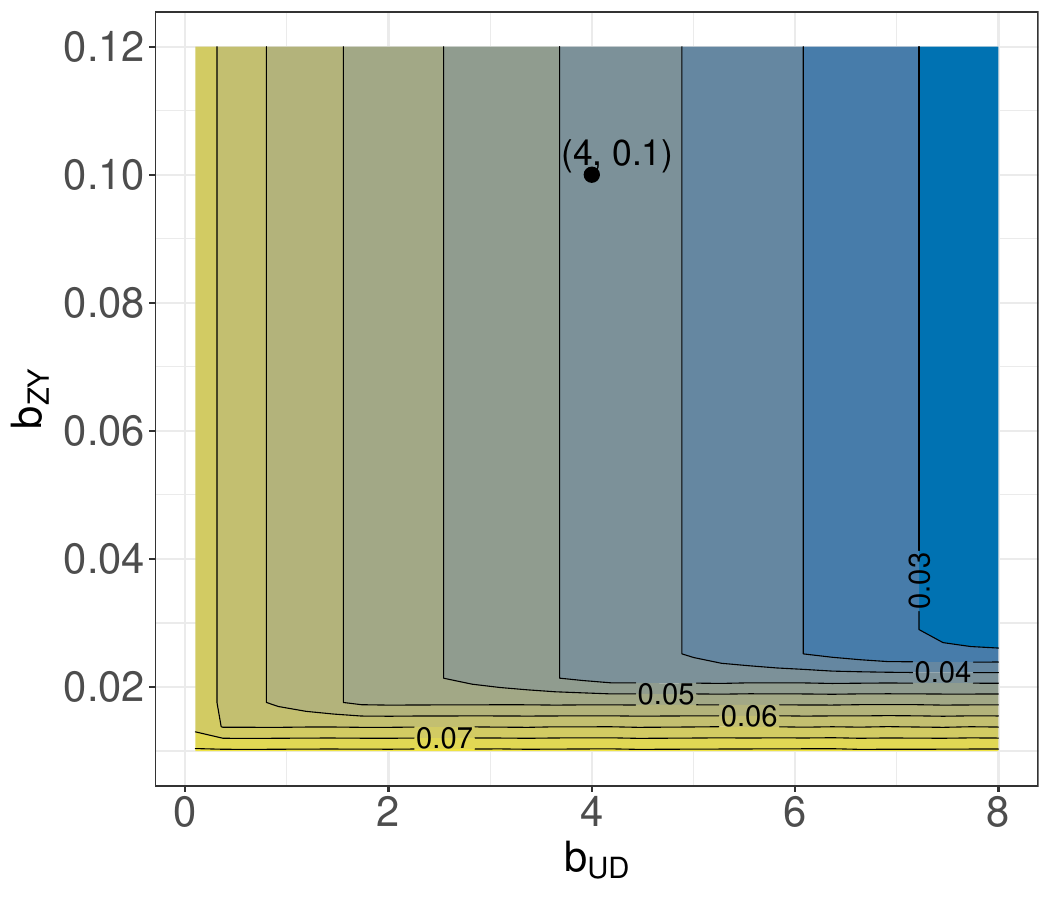}
        \caption{\label{fig:b-contour-iv}$b$-sensitivity contours for (B1)-(B4).}
      \end{minipage}
\end{figure}

In Figure \ref{fig:b-contour-reg}, we consider the sensitivity model
with the bounds (B1) and (B2) and investigate our choice $(b_{U\! D},
b_{UY}) = (4,5)$. The plot shows that the estimated lower end of
the PIR is still positive even for more conservative values such as
$(b_{U\! D}, b_{UY}) = (6, 10)$ or $(b_{U\! D}, b_{UY}) = (10,
5)$. Thus, a substantial deviation from the OLS-related assumptions is
needed to alter the sign of the estimate.

Figure \ref{fig:b-contour-iv} considers the sensitivity model using
the constraints (B1) to (B4) with changing $(b_{U\! D}, b_{ZY})$. This
plot confirms our observation in Section \ref{sec:sensana-example} that
imposing the IV-related
bounds (B3) and (B4) does not change the estimated lower bound when (B1) and (B2) are already present. In the terminology of constrained optimization, this means that the ``shadow
prices'' for (B3) and (B4) are small.

\subsubsection{$R$-contour Plot}\label{sec:r-contours}
\begin{figure}[b!]
    \centering
    \includegraphics[scale=0.55]{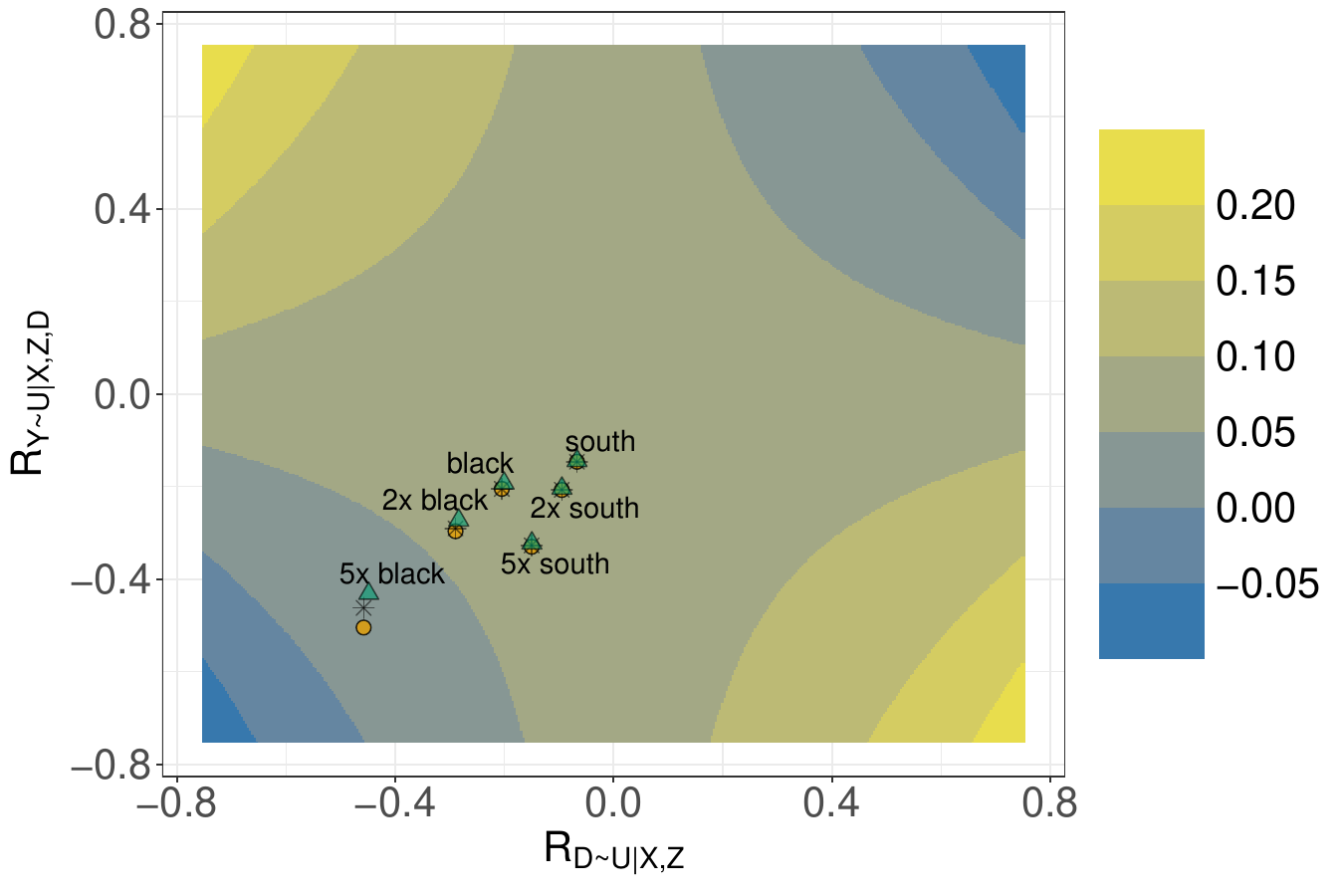}
    \caption{\label{fig:r-contour-reg}$R$-sensitivity contours for the lower end of the estimated PIR: Comparison points unconditional on $D$ (dots), comparison points conditional on $D$ (stars) and non-rigorous comparison points (triangles).}
\end{figure}

We may also directly plot the estimated lower/upper end of the PIR
against the sensitivity parameters $R_{D\sim U\vert X,Z}$ and
$R_{Y\sim U\vert X,Z,D}$. 
This idea has been adopted in several previous articles already
\citep{imbens_2003, veitch2020,ch}.

For such $R$-contour plots, the key challenge is to benchmark or calibrate the partial $R$-values. This was often done informally by adding (multiples of) observed partial $R$-values to provide context for plausible values of the sensitivity parameters. Here, we build upon \citet{ch}'s contour plots and introduce rigorous benchmarking points using the comparative bounds on $U\rightarrow D$ and $U\rightarrow Y$. Since there are two comparative bounds on $U\rightarrow Y$ \bt{(one that partials out $D$ and one that does not)}, we also obtain two comparative values for $R_{Y\sim U\vert X,Z,D}$: one that does not condition on $D$ and one that does. The formulas for the comparison points and their derivations can be found in Appendix~\ref{app:comparison-points}.


In Figure \ref{fig:r-contour-reg}, we depict the $R$-contour plot for the estimated lower end of the PIR. To provide context for the values of $R_{D\sim U\vert X,Z}$ and $R_{Y\sim U\vert X,Z,D}$, we use the observed indicators for being black and living in the southern USA and add three different kinds of benchmarks each: our two proposed rigorous comparison points (unconditional and conditional on $D$) and an informal comparison point following \citeauthor{imbens_2003}' proposal. We notice that, even if the unmeasured confounder was five times as strong as being black in terms of its capability of explaining the variation of $D$ and $Y$, the estimate would still be positive. This holds true regardless of whether the comparison for the effect of $U$ on $Y$ was made conditional or unconditional on $D$. Moreover, Figure \ref{fig:r-contour-reg} shows that the informal comparison point yields a similar result. In our experience, the difference between the methods usually does not change the qualitative conclusion of the sensitivity analysis.



\section{Discussion}\label{sec:discussion}

The methodological developments in this article are underpinned by two insights. First, sensitivity analysis (or more generally, any one-dimensional partially identified problem) may be viewed as a constrained stochastic program and we can leverage methods developed in stochastic optimization. Second, the partial correlations and the $R^2$-calculus provide a parameterization of the linear causal effect and a systematic approach to specify interpretable sensitivity models.

Partial identification has attracted considerable attention in econometrics and causal inference since the work of \citet{manski1990} and \citet{balke}; for instance \citet{manski2003,imbens,chernozhukov2007,aronow2013,miratrix2018,molinari2020}. Existing methods typically assume a closed-form solution to the stochastic program \eqref{eq:pir-minmax} (the lower/upper end of the PIR) and that the plug-in estimator is asymptotically normal. As such results are only known for relatively simple models, these methods only have limited utility in practice. The constrained optimization perspective of partial identification is only beginning to get embraced in the literature \citep{kaido19_confid_inter_projec_partial_ident_param,hu,guo_partial_2022,pmlr-v213-padh23a}.

Our article further shows the need for a more complete, asymptotic
theory of the optimal value of a general stochastic program. This may
allow us to extend the methodology~developed here to sensitivity
models with high- or infinite-dimensional parameters. In
particular, \bt{theoretical guarantees for the convergence of the plug-in estimator as well as results for its bootstrap distribution are required.}
\bt{Moreover, extensions of our proposed methods to multiple (potentially invalid) instruments as well as a deeper investigation of settings with multiple unmeasured confounders are interesting directions for future research.}

The $R^2$-values, $R$-values and generalizations thereof are popular for the calibration of sensitivity analysis. Recently, they have been used for linear models with multiple treatments \citep{zheng2021}, mediation analysis
\citep{zhang}, missing values \citep{colnet2022published} and models with
factor-structured outcomes \citep{zheng_sensitivity_2023}. In these works, certain
algebraic relationships about $R^2$-values and benchmarking techniques
such as contour plots and robustness values are frequently used. Thus, the methodology developed in this article may also benefit the calibration of other sensitivity models. Moreover, our proof of the $R^2$-calculus in general Hilbert spaces -- see Appendix~\ref{sec:r2-calculus-hilbert} -- suggests that it may be useful in nonlinear models, too. See \citet{chernozhukov_long_2022} for
related work in partially linear and semiparametric models using the
Riesz-Frechet representation of certain causal parameters.



\newpage

\appendix

\section{R\textsuperscript{2}-Calculus}\label{app:r2-calculus}

This section states the rules of the $R^2$-calculus and related results as used in the main text. We employ the notation of Section \ref{sec:r2-value}, concentrate on results for the population versions of the (partial) $R$- and $R^2$-values and assume without loss of generality that random variables are centred. The proofs along a more general, Hilbert space version of the $R^2$-calculus can be found in the next section.

The following result justifies calling $R^2_{Y \sim X \mid Z}$ a \emph{partial} $R^2$-value and shows that the definitions of $R^2$- and $R$-value are consistent. It follows from the Gram-Schmidt orthogonalization.

\begin{lemma}\label{lem:r2-corr}
  In the setting of Definition \ref{def:r2-value}, $R^{2}_{Y\sim X\vert Z} = R^{2}_{Y^{\perp Z} \sim X^{\perp Z}}$. Moreover, if $X$ is one-dimensional, then $R^{2}_{Y \sim
    X \vert Z} = (R_{Y \sim
    X \vert Z})^2
  $.
\end{lemma}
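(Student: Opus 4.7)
The plan is to reduce both identities to the Frisch–Waugh–Lovell style statement that residualizing by $Z$ commutes with computing further residuals. Concretely, I first claim the two ``mass conservation'' identities
\begin{equation*}
    \sigma^2_{Y\sim Z} = \sigma^2_{Y^{\perp Z}}, \qquad \sigma^2_{Y\sim X+Z} = \sigma^2_{Y^{\perp Z}\sim X^{\perp Z}}.
\end{equation*}
The first is immediate from the definition of $Y^{\perp Z}$ as the $L^2$-projection residual. For the second, I would decompose the linear span $\Span(X, Z) = \Span(Z) \oplus \Span(X^{\perp Z})$ (this is the Gram–Schmidt step mentioned in the hint) and project $Y$ onto each orthogonal summand in turn: the projection onto $\Span(Z)$ recovers $Y - Y^{\perp Z}$, so the further projection of $Y^{\perp Z}$ onto $\Span(X^{\perp Z})$ yields the residual of $Y$ on $(X,Z)$. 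Equality of residual variances follows.

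Given these two identities, the first claim is pure algebra. Starting from the definition
\begin{equation*}
    R^2_{Y\sim X\vert Z} = \frac{R^2_{Y\sim X+Z} - R^2_{Y\sim Z}}{1 - R^2_{Y\sim Z}},
\end{equation*}
I substitute $R^2_{Y\sim X+Z} = 1 - \sigma^2_{Y\sim X+Z}/\sigma^2_Y$ and $R^2_{Y\sim Z} = 1 - \sigma^2_{Y\sim Z}/\sigma^2_Y$. The $\sigma^2_Y$ factors cancel, leaving
\begin{equation*}
    R^2_{Y\sim X\vert Z} = 1 - \frac{\sigma^2_{Y\sim X+Z}}{\sigma^2_{Y\sim Z}} = 1 - \frac{\sigma^2_{Y^{\perp Z}\sim X^{\perp Z}}}{\sigma^2_{Y^{\perp Z}}} = R^2_{Y^{\perp Z}\sim X^{\perp Z}},
\end{equation*}
where the middle equality uses the two identities above.

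For the second claim, I use the standard fact that for one-dimensional regressors the (marginal) $R^2$ of $A$ on $B$ equals $\corr(A,B)^2$: indeed $\sigma^2_{A\sim B} = \sigma^2_A (1 - \corr(A,B)^2)$ by an elementary calculation with the univariate OLS coefficient $\cov(A,B)/\var(B)$. Applying this with $A = Y^{\perp Z}$ and $B = X^{\perp Z}$ and combining with the first part gives
\begin{equation*}
    R^2_{Y\sim X\vert Z} = R^2_{Y^{\perp Z}\sim X^{\perp Z}} = \corr(Y^{\perp Z}, X^{\perp Z})^2 = (R_{Y\sim X\vert Z})^2.
\end{equation*}

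The only non-routine step is the residual-variance identity $\sigma^2_{Y\sim X+Z} = \sigma^2_{Y^{\perp Z}\sim X^{\perp Z}}$; everything else is direct substitution. I expect the main care needed is to justify the orthogonal-sum decomposition $\Span(X,Z) = \Span(Z) \oplus \Span(X^{\perp Z})$ without implicitly assuming invertibility beyond what positive definiteness of the covariance matrix of $(Y,X,Z)$ already gives us, and to phrase the argument at the level of $L^2$ projections so that no distributional assumption beyond finite second moments is used.
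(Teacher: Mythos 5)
Your proof is correct and follows essentially the same route as the paper's (which proves the Hilbert-space version in the supplement): the key step in both is the iterated-projection identity $Y^{\perp X,Z} = (Y^{\perp Z})^{\perp X^{\perp Z}}$ obtained from the orthogonal decomposition $\Span(X,Z)=\Span(Z)\oplus\Span(X^{\perp Z})$, followed by the same cancellation of $\sigma^2_Y$ in the defining ratio; your univariate computation $\sigma^2_{A\sim B}=\sigma^2_A(1-\corr(A,B)^2)$ for the second claim is the coordinate form of the paper's explicit one-dimensional projection formula combined with the Pythagorean identity.
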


Partial $R$- and $R^2$-values that involve overlapping sets of random variables need to fulfill certain algebraic relationships. These can be used to identify the causal effect $\beta$ in terms of sensitivity parameters and derive interpretable constraints. The next Proposition collects several useful algebraic rules we harness in this work.

\begin{proposition}[$R^2$-calculus]\label{prop:r2calc}
    In the setting of Definition \ref{def:r2-value}, assume that $W$ is another random vector such that the covariance matrix of $(Y,X,Z,W)$ is positive definite. Then, the following algebraic rules hold:
  \begin{itemize}
  \addtolength{\leftskip}{3ex}
  \item[{\normalfont[i]}] \emph{Orthogonality}: if $\corr(Y,X_i) = 0$ for all components $i$ of $X$, then $R^2_{Y\sim X} = 0$;
  \item[{\normalfont[ii]}]\emph{Orthogonal additivity}: if $R^2_{X_i \sim W} = 0$ for all components $i$ of $X$, then
  \begin{equation*}
      R^{2}_{Y\sim X+W} = R^{2}_{Y\sim X} + R^{2}_{Y\sim
      W};
  \end{equation*}
  \item[{\normalfont[iii]}] \emph{Decomposition of unexplained variance}:
    \begin{equation*}
      1-R^{2}_{Y\sim X + W} = (1-R^{2}_{Y\sim X})(1-R^{2}_{Y\sim W \vert X});
    \end{equation*}
  \item[{\normalfont[iv]}] \emph{Recursion of partial correlation}: if $X$ and $W$ are
    one-dimensional, then
    \begin{equation*}
      R_{Y\sim X\vert W} = \frac{R_{Y\sim X} - R_{Y\sim W}R_{X\sim
          W}}{\sqrt{1-R_{Y\sim W}^{2}}\sqrt{1-R^{2}_{X \sim W }}};
    \end{equation*}
  \item[{\normalfont[v]}] \emph{Reduction of partial correlation}: if $X$ is
    one-dimensional and $R^2_{Y\sim W}=0$, then
    \begin{equation*}
      R_{Y\sim X\vert W} = \frac{R_{Y\sim X}}{\sqrt{1-R^{2}_{X\sim W}}};
    \end{equation*}
  \item[{\normalfont[vi]}] \emph{Three-variable identity}: if both $X$ and $W$ are
    one-dimensional, then
    \begin{equation*}
      f_{Y\sim X \vert W}\sqrt{1-R^{2}_{Y\sim W\vert X}} = f_{Y\sim
        X}\sqrt{1-R^{2}_{X\sim W}}-R_{Y\sim W\vert
        X}R_{X\sim W}.
    \end{equation*}
  \end{itemize}
  All of the relationships above also hold when $Z$ is partialed out, i.e.\ if ``\,$|Z$'' is appended to the subscripts of all $R$-, $R^2$-, and $f$-values and $Z$ is partialed out in the correlations.
\end{proposition}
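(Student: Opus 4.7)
The plan is to prove all six identities by viewing random variables as elements of the Hilbert space $L^2(\PR)$ with inner product $\langle A,B\rangle=\E[AB]$ (variables are centred without loss of generality), so that residuals correspond to orthogonal projections. In this language $\sigma^2_{Y\sim X}$ is the squared distance from $Y$ to $\Span(X)$, and the marginal $R^2$-value is the squared cosine of the angle between $Y$ and that subspace. Positive definiteness of the joint covariance matrix guarantees that every denominator appearing below is nonzero.

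For [i], the OLS coefficient vector $\var(X)^{-1}\cov(X,Y)$ vanishes whenever each $\cov(Y,X_i)=0$, so $Y^{\perp X}=Y$ and hence $R^2_{Y\sim X}=0$. For [ii], the assumption $R^2_{X_i\sim W}=0$ for all $i$ gives $\cov(X_i,W_j)=0$ for every $i,j$, hence $\Span(X)\perp\Span(W)$; the projection onto $\Span(X+W)=\Span(X)\oplus\Span(W)$ then splits as the sum of the two individual projections and the explained variances add. Identity [iii] is a direct rearrangement of the definition $R^2_{Y\sim W\vert X}=(R^2_{Y\sim X+W}-R^2_{Y\sim X})/(1-R^2_{Y\sim X})$.

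For [iv], I would expand $Y^{\perp W}=Y-(\cov(Y,W)/\var(W))W$ and the analogous expression for $X^{\perp W}$, obtaining $\cov(Y^{\perp W},X^{\perp W})=\cov(Y,X)-\cov(Y,W)\cov(X,W)/\var(W)$; dividing by $\sigma_Y\sigma_X$ and using $\sigma_{Y^{\perp W}}/\sigma_Y=\sqrt{1-R^2_{Y\sim W}}$ together with its counterpart for $X$ produces the formula. Item [v] then follows as a special case: $R^2_{Y\sim W}=0$ forces $\cov(Y,W_i)=0$ for every $i$ by positive definiteness of $\var(W)$, so $Y^{\perp W}=Y$, and the same residual computation with a multi-dimensional $W$ delivers the reduced form. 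Item [vi] is afterwards purely algebraic: substituting [iv] for $R_{Y\sim X\vert W}$ on the left-hand side and for $R_{Y\sim W\vert X}$ on the right, and invoking the symmetry $(1-R^2_{Y\sim X\vert W})(1-R^2_{Y\sim W})=(1-R^2_{Y\sim W\vert X})(1-R^2_{Y\sim X})$ obtained from two applications of [iii], both sides collapse to $(R_{Y\sim X}-R_{Y\sim W}R_{X\sim W})/(\sqrt{1-R^2_{Y\sim X}}\sqrt{1-R^2_{X\sim W}})$.

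Finally, the conditional versions with ``$\,\vert Z$'' appended throughout follow by applying the six unconditional statements inside the sub-Hilbert space $\Span(Z)^\perp$; Lemma~\ref{lem:r2-corr} identifies the partial $R$- and $R^2$-values with ordinary $R$- and $R^2$-values of the $Z$-residualized variables, so no new argument is needed. The main obstacle is less conceptual than organizational: one must keep careful track of which variables are being residualized at each step and repeatedly invoke positive definiteness of the appropriate covariance submatrices to justify the divisions and square roots, but once projections replace regressions each identity reduces to a short calculation.
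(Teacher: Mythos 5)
Your proposal is correct and follows essentially the same route as the paper, which also proves these identities via orthogonal projections in a Hilbert space with the covariance inner product (Supplementary Material, Proposition \ref{prop:HS-r2calc}) and then specializes to centred random variables. The only noteworthy deviation is rule [v], where you use a direct residual computation exploiting $\cov(Y,W)=0$ for possibly multi-dimensional $W$, whereas the paper inducts over an orthonormal basis of $W$ using [iv] and [iii]; your shortcut is valid and slightly more economical.
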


\begin{remark}
  Rule [vi] may appear unintuitive at first. To see how this identity may come up, consider three random variables $Y$, $X$ and $W$. There are in total three marginal $R$-values, $R_{Y\sim X}$, $R_{Y\sim W}$ and $R_{X\sim W}$, and three partial $R$-values, $R_{Y\sim X\vert W}$, $R_{Y\sim W\vert X}$ and $R_{X\sim W\vert Y}$. Rule [iv] shows that the partial $R$-values can be determined by all the   marginal values. In other words, there are only three degrees of freedom among the six $R$-values. This implies that there must be an equality constraint relating $R_{Y\sim X}$, $R_{X \sim W}$, $R_{Y\sim X\vert W}$, and $R_{Y\sim W\vert X}$.
\end{remark}

In general, a conditional independence statement such as $Y \indep W \,\vert\, X$ does not imply that $Y$ and $W$ are \emph{partially} uncorrelated given $X$. However, if the conditional expectation of $Y$ and $W$ is affine linear in $X$, $\corr(Y^{\perp X}, W^{\perp X})=0$ indeed holds true \citep[Thm.\ 1]{baba_partial_2004}. Hence, if we assume that $(Y,X,Z,D,U)$ follow a linear structural equation model, we could replace the assumptions in rules [i], [ii] and [v] with corresponding (conditional) independence statements. For more details on the relationship of partial correlation, conditional correlation and conditional independence, we refer to \citet{baba_partial_2004}.



\section{Hilbert Space R\textsuperscript{2}-Calculus and Proofs}
\label{sec:r2-calculus-hilbert}

The algebraic rules of the $R^2$-calculus -- both the population
version in Proposition \ref{prop:r2calc} and its sample counterpart -- are not specific to linear models. In fact, all
relationships fundamentally stem from the geometry of projections in
Hilbert spaces. For this reason, the definitions of $R^2$-
and $R$-values can be generalized and the corresponding algebraic rules be proven in more generality. 

This section is organized as follows. First, we recall some results on
Hilbert space theory
\citep[sec. 26-29]{halmos2000} and define generalized (partial) $R^2$-
and $R$-values. Then, we prove Hilbert space generalizations to Lemma~\ref{lem:r2-corr} and Proposition~\ref{prop:r2calc}. Finally, in Section~\ref{app:r2-linear-models}, we explain how the $R^2$-calculus for linear
models directly follows from these results and provide more
details on the assumptions and notation involved.

\subsection{Hilbert Space R\textsuperscript{2}-value}
Let $(\hc,\langle\cdot,\cdot\rangle)$ be a Hilbert space over the
field $\mathbb{K}$ of real or complex numbers; denote its associated
norm as $\lVert \cdot \rVert$ and let $\xc, \yc, \zc\subseteq \hc$ be
closed linear subspaces. The Minkowski sum of $\yc$ and $\xc$ is given by
$\xc+\yc :=\{x+y\colon x\in \xc, y\in \yc\}$. For $x \in \xc$ and
$y\in \yc$, we write $x \perp y$, if $\langle x, y \rangle =0$, $x
\perp \yc$, if $x\perp y$ for all $y \in \yc$, and $\xc\perp \yc$, if
$x \perp \yc$ for all $x \in \xc$. For every element $h \in \hc$,
there are unique $x \in \xc$ and $x^\perp \in \hc$ such that $x \perp
x^\perp$ and $h = x + x^\perp$.

\begin{definition}\label{def:projection}
The projection on $\xc$ is the operator $P_\xc \colon \hc \to \xc$ defined by the assignment ${h = x + x^\perp \mapsto x}$. The projection off $\xc$ is
the operator $Q_\xc \colon \hc \to \hc$ defined by ${h = x + x^\perp \mapsto x^\perp}$.
\end{definition}

Clearly, the projection on and off $\xc$ add up to the identity
operator, i.e.\ $P_\xc + Q_\xc = \Id$. Furthermore, we introduce the
notations $y^{\perp \xc} := Q_\xc\, y$ and $\yc^{\perp \xc} :=
\{y^{\perp \xc} \colon {y\in\yc}\}$. The space $\yc^{\perp \xc}$ is a
closed linear subspace of $\hc$; thus, the projections $P_{\yc^{\perp
    \xc}}$ and $Q_{\yc^{\perp \xc}}$ are well-defined. They can be
used to define conditional orthogonality: $\yc \perp \xc \,\vert\, \zc
\Leftrightarrow \yc^{\perp \zc} \perp \xc^{\perp \zc}$.

\begin{lemma}\label{lem:proj-HS}
\phantom{hello}\\[-5ex]
\begin{itemize}
\addtolength{\leftskip}{3ex}
    \item[{\normalfont(i)}] $P_\xc$ and $Q_\xc$ are linear, self-adjoint, and idempotent operators.
    \item[{\normalfont(ii)}] If $\xc\perp \yc$, $P_{\xc+\yc} = P_\xc + P_\yc$ and $Q_{\xc+\yc} = Q_\xc\,Q_\yc$.
    \item[{\normalfont(iii)}] $P_{\xc+\yc} = P_{\xc} + P_{\yc^{\perp \xc}}$ and $Q_{\xc+\yc} = Q_\xc\,Q_{\yc^{\perp \xc}} = Q_{\yc^{\perp \xc}}\, Q_\xc$.
    \item[{\normalfont(iv)}] If $h_1, h_2 \in \hc$ and $h_1 \perp h_2$, $\lVert h_1 + h_2 \rVert^2 = \lVert h_1\rVert^2 + \lVert h_2\rVert^2$.
\end{itemize}
\end{lemma}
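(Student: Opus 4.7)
The statement is a collection of standard Hilbert-space facts, so the plan is to present them in the order given, exploiting earlier parts to prove later parts. All four items rest on the unique decomposition $h = x + x^\perp$ with $x \in \xc$, $x^\perp \perp \xc$, which is invoked implicitly in the definition of $P_\xc$ and $Q_\xc$.

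For (i), I would start with linearity: given $h_1 = x_1 + x_1^\perp$ and $h_2 = x_2 + x_2^\perp$ and scalars $\alpha,\beta$, the element $(\alpha x_1 + \beta x_2) + (\alpha x_1^\perp + \beta x_2^\perp)$ is a valid orthogonal decomposition of $\alpha h_1 + \beta h_2$, so by uniqueness it is the one used to define $P_\xc(\alpha h_1+\beta h_2)$. Idempotency is immediate because every $x \in \xc$ has decomposition $x + 0$. Self-adjointness follows from expanding $\langle P_\xc h_1, h_2\rangle = \langle x_1, x_2 + x_2^\perp\rangle = \langle x_1,x_2\rangle$ and comparing with $\langle h_1, P_\xc h_2\rangle$; the same argument works for $Q_\xc$, or one uses $Q_\xc = \Id - P_\xc$.

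For (iv) I would prove Pythagoras next, since it is the shortest item and is used implicitly elsewhere. Just expand $\lVert h_1+h_2\rVert^2 = \langle h_1+h_2, h_1+h_2\rangle$ and use $\langle h_1,h_2\rangle = 0$ (taking real parts in the complex case). For (ii), assume $\xc \perp \yc$; then for $h \in \hc$ write $h = P_\xc h + P_\yc h + r$, where $r := h - P_\xc h - P_\yc h$. I would check that $r \perp \xc$ and $r \perp \yc$ by using $h - P_\xc h \perp \xc$ and $P_\yc h \in \yc \perp \xc$ (and symmetrically for $\yc$). Hence $r \perp \xc+\yc$, while $P_\xc h + P_\yc h \in \xc + \yc$, so by uniqueness of the decomposition with respect to the closed subspace $\xc+\yc$ we obtain $P_{\xc+\yc} h = P_\xc h + P_\yc h$ and $Q_{\xc+\yc} h = r = Q_\xc(Q_\yc h)$.

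For (iii), the main observation is that $\xc \perp \yc^{\perp\xc}$ by definition, and $\xc + \yc = \xc + \yc^{\perp\xc}$ because every $y \in \yc$ can be written as $P_\xc y + y^{\perp\xc}$ and vice versa. Applying (ii) to the orthogonal pair $(\xc, \yc^{\perp\xc})$ gives the first equality $P_{\xc+\yc} = P_\xc + P_{\yc^{\perp\xc}}$ and the factorisation $Q_{\xc+\yc} = Q_\xc Q_{\yc^{\perp\xc}}$. The second factorisation is obtained by noticing that $Q_{\xc+\yc}$ is self-adjoint by (i), so
\begin{equation*}
Q_\xc Q_{\yc^{\perp\xc}} = Q_{\xc+\yc} = (Q_{\xc+\yc})^* = (Q_\xc Q_{\yc^{\perp\xc}})^* = Q_{\yc^{\perp\xc}}^* Q_\xc^* = Q_{\yc^{\perp\xc}} Q_\xc.
\end{equation*}
The only step requiring mild care is verifying that $\xc + \yc$ is closed (or at least that the orthogonal decomposition theorem applies to it) so that $P_{\xc+\yc}$ is well-defined; for finite-dimensional or complemented closed subspaces this is automatic, and in the paper's setting the subspaces of interest will be finite-dimensional, so I do not expect a real obstacle here.
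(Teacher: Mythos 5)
Your proof is correct and follows essentially the same route as the paper: the key step for (iii) -- rewriting $\xc+\yc$ as the orthogonal sum $\xc+\yc^{\perp\xc}$ and then invoking (ii) -- is exactly what the paper does. The only difference is that the paper cites \citet{halmos2000} for (i), (iv) and the additivity of projections in (ii), whereas you supply the standard elementary arguments yourself (and your closedness caveat is harmless here, since the subspaces in the application are spanned by finitely many variables).
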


\begin{proof}\phantom{hello}\\[-5ex]
\begin{itemize}
\addtolength{\leftskip}{3ex}
\item[{\normalfont(i)}] See \citet[sec. 26, Thm. 1]{halmos2000}.
\item[{\normalfont(ii)}] See \citet[sec. 28, Thm. 2]{halmos2000} for the proof of $P_{\xc+\yc} = P_\xc + P_\yc$. According to \citet[sec. 29, Thm. 1]{halmos2000}, $P_\xc P_\yc = 0$ holds due to $\xc \perp \yc$. Hence, the second statement directly follows:
$Q_{\xc+\yc} = \Id - P_\xc - P_\yc = (\Id - P_\xc) (\Id - P_\yc) = Q_\xc \, Q_\yc$.
\item[{\normalfont(iii)}] We rewrite the direct sum $\xc + \yc$ as follows
\begin{align*}
    \xc + \yc &= \{x+y \colon x\in \xc, y \in \yc\} = \{x + P_\xc\, y + Q_\xc\, y \colon x\in \xc, y \in \yc\}\\
    &= \{x + Q_\xc\, y \colon x\in \xc, y \in \yc\} = \xc + \yc^{\perp \xc}.
\end{align*}
Since $\xc$ and $\yc^{\perp \xc}$ are orthogonal by definition, the result directly follows from (ii).
\item[{\normalfont(iv)}] See \citet[sec. 4, Thm. 3]{halmos2000}. \hfill\qedhere
\end{itemize}
\end{proof}

Any one-dimensional linear subspace $\xc$ can be expressed as $\xc = \{\lambda\, x\colon \lambda \in \mathbb{K}\}$, where $x$ is an arbitrary element in $\xc \setminus \{0\}$. Hence, we can identify a one-dimensional subspace with any non-zero element contained in it.

\begin{definition}[Hilbert space $R^2$- and $R$-value]\label{def:HS-r2}
Let $\xc, \yc, \zc \subseteq \hc$ be closed linear subspaces. Assume $\yc$ is one-dimensional, let $y \in \yc\setminus \{0\}$ and suppose $\lVert y^{\perp \zc}\rVert^2 > 0$. The (marginal) $R^2$-value of $\yc$ on $\xc$, the partial $R^2$-value of $\yc$ on $\xc$ given $\zc$ and the partial $f^2$-value of $\yc$ on $\xc$ given $\zc$ are defined as
\begin{equation*}
    R^2_{\yc \sim \xc} := 1 - \frac{\lVert y^{\perp \xc}\rVert^2}{\lVert y\rVert^2},\qquad
    R^2_{\yc \sim \xc\vert \zc} := \frac{R^2_{\yc\sim \xc + \zc} - R^2_{\yc\sim \zc}}{1- R^2_{\yc\sim \zc}},\qquad
    f^2_{\yc \sim \xc\vert \zc} := \frac{R^2_{\yc \sim \xc\vert \zc}}{1-R^2_{\yc \sim \xc\vert \zc}},
\end{equation*}
respectively.
If $\xc$ is one-dimensional, $x\in \xc\setminus\{0\}$ and $\lVert x^{\perp \zc}\rVert^2 > 0$, the partial $R$- and $f$-value are given by
\begin{equation*}
    R_{\yc \sim \xc \vert \zc} := \frac{\langle y^{\perp \zc}, x^{\perp \zc}\rangle}{\lVert y^{\perp \zc} \rVert \, \lVert x^{\perp \zc} \rVert}, \qquad
    f_{\yc \sim \xc \vert \zc} := \frac{R_{\yc \sim \xc \vert \zc}}{\sqrt{1-R_{\yc \sim \xc \vert \zc}^2}}.
\end{equation*}
\end{definition}
The choice of the non-zero elements $y$ and $x$ does not change the (partial) $R^2$- and $R$-values due to the normalization. Therefore, all quantities above are well-defined.

\subsection{Proofs of Results in Appendix~\ref{app:r2-calculus}}
In this subsection, we state and prove the generalized versions of Lemma \ref{lem:r2-corr} and Proposition~\ref{prop:r2calc}.

\begin{lemma}\label{lem:HS-r2-corr}
  In the setting of Definition \ref{def:HS-r2}, $R^{2}_{\yc\sim \xc\vert \zc} = R^{2}_{\yc^{\perp \zc} \sim \xc^{\perp \zc}}$ holds true. Moreover, if $\xc$ is a one-dimensional subspace, then $R^{2}_{\yc \sim
    \xc \vert \zc} = (R_{\yc \sim \xc \vert \zc})^2$.
\end{lemma}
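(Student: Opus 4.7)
The plan is to reduce both claims to direct computations with projection operators, leveraging Lemma~\ref{lem:proj-HS}.

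For the first equality, I would start by unfolding the definition of the partial $R^2$-value. Writing $y \in \yc \setminus \{0\}$, a short calculation gives
\begin{equation*}
  R^2_{\yc \sim \xc \vert \zc}
  = \frac{\lVert y^{\perp \zc}\rVert^2 - \lVert y^{\perp \xc+\zc}\rVert^2}{\lVert y^{\perp \zc}\rVert^2}.
\end{equation*}
The crucial step is then to rewrite $y^{\perp \xc+\zc}$ using part~(iii) of Lemma~\ref{lem:proj-HS}, which states $Q_{\xc+\zc} = Q_{\xc^{\perp \zc}}\, Q_\zc$. Applied to $y$, this gives $y^{\perp \xc + \zc} = (y^{\perp \zc})^{\perp \xc^{\perp \zc}}$. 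Substituting this back yields exactly $R^2_{\yc^{\perp \zc} \sim \xc^{\perp \zc}}$ by Definition~\ref{def:HS-r2} (marginal case with the closed subspaces $\yc^{\perp \zc}$ and $\xc^{\perp \zc}$, noting $\yc^{\perp \zc}$ is one-dimensional and spanned by $y^{\perp \zc} \neq 0$ by assumption).

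For the second equality, I would use the first part to reduce to the marginal situation: it suffices to show $R^2_{\yc^{\perp \zc} \sim \xc^{\perp \zc}} = (R_{\yc \sim \xc \vert \zc})^2$. Since $\xc$ is one-dimensional with generator $x$ such that $x^{\perp \zc} \neq 0$, the subspace $\xc^{\perp \zc}$ is the one-dimensional span of $x^{\perp \zc}$. The projection onto a line is explicit:
\begin{equation*}
  P_{\xc^{\perp \zc}}\, y^{\perp \zc}
  = \frac{\langle y^{\perp \zc},\, x^{\perp \zc}\rangle}{\lVert x^{\perp \zc}\rVert^2}\, x^{\perp \zc},
  \qquad
  \lVert P_{\xc^{\perp \zc}}\, y^{\perp \zc}\rVert^2
  = \frac{\lvert \langle y^{\perp \zc}, x^{\perp \zc}\rangle\rvert^2}{\lVert x^{\perp \zc}\rVert^2}.
\end{equation*}
Applying the Pythagorean identity in Lemma~\ref{lem:proj-HS}(iv) to $y^{\perp \zc} = P_{\xc^{\perp \zc}}\, y^{\perp \zc} + (y^{\perp \zc})^{\perp \xc^{\perp \zc}}$ (the two summands are orthogonal by Definition~\ref{def:projection}), I get $\lVert (y^{\perp \zc})^{\perp \xc^{\perp \zc}}\rVert^2 = \lVert y^{\perp \zc}\rVert^2 - \lVert P_{\xc^{\perp \zc}}\, y^{\perp \zc}\rVert^2$. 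Plugging this into the marginal $R^2$ produces $\lvert\langle y^{\perp \zc}, x^{\perp \zc}\rangle\rvert^2 / (\lVert y^{\perp \zc}\rVert^2 \lVert x^{\perp \zc}\rVert^2)$, which is exactly $(R_{\yc \sim \xc \vert \zc})^2$.

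Neither step is really difficult; the main obstacle is bookkeeping. In particular, one must verify that $\yc^{\perp \zc}$ remains one-dimensional (so Definition~\ref{def:HS-r2} applies) and that $(y^{\perp \zc})^{\perp \xc^{\perp \zc}}$ really equals $y^{\perp \xc+\zc}$, which is where Lemma~\ref{lem:proj-HS}(iii) does the heavy lifting. Once those identifications are in place, the rest is a direct substitution.
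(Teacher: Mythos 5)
Your proposal is correct and follows essentially the same route as the paper's proof: unfold the definition to $1-\lVert y^{\perp \xc+\zc}\rVert^2/\lVert y^{\perp\zc}\rVert^2$, apply Lemma~\ref{lem:proj-HS}(iii) to identify $y^{\perp\xc+\zc}$ with $Q_{\xc^{\perp\zc}}y^{\perp\zc}$, and then use the explicit rank-one projection formula together with the Pythagorean identity from Lemma~\ref{lem:proj-HS}(iv) for the second claim. The only cosmetic difference is that the paper pauses to verify that the stated rank-one formula really is the orthogonal projection onto $\xc^{\perp\zc}$, which you take as given.
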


\begin{proof}
The first statement of the lemma follows from some elementary algebraic manipulations and applying Lemma \ref{lem:proj-HS} (iii):
\begin{align*}
    R^2_{\yc\sim \xc\vert \zc} &= \frac{R^2_{\yc\sim \xc + \zc} - R^2_{\yc\sim \zc}}{1- R^2_{\yc\sim \zc}}
    \left[1 - \frac{\lVert y^{\perp \xc + \zc} \rVert^2}{\lVert y \rVert^2} - 1 + \frac{\lVert y^{\perp \zc} \rVert^2}{\lVert y \rVert^2} \right] \bigg/ \frac{\lVert y^{\perp \zc} \rVert^2}{\lVert y \rVert^2}\\
    &= 1- \frac{\lVert y^{\perp \xc + \zc} \rVert^2}{\lVert y^{\perp \zc} \rVert^2}
    \stackrel{\text{(iii)}}{=} 1- \frac{\lVert Q_{\xc^{\perp \zc}}\, y^{\perp \zc} \rVert^2}{\lVert y^{\perp \zc} \rVert^2}
    = R^2_{\yc^{\perp \zc} \sim \xc^{\perp \zc}}.
\end{align*}
To prove the second part of the lemma, we assume that $\xc$ is one-dimensional and choose $x \in \xc\setminus \{0\}$. If $\xc^{\perp \zc} = 0$, the projection on $\xc^{\perp \zc}$ is 0; otherwise, it is given by
\begin{equation}\label{eq:1d-proj}
    P_{\xc^{\perp \zc}} h = \frac{\langle h, x^{\perp \zc}\rangle x^{\perp \zc}}{\lVert x^{\perp \zc} \rVert^2},\quad \text{for } h \in \hc.
\end{equation}
This can be easily checked: $P_{\xc^{\perp \zc}}$ is linear and its image is contained in $\xc^{\perp \zc}$. Moreover,
\begin{equation*}
    \bigg\langle\frac{\langle h, x^{\perp\zc}\rangle\, x^{\perp\zc}}{\lVert x^{\perp\zc} \rVert^2},\, h - \frac{\langle h, x^{\perp\zc}\rangle\, x^{\perp\zc}}{\lVert x^{\perp\zc} \rVert^2} \bigg\rangle
    = \frac{\langle h,x^{\perp\zc} \rangle^2}{\lVert x^{\perp\zc} \rVert^2} - \frac{\langle h,x^{\perp\zc} \rangle^2 \lVert x^{\perp\zc}\rVert^2}{\lVert x^{\perp\zc} \rVert^4} = 0.
\end{equation*}
Following from the first part of the proof and Lemma \ref{lem:proj-HS} (iv), we infer
\begin{equation*}
    R^2_{\yc\sim \xc\vert \zc} = 1- \frac{\lVert Q_{\xc^{\perp \zc}}\, y^{\perp \zc} \rVert^2}{\lVert y^{\perp \zc} \rVert^2}
    \stackrel{\text{(iv)}}{=} \frac{\lVert P_{\xc^{\perp \zc}}\, y^{\perp \zc} \rVert^2}{\lVert y^{\perp \zc} \rVert^2}.
\end{equation*}
Inserting the formula for the projection on $
\xc^{\perp \zc}$ yields the second statement of the lemma
\begin{equation*}
    R^2_{\yc\sim \xc \vert \zc} = \frac{\lVert \langle y^{\perp \zc}, x^{\perp \zc}\rangle\, x^{\perp \zc}
    \rVert^2}{\lVert y^{\perp \zc} \rVert^2\, \lVert x^{\perp \zc} \rVert^4}
    = \frac{\langle y^{\perp \zc} , x^{\perp \zc} \rangle^2 \rVert x^{\perp \zc}
    \rVert^2}{\lVert y^{\perp \zc} \rVert^2\, \lVert x^{\perp \zc} \rVert^4}
    = \big(R_{\yc\sim \xc \vert \zc} \big)^2.\hfill\qedhere
\end{equation*}
\end{proof}

\begin{proposition}[Hilbert space $R^2$-calculus]\label{prop:HS-r2calc}
  In the setting of Definition \ref{def:HS-r2}, let $\wc$ be another closed linear subspace. Assume
  $\lVert \yc^{\perp \xc + \wc + \zc} \rVert^2> 0$. Further suppose ${\lVert \xc^{\perp \wc + \zc}\rVert^2\! >\! 0}$ and $\lVert \wc^{\perp \xc + \zc}\rVert^2 > 0$ when $\xc$ and/or $\wc$ are one-dimensional subspaces. Then, the following rules hold
  \begin{itemize}
  \addtolength{\leftskip}{3ex}
  \item[{\normalfont[i]}] Orthogonality: if $\yc\perp \xc$, $R^2_{\yc\sim \xc} = 0$;
  \item[{\normalfont[ii]}] Orthogonal additivity: if $\xc \perp \wc$, then
  \begin{equation*}
      R^{2}_{\yc\sim \xc + \wc} = R^{2}_{\yc\sim \xc} + R^{2}_{\yc\sim \wc};
  \end{equation*} 
  \item[{\normalfont[iii]}] Decomposition of unexplained variation: 
    \begin{equation*}
      1-R^{2}_{\yc\sim \xc + \wc} = (1-R^{2}_{\yc\sim \xc})(1-R^{2}_{\yc\sim \wc \vert \xc});
    \end{equation*}
  \item[{\normalfont[iv]}] Recursion of partial $R$-value: if $\xc$ and $\wc$ are one-dimensional,
    \begin{equation*}
      R_{\yc\sim \xc\vert \wc} = \frac{R_{\yc\sim \xc} - R_{\yc\sim \wc}R_{\xc\sim \wc}}{\sqrt{1-R_{\yc\sim \wc}^{2}}\sqrt{1-R^{2}_{\xc \sim \wc }}};
    \end{equation*}
  \item[{\normalfont[v]}] Reduction of partial $R$-value: if $\xc$ is one-dimensional and ${\yc\perp \wc}$, then 
    \begin{equation*}
      R_{\yc\sim \xc\vert \wc} = \frac{R_{\yc\sim \xc}}{\sqrt{1-R^{2}_{\xc\sim \wc}}};
    \end{equation*}
  \item[{\normalfont[vi]}] Three-dimensional restriction: if $\xc$ and $\wc$ are one-dimensional, then
    \begin{equation*}
      f_{\yc\sim \xc \vert \wc}\sqrt{1-R^{2}_{\yc\sim \wc\vert \xc}} = f_{\yc\sim
        \xc}\sqrt{1-R^{2}_{\xc\sim \wc}}-R_{\yc\sim \wc\vert \xc}R_{\xc\sim \wc}.
    \end{equation*}
  \end{itemize}
  All of the relationships above also hold when $\zc$ is partialed out (i.e.\ if ``\,$|\zc$'' is appended to the subscripts of all $R$-, $R^2$-, and $f$-values) and the orthogonality assumptions are conditional on $\zc$.
\end{proposition}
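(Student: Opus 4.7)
The plan is to establish each of the six rules in turn and then deduce the ``given $\zc$'' versions from the unconditional ones. The key observation for the latter step is that Lemma~\ref{lem:HS-r2-corr} identifies $R^2_{\yc\sim\xc\vert\zc}$ with $R^2_{\yc^{\perp\zc}\sim\xc^{\perp\zc}}$ in the ambient Hilbert space, and conditional orthogonality $\yc\perp\xc\,\vert\,\zc$ is by definition the same as orthogonality of $\yc^{\perp\zc}$ and $\xc^{\perp\zc}$. Applying each unconditional rule to $\xc^{\perp\zc},\yc^{\perp\zc},\wc^{\perp\zc}$ in place of $\xc,\yc,\wc$ therefore yields the conditional version automatically. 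Below I fix generators $y\in\yc$ and, when the corresponding subspace is one-dimensional, $x\in\xc$ and $w\in\wc$.

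For [i], $\yc\perp\xc$ gives $P_\xc y = 0$ and hence $\lVert y^{\perp\xc}\rVert^2 = \lVert y\rVert^2$. For [ii], Lemma~\ref{lem:proj-HS}(ii) yields $P_{\xc+\wc}=P_\xc+P_\wc$ under $\xc\perp\wc$, so $y=P_\xc y + P_\wc y + y^{\perp\xc+\wc}$ decomposes into three pairwise orthogonal summands; one application of Pythagoras (Lemma~\ref{lem:proj-HS}(iv)) followed by division by $\lVert y\rVert^2$ delivers [ii]. For [iii], Lemma~\ref{lem:proj-HS}(iii) gives $y^{\perp\xc+\wc}=Q_{\wc^{\perp\xc}}y^{\perp\xc}$, so $\lVert y^{\perp\xc+\wc}\rVert^2=\lVert y^{\perp\xc}\rVert^2(1-R^2_{\yc^{\perp\xc}\sim\wc^{\perp\xc}})$; dividing by $\lVert y\rVert^2$ and using Lemma~\ref{lem:HS-r2-corr} to rewrite the final factor as $1-R^2_{\yc\sim\wc\vert\xc}$ produces [iii].

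For one-dimensional $\wc$, the explicit projection formula \eqref{eq:1d-proj} gives $y^{\perp\wc}=y-\langle y,w\rangle w/\lVert w\rVert^2$ and similarly for $x^{\perp\wc}$. Expanding $\langle y^{\perp\wc},x^{\perp\wc}\rangle$ collapses to $\langle y,x\rangle-\langle y,w\rangle\langle w,x\rangle/\lVert w\rVert^2$, and dividing by $\lVert y^{\perp\wc}\rVert\,\lVert x^{\perp\wc}\rVert = \lVert y\rVert\,\lVert x\rVert\sqrt{(1-R^2_{\yc\sim\wc})(1-R^2_{\xc\sim\wc})}$ yields [iv]; rule [v] is the special case $R_{\yc\sim\wc}=0$ obtained via [i]. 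For [vi], I combine [iii] applied in two orders to get $(1-R^2_{\yc\sim\xc})(1-R^2_{\yc\sim\wc\vert\xc})=(1-R^2_{\yc\sim\wc})(1-R^2_{\yc\sim\xc\vert\wc})$, substitute this into $f_{\yc\sim\xc\vert\wc}\sqrt{1-R^2_{\yc\sim\wc\vert\xc}} = R_{\yc\sim\xc\vert\wc}\sqrt{(1-R^2_{\yc\sim\wc})/(1-R^2_{\yc\sim\xc})}$, and apply [iv] to $R_{\yc\sim\xc\vert\wc}$ so that the left-hand side of [vi] reduces to $(R_{\yc\sim\xc}-R_{\yc\sim\wc}R_{\xc\sim\wc})/\sqrt{(1-R^2_{\yc\sim\xc})(1-R^2_{\xc\sim\wc})}$; expanding the right-hand side of [vi] using [iv] applied to $R_{\yc\sim\wc\vert\xc}$ produces the same expression.

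The main obstacle I anticipate is bookkeeping: verifying that the positive-norm hypotheses of the proposition are exactly what is required to keep every squared $R$-value appearing along the way strictly less than $1$, so that all denominators remain positive and the $f$-values finite. Concretely, $\lVert\yc^{\perp\xc+\wc+\zc}\rVert^2>0$ controls the outermost partial $R^2$, while $\lVert\xc^{\perp\wc+\zc}\rVert^2>0$ and $\lVert\wc^{\perp\xc+\zc}\rVert^2>0$ control the residual denominators arising in [iv] and [vi]. Checking these compatibilities via Cauchy--Schwarz, together with careful tracking of which Minkowski sums of subspaces appear at each step, is the only real subtlety; everything else is a direct unwinding of Lemma~\ref{lem:proj-HS} and the one-dimensional projection formula \eqref{eq:1d-proj}.
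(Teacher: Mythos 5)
Your treatment of rules [i]--[iv] and [vi], and your reduction of the conditional (``given $\zc$'') versions to the unconditional ones via Lemma \ref{lem:HS-r2-corr} and Lemma \ref{lem:proj-HS}(iii), matches the paper's proof in substance: the paper likewise carries $\zc$ through every computation by working with $y^{\perp\zc}$, $x^{\perp\zc}$, $w^{\perp\zc}$, and its verifications of [iv] and [vi] are the same algebra you describe (for [vi] the paper starts from a combination of the two recursions and isolates the $f$-value at the end, rather than reducing both sides to a common expression, but the manipulations are identical).

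The one genuine gap is rule [v]. You derive it as the special case $R_{\yc\sim\wc}=0$ of rule [iv], but [iv] requires \emph{both} $\xc$ and $\wc$ to be one-dimensional, whereas [v] only assumes $\xc$ is one-dimensional: $\wc$ may be an arbitrary closed subspace, and the paper relies on exactly this generality elsewhere (e.g.\ partialing out the covariate block $\xd_{I^c}$ in the derivation of the comparative constraints, and $U_{-j}$ in Lemma \ref{lem:independent-expansion}). For multidimensional $\wc$ your reduction does not apply. The paper closes this by taking an orthonormal basis $w_1^{\perp\zc},\dots,w_J^{\perp\zc}$ of $\wc^{\perp\zc}$ and inducting on $j$, applying [iv] one basis vector at a time and collapsing the accumulated denominators via rule [iii]. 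A shorter fix in your own style is available: $\yc\perp\wc$ together with self-adjointness of $P_\wc$ gives $\langle y^{\perp\wc},x^{\perp\wc}\rangle=\langle y,x\rangle$ and $\lVert y^{\perp\wc}\rVert=\lVert y\rVert$, so $R_{\yc\sim\xc\vert\wc}=\langle y,x\rangle/\bigl(\lVert y\rVert\,\lVert x^{\perp\wc}\rVert\bigr)=R_{\yc\sim\xc}/\sqrt{1-R^{2}_{\xc\sim\wc}}$ directly for arbitrary $\wc$. Apart from this, the proposal is sound.
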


\begin{proof}
\phantom{hello}\\[-4ex]
\begin{itemize}
    \addtolength{\leftskip}{3ex}
    \item[{\normalfont[i]}] Since $\yc^{\perp \zc}$ and $\xc^{\perp \zc}$ are orthogonal, $Q_{\xc^{\perp \zc}} y^{\perp \zc} = y^{\perp \zc}$ holds. Hence, we obtain
    \begin{equation*}
        R^2_{\yc \sim \xc \vert \zc} = 1 - \frac{\lVert y^{\perp \zc}\rVert^2}{\lVert y^{\perp \zc}\rVert^2}=0.
    \end{equation*}
    \item[{\normalfont[ii]}] According to Lemma \ref{lem:HS-r2-corr} and its proof, we obtain
    \begin{equation*}
        R^2_{\yc \sim \xc + \wc \vert \zc}
        = R^2_{\yc^{\perp \zc} \sim \xc^{\perp \zc} + \wc^{\perp \zc}}
        = \frac{\lVert P_{\xc^{\perp \zc} + \wc^{\perp \zc}}\, y^{\perp \zc}\rVert^2}{\lVert y^{\perp \zc}\rVert^2}.
    \end{equation*}
    Following from Lemma \ref{lem:proj-HS} (ii) and (iv), we get
    \begin{align*}
        R^2_{\yc \sim \xc + \wc \vert \zc}
        &\stackrel{\text{(ii)}}{=} \frac{\lVert P_{\xc^{\perp \zc}} y^{\perp \zc} + P_{\wc^{\perp \zc}}\, y^{\perp \zc}\rVert^2}{\lVert y^{\perp \zc}\rVert^2}
        \stackrel{\text{(iv)}}{=} \frac{\lVert P_{\xc^{\perp \zc}}\, y^{\perp \zc} \rVert^2}{\lVert y^{\perp \zc}\rVert^2}
        + \frac{\lVert P_{\wc^{\perp \zc}}\, y^{\perp \zc}\rVert^2}{\lVert y^{\perp \zc}\rVert^2}\\
        &= R^2_{\yc \sim \xc \vert \zc} + R^2_{\yc \sim \wc \vert \zc}.
    \end{align*}
    \item[{\normalfont[iii]}] The statement directly follows from the definition of the partial $R^2$-value
    \begin{align*}
        \left(1-R^2_{\yc \sim \xc \vert \zc}\right)\left(1-R^2_{\yc\sim \wc\vert \xc + \zc}\right)
        &= \frac{\lVert y^{\perp \xc + \zc}\rVert^2}{\lVert y^{\perp \zc}\rVert^2} \frac{\lVert y^{\perp \wc+\xc+\zc}\rVert^2}{\lVert y^{\perp \xc+\zc}\rVert^2}\\
        &= \frac{\lVert y^{\perp \wc+\xc+\zc}\rVert^2}{\lVert y^{\perp \zc}\rVert^2} = 1- R^2_{\yc\sim \wc + \xc\vert \zc}.
    \end{align*}
    \item[{\normalfont[iv]}] Plugging in the definition of the partial $R$-value into the right-hand side, we get
    \begin{align*}
        \text{RHS} &= \frac{R_{\yc \sim \xc\vert\zc} - R_{\yc \sim \wc\vert\zc}\,R_{\xc \sim \wc\vert\zc}}{\sqrt{1-R_{\yc \sim \wc\vert\zc}^2}\sqrt{1-R_{\xc \sim \wc\vert\zc}^2}}\\[1ex]
        &=\left[\frac{\langle y^{\perp \zc}, x^{\perp \zc}\rangle}{\lVert y^{\perp \zc}\rVert \lVert x^{\perp \zc}\rVert} -\frac{\langle y^{\perp \zc}, w^{\perp \zc}\rangle}{\lVert y^{\perp \zc}\rVert \lVert w^{\perp \zc}\rVert} \frac{\langle x^{\perp \zc}, w^{\perp \zc}\rangle}{\lVert x^{\perp \zc}\rVert \lVert w^{\perp \zc}\rVert} \right] \bigg/\left[ \frac{\lVert y^{\perp \wc + \zc} \rVert}{\lVert y^{\perp \zc} \rVert} \frac{\lVert x^{\perp \wc + \zc} \rVert}{\lVert x^{\perp \zc} \rVert}\right]\\[1ex]
        &= \frac{\langle y^{\perp \zc}, x^{\perp \zc}\rangle}{\lVert y^{\perp \wc+\zc}\rVert \lVert x^{\perp \wc+\zc}\rVert}
        - \frac{\langle y^{\perp \zc}, w^{\perp \zc}\rangle\,\langle x^{\perp \zc}, w^{\perp \zc}\rangle}{\lVert w^{\perp \zc}\rVert^2 \lVert y^{\perp \wc+\zc}\rVert \lVert x^{\perp \wc+\zc}\rVert}.
    \end{align*}
    Recalling the formula \eqref{eq:1d-proj} for the projection operator on a one-dimensional subspace, we can reformulate the upper equation further
    \begin{align*}
        \text{RHS} &= \frac{\Big\langle y^{\perp \zc}, x^{\perp \zc} - \frac{\langle x^{\perp \zc}, w^{\perp \zc}\rangle\,w^{\perp \zc}}{\lVert w^{\perp \zc}\rVert^2} \Big\rangle}{\lVert y^{\perp \wc+\zc}\rVert \lVert x^{\perp \wc+\zc}\rVert}
        = \frac{\langle y^{\perp \zc}, Q_{\wc^{\perp \zc}}\, x^{\perp \zc} \rangle}{\lVert y^{\perp \wc+\zc}\rVert \lVert x^{\perp \wc+\zc}\rVert}\\
        &\stackrel{\text{(iii)}}{=} \frac{\langle y^{\perp \wc+\zc}, x^{\perp \wc+\zc} \rangle}{\lVert y^{\perp \wc+\zc}\rVert \lVert x^{\perp \wc+\zc}\rVert} = R_{\yc \sim \xc\vert \wc+\zc} = \text{LHS},
    \end{align*}
    where the third equality follows from Lemma \ref{lem:proj-HS} (iii).
    \item[{\normalfont[v]}] Let $(w_j^{\perp \zc})_{j\in \{1,\ldots,J\}}$, be an orthonormal basis of $\wc^{\perp \zc}$. The subspace spanned by the first $j$ vectors is denoted by $\wc_j^{\perp \zc} := \Span\{w_1^{\perp \zc},\ldots,w_j^{\perp \zc}\}$. Due to rule [i] and $\yc \perp \wc \,\vert\, \zc$, $R^2_{\yc \sim \wc_j\vert \zc} = 0$ and $R^2_{\yc \sim \wc_{j+1}\vert \wc_j + \zc} = 0$ hold for all $j\in\{1,\ldots, J-1\}$. By induction, we prove the statement $R_{\yc \sim \xc\vert \zc + \wc_j} = R_{\yc\sim \xc\vert \zc}/\sqrt{1-R^2_{\xc \sim \wc_j\vert \zc}}$ for all $j \in \{1,\ldots, J\}$.
    For the base case, we apply rule [iv] and $R_{\yc \sim \wc_1\vert \zc} = 0$ as follows
    \begin{equation*}
        R_{\yc \sim \xc\vert \wc_1 + \zc} \stackrel{\text{[iv]}}{=}
        \frac{R_{\yc\sim \xc\vert \zc} - R_{\yc\sim \wc_1\vert \zc}\,R_{\xc\sim \wc_1\vert \zc}}{\sqrt{1-R_{\yc\sim \wc_1\vert \zc}^2}\sqrt{1-R_{\xc\sim \wc_1\vert \zc}^2}}
        = \frac{R_{\yc\sim \xc\vert \zc} }{\sqrt{1-R_{\xc\sim \wc_1\vert \zc}^2}}.
    \end{equation*}
    The induction step uses rule [iv] and simplifies the resulting expression via\\ ${R_{\yc \sim \wc_{j+1}\vert \wc_j + \zc}=0}$, the induction hypothesis and rule [iii]:
    \begin{align*}
        R_{\yc \sim \xc \vert \wc_{j+1}+\zc}
        &\stackrel{\text{[iv]}}{=} \frac{R_{\yc\sim \xc\vert \wc_j + \zc} - R_{\yc\sim \wc_{j+1}\vert \wc_j+\zc}\,R_{\xc\sim \wc_{j+1}\vert \wc_j + \zc}}{\sqrt{1-R_{\yc\sim \wc_{j+1}\vert \wc_j+\zc}^2}\sqrt{1-R_{\xc\sim \wc_{j+1}\vert \wc_j + \zc}^2}}
        = \frac{R_{\yc\sim \xc\vert \wc_j + \zc}}{\sqrt{1-R_{\xc\sim \wc_{j+1}\vert \wc_j + \zc}^2}}\\
        &= \frac{R_{\yc\sim \xc\vert \zc}}{\sqrt{1-R^2_{\xc \sim \wc_j\vert \zc}}\sqrt{1-R_{\xc\sim \wc_{j+1}\vert \wc_j + \zc}^2}}
        \stackrel{\text{[iii]}}{=} \frac{R_{\yc\sim \xc\vert \zc}}{\sqrt{1-R^2_{\xc \sim \wc_{j+1}\vert \zc}}}.
    \end{align*}

    \item[{\normalfont[vi]}] First, we apply rule [iv] to $R_{\yc\sim \xc \vert \wc + \zc}$ and
    $R_{\yc \sim \wc \vert \xc+\zc}$
    \begin{align*}
      R_{\yc\sim \xc\vert \wc +\zc} &= \frac{R_{\yc\sim \xc\vert \zc} - R_{\yc\sim \wc
          \vert \zc}R_{\xc\sim \wc\vert \zc}}{\sqrt{1-R_{\yc\sim \wc
            \vert \zc}^{2}}\sqrt{1-R_{\xc\sim \wc\vert \zc}^{2}}},\\
      R_{\yc\sim \wc \vert \xc + \zc} &= \frac{R_{\yc\sim \wc \vert \zc} - R_{\yc\sim \xc
          \vert \zc}R_{\xc\sim \wc\vert \zc}}{\sqrt{1-R^{2}_{\yc\sim \xc \vert
            \zc}}\sqrt{1-R^{2}_{\xc \sim \wc\vert \zc}}},
    \end{align*}
    and compute
    \begin{multline*}
      R_{\yc\sim \xc \vert \wc + \zc}\sqrt{1-R^{2}_{\yc\sim \wc \vert \zc}} + R_{\yc\sim
        \wc\vert \xc + \zc} R_{\xc\sim \wc \vert \zc}\sqrt{1-R^{2}_{\yc\sim \xc\vert
          \zc}}\\
      \begin{aligned}
        &=
        \frac{R_{\yc\sim \xc
          \vert \zc} - R_{\yc\sim \wc \vert \zc}R_{\xc\sim \wc\vert \zc}
          + R_{\yc\sim
          \wc \vert \zc}R_{\xc\sim \wc\vert \zc} - R_{\yc\sim \xc \vert
          \zc}R^{2}_{\wc\sim \xc \vert
          \zc}}{\sqrt{1-R^{2}_{\xc\sim \wc\vert \zc}}}\\
        &= R_{\yc\sim \xc \vert \zc} \sqrt{1-R^{2}_{\xc\sim \wc \vert \zc}}.
      \end{aligned}
    \end{multline*}
    Next, we divide both sides of the equation by
    $\sqrt{1-R^{2}_{\yc\sim \xc \vert \zc}}$ and rearrange it which results in
    \begin{equation*}
      R_{\yc\sim \xc \vert \wc + \zc}\frac{\sqrt{1-R^{2}_{\yc\sim \wc \vert
            \zc}}}{\sqrt{1-R^{2}_{\yc\sim \xc \vert \zc}}} = f_{\yc\sim
        \xc\vert \zc}\sqrt{1-R^{2}_{\xc\sim \wc \vert \zc}}-R_{\yc\sim \wc\vert
        \xc+\zc}R_{\xc\sim \wc \vert \zc}.
    \end{equation*}
    According to rule [iii], we obtain
    \begin{equation*}
     (1-R^{2}_{\yc\sim \xc \vert \zc})(1-R^{2}_{\yc\sim \wc \vert \xc + \zc})
     =1-R^{2}_{\yc\sim \xc + \wc \vert \zc} = (1-R^{2}_{\yc\sim \wc \vert
       \zc})(1-R^{2}_{\yc\sim \xc\vert \wc + \zc})
   \end{equation*}
   and, thus, $(1-R^{2}_{\yc\sim \wc \vert \zc})/(1-R^{2}_{\yc\sim \xc\vert \zc}) = (1-R^{2}_{\yc\sim \wc\vert \xc + \zc})/(1-R^{2}_{\yc\sim \xc\vert \wc + \zc})$.
   Plugging this relationship into the left-hand side of the upper
   equation, we arrive at
   \begin{equation*}
     f_{\yc\sim \xc \vert \wc+\zc}\sqrt{1-R^{2}_{\yc\sim \wc\vert \xc + \zc}} = f_{\yc\sim
      \xc\vert \zc}\sqrt{1-R^{2}_{\xc\sim \wc \vert \zc}}-R_{\yc\sim \wc\vert
      \xc + \zc}\,R_{\xc\sim \wc \vert \zc}. \hfill\qedhere
   \end{equation*}
\end{itemize}
\end{proof}

\subsection{R\textsuperscript{2}-calculus for Covariance Matrices}\label{app:r2-linear-models}

The $R^2$-calculus as presented in Section~\ref{app:r2-calculus} is
a special case of the $R^2$-calculus for Hilbert spaces.
To be consistent with the standard notation for $R^2$-values in linear
models, we make two slight changes to the Hilbert space
notation. First, a random vector denotes the linear space that is
spanned by its components. Analogously, for an i.i.d. sample of size
$n$ for a $p$-dimensional random vector $X$, we use the matrix $X \in
\R^{n\times p}$ to denote the row-space. Second, we replace the
plus-sign with a comma for partialed out variables. For instance, we
write $R^2_{Y\sim X\vert W,Z}$ instead of $R^2_{Y\sim X\vert W+Z}$ in
the main text.

Denote the space of square-integrable random variables $L^2 := \{X \colon \E[X^2] < \infty\}$.
We define the following four Hilbert spaces with associated inner
products
\begin{alignat*}{8}
    &\hc & &:= L^2,\quad\!& \langle X &, Y \rangle_{\hc} & &:= \E[X Y],\quad&
    &\quad\hc_0 & &:= \big\{X \in L^2 \colon \E[X] = 0\big\}, \quad\!& \langle X &, Y \rangle_{\hc_0} & &:= \cov(X,Y),\\
    &\hat{\hc} & &:= \R^n,\quad\!& \langle x &, y \rangle_{\hat{\hc}} & &:= n^{-1}x^T y,\quad&
    &\quad\hat{\hc}_0 & &:= \big\{x \in \R^n \colon \bar{x} = 0\big\},\quad\!& \langle x &, y \rangle_{\hat{\hc}_0} & &:= \covh(x,y),
\end{alignat*}
where $\bar{x}$ denotes the empirical mean of $x$. The population
$R^2$-calculus for linear models as stated in the main text follows
from choosing the Hilbert space $(\hc_0,
\langle\cdot,\cdot\rangle_{\hc_0})$ in Lemma \ref{lem:HS-r2-corr} and
Proposition \ref{prop:HS-r2calc}. Likewise, we use $(\hat{\hc}_0,
\langle\cdot,\cdot\rangle_{\hat{\hc}_0})$ for the empirical
$R^2$-calculus. Since we choose the scaling $n^{-1}$ in the empirical
covariance, the estimators of covariance, variance and standard
deviation are not unbiased. To account for the loss of degrees of
freedom through estimation of the mean and potentially partialing out
a $p$-dimensional subspace, the factor $(n-p-1)^{-1}$ must be used. We
choose the scaling $n^{-1}$ instead to accord with the textbook
definition of the empirical $R^2$-value \citep[chap.~1]{mackinnon}. Besides, for a sufficiently large sample size $n$ the
difference will be negligible.

In the main text, we made the assumption that the random variables and
the observations are centred and thus are elements of $\hc_0$.
If this does not hold, we can
redefine the population $R^2$-value via the inner product
$\langle\cdot,\cdot\rangle_\hc$ 
as $R^2_{Y\sim X} := 1 - \E[(Y^{\perp X})^2]/\E[Y^2]$.
Similarly, we replace the inner product in the definition of partial
$R^2$-, $R$-, $f^2$- and $f$-values. This formulation contains the
definition of $R^2$-value in the main text as a special case because,
for centred random variables, $\langle \cdot, \cdot \rangle_\hc$ and the covariance are equal.

Furthermore, we can treat the constant $1$ as an additional covariate; it is easy to check that the relationship
\begin{equation*}
    R^2_{Y- \E[Y] \sim X - \E[X]} = R^2_{Y\sim X \vert 1}
\end{equation*}
holds in this case.
Hence, centering random variables is equivalent to partialing out the effect of the constant, and thus always observed, covariate. As our focus lies on the explanatory capability of the non-constant covariates, we always partial out 1 or equivalently center the observed variables. The same arguments also apply to the empirical $R^2$-value and centering the samples.

\section{Bias under Unmeasured Confounding}\label{sec:proofs-sec2}
Without loss of generality, we assume that all random variables/vectors
are centered; moreover, we only state and prove the population version
of the results. As explained in Section~\ref{app:r2-linear-models},
the sample and non-centered counterparts of the results and proofs
follow by the same arguments but choosing a different Hilbert space
and inner product.

\subsection{Single Unmeasured Confounder}\label{app:proofs-single}
\begin{proof}[\textbf{Proof of Proposition \ref{lem:objective}}]
Throughout this proof, all quantities partial out $(X,Z)$ which is indicated by either the subscript ``$\vert (X,Z)$'' or the superscript ``$\perp\!\! (X,Z)$''. In order to shorten the notation, we only indicate partialing out $(X,Z)$ in the estimands and drop the $(X,Z)$-dependence in the other quantities. First, we express the estimands $\beta_{Y\sim D\vert X,Z}$ and $\beta_{Y\sim D\vert X,Z,U}$ in terms of standard deviations and correlations and use the $R$- and $\sigma$-notation
\begin{align*}
    \beta_{Y\sim D\vert X,Z} - \beta_{Y\sim D\vert X,Z,U}
    &= \frac{ \corr(Y, D)\sd(Y)\sd(D)}{\sd(D)^2} - \frac{\corr(Y^{\perp U}, D^{\perp U}) \sd(Y^{\perp U})\sd(D^{\perp U})}{\sd(D^{\perp U})^2}\\
    &= R_{Y \sim D}\frac{\sigma_{Y}}{\sigma_{D}} - R_{Y \sim D\vert U}\frac{\sigma_{Y\sim U}}{\sigma_{D\sim U}}.
\end{align*}
Next, we extract the common factor $\sigma_{Y\sim D}/\sigma_{D}$ by applying rule [iii] of the $R^2$-calculus four times. We then rewrite the difference so that it
is expressed in terms of $R_{Y\sim U\vert D}$ instead of $R_{Y\sim D\vert U}$. To this end, we subsequently replace $R_{Y\sim D\vert U}$ and $R_{Y\sim U}$ via the recursion of partial
correlation formula [iv]. In summary, we get
\begin{align*}
    \beta_{Y\sim D\vert X,Z} &- \beta_{Y\sim D\vert X,Z,U}\\
        &\stackrel{\text{[iii]}}{=} \left[
    \frac{R_{Y \sim D}}{\sqrt{1-R_{Y \sim D}^2}} - R_{Y \sim D\vert U} \frac{\sqrt{1-R^2_{Y\sim U}}}{\sqrt{1-R^2_{Y\sim D}}\sqrt{1-R^2_{D\sim U}}}
    \right] \frac{\sigma_{Y\sim D}}{\sigma_{D}}\\
    &\stackrel{\text{[iv]}}{=} \left[
    f_{Y\sim D}-
    \frac{R_{Y\sim D}-R_{Y\sim U}\,R_{D\sim U}}{\sqrt{1-R^2_{Y\sim D}}\big(1-R^2_{D\sim U}\big)}
    \right] \frac{\sigma_{Y\sim D}}{\sigma_{D}}\\
    &\stackrel{\text{[iv]}}{=}\begin{aligned}[t]
        &\Bigg[ f_{Y\sim D} - \frac{1}{\sqrt{1-R^2_{Y\sim D}}\big(1-R^2_{D\sim U}\big)}\bigg(
        R_{Y\sim D}\\
        &-R_{D\sim U}
        \Big(
    \sqrt{1-R^2_{Y\sim D}}\sqrt{1-R^2_{D\sim U}}R_{Y\sim U\vert D}+R_{Y\sim D}\,R_{D\sim U}
    \Big)\bigg)
        \Bigg] \frac{\sigma_{Y\sim D}}{\sigma_{D}}
    \end{aligned}\\
    &= \left[f_{Y\sim D}\bigg(1-\frac{1}{1-R^2_{D\sim U}}\!+\!\frac{R^2_{D\sim U}}{1-R^2_{D\sim U}} \bigg)\! +\! f_{D\sim U}\,R_{Y\sim U\vert D} \right]\! \frac{\sigma_{Y\sim D}}{\sigma_{D}}\\
    &= f_{D\sim U}\,R_{Y\sim U\vert D} \frac{\sigma_{Y\sim D}}{\sigma_{D}} = R_{Y\sim U\vert X,Z,D}\,f_{D\sim U\vert X,Z} \,\frac{\sigma_{Y\sim D + X+ Z}}{\sigma_{D\sim X+Z}}.
\end{align*}
This establishes~\eqref{eq:objective-identification}. In order to prove the second equation~\eqref{eq:objective-iv}, we first relate the TSLS estimand $\beta_{Y\sim Z\vert X,D\sim Z\vert X}$ to the OLS estimand $\beta_{Y\sim D\vert X,Z}$. To this end, we express the former in terms of partial $R$-values, expand the expression and apply the three-variable identity with $Y\equiv Y, X \equiv Z, Z \equiv X$ and $W\equiv D$:
\begin{align*}
    \beta_{Y\sim Z\vert X,D\sim Z\vert X} &= \frac{R_{Y\sim Z\vert X}\,\sigma_{Y\sim X}}{R_{D\sim Z\vert X}\,\sigma_{D\sim X}} = \frac{1}{R_{D\sim Z\vert X}} \frac{\sigma_{Y\sim X}}{\sigma_{D\sim X}} \frac{\sqrt{1-R^2_{Y\sim Z\vert X}}}{\sqrt{1-R^2_{D\sim Z \vert X}}} f_{Y\sim Z\vert X} \sqrt{1-R^2_{D\sim Z\vert X}}\\
    &\stackrel{\text{(iii)}}{=}\! \frac{1}{R_{D\sim Z\vert X}} \frac{\sigma_{Y\sim X}}{\sigma_{D\sim X}} \frac{\sqrt{1-R^2_{Y\sim Z\vert X}}}{\sqrt{1-R^2_{D\sim Z \vert X}}} \bigg[\! R_{Y\sim D\vert X,Z}\,R_{D\sim Z\vert X} \!+\! f_{Y\sim Z\vert X,D} \sqrt{1-R^2_{Y\sim D\vert X,Z}}\bigg] \\
    &= \frac{R_{Y\sim D\vert X,Z}\,\sigma_{Y\sim X+Z}}{\sigma_{D\sim X+Z}} + \frac{f_{Y\sim Z \vert X, D}}{R_{D\sim Z \vert X}}\frac{\sigma_{Y\sim D+X+Z}}{\sigma_{D\sim X+Z}} \\
    &= \beta_{Y\sim D\vert X,Z} + \frac{f_{Y\sim Z \vert X, D}}{R_{D\sim Z \vert X}}\frac{\sigma_{Y\sim D+X+Z}}{\sigma_{D\sim X+Z}}.
\end{align*}
Expressing the difference of the TSLS estimand and the causal effect as telescoping sum and applying Proposition~\ref{lem:objective}, we find the desired result
\begin{align*}
    \beta_{Y\sim Z\vert X,D\sim Z\vert X} - \beta &=
    (\beta_{Y\sim Z\vert X,D\sim Z\vert X} - \beta_{Y\sim D\vert X,Z}) + (\beta_{Y\sim D\vert X,Z} - \beta)\\[1ex]
    &=\left[\frac{f_{Y\sim Z \vert X, D}}{R_{D\sim Z \vert X}} + R_{Y\sim U\vert X,Z,D}\,f_{D\sim U\vert X,Z}\right]\frac{\sigma_{Y\sim D+X+Z}}{\sigma_{D\sim X+Z}}.\hfill\qedhere
\end{align*}
\end{proof}

\subsection{Multiple Unmeasured Confounders}
\label{app:multiple-confounders}

\begin{proof}[\textbf{Proof of Proposition \ref{lem:objective-multi}}]
Analogously to the proof of Proposition~\ref{lem:objective}, we only indicate partialing out $(X,Z)$ in the estimands and drop the $(X,Z)$-dependence in the other quantities for ease of notation. We define the vector $\lambda$ as follows
\begin{equation*}
    \lambda = \var(U^{\perp D})^{-1}\cov(U^{\perp D},Y^{\perp D}).
\end{equation*}
It equals the regression coefficients of $U$ in the linear model $Y\sim D + X + Z + U$.
In order to reduce the number of dimensions of $U$, we introduce a new
random variable $U^* := \lambda^T U$. Since it captures all linear
influence of $U$ on $Y$, the estimands $\beta_{Y\sim D\vert X,U}$  and
$\beta_{Y\sim D\vert X,U^*}$ are equal. To formally prove this result, we let $A$ denote either $Y$ or $D$ and show that $A^{\perp U^*} = A^{\perp U}$. By definition of
$\lambda$ and some algebraic manipulations we derive
\begin{align*}
    A^{\perp U^*} &= A - (U^*)^T \var(U^*)^{-1} \cov(U^*, A)\\
    &=\begin{multlined}[t]
    A - U^T \var(U^{\perp D})^{-1} \cov(U^{\perp D},Y^{\perp D}) \Big[\var(U^{\perp D})^{-1} \cov(U^{\perp D},Y^{\perp D}) \Big]^{-1}
    \var(U)^{-1} 
    \\
    \times \Big[\var(U^{\perp D})^{-1}
    \cov(U^{\perp D},Y^{\perp D}) \Big]^{-T}
    \cov(U^{\perp D}, Y^{\perp D})^T \var(U^{\perp D})^{-T} \cov(U,A)
    \end{multlined}\\
    &= A - U^T \var(U)^{-1} \cov(U,A) = A^{\perp U}.
\end{align*}
Choosing $Y$ and $D$ for $A$, we get
\begin{equation*}
    \beta_{Y\sim D\vert X,U^*} = \frac{\cov(Y^{\perp U^*}, D^{\perp U^*})}{\var(D^{\perp U^*})}=
    \frac{\cov(Y^{\perp U}, D^{\perp U})}{\var(D^{\perp U})} = \beta_{Y\sim D\vert X,U}.
\end{equation*}
Since $U^*$ is one-dimensional, we can use Proposition~\ref{lem:objective} to find a precise characterization for the difference between the OLS estimand that does not and does adjust for $U$:
\begin{equation}\label{eq:app-proof-multi}
    \beta_{Y\sim D\vert X,Z} - \beta_{Y\sim D\vert X,Z,U} = \beta_{Y\sim D\vert X,Z} - \beta_{Y\sim D\vert X,Z,U^*} = R_{Y\sim U^* \vert D}\, f_{D\sim U^*} \frac{\sigma_{Y\sim D}}{\sigma_D}.
\end{equation}
Moreover, the explanatory capabilities of $U$ and $U^*$ for $Y$ are
identical. According to Lemma \ref{lem:proj-HS} (iii), we infer
\begin{equation*}
    Y^{\perp D,U} = Q_{(D,U)} Y = Q_{D^{\perp U}} Q_U Y = Q_{D^{\perp U^*}} Y^{\perp U^*} = Y^{\perp D,U^*},
\end{equation*}
which yields
\begin{equation*}
    R^2_{Y\sim U\vert D} = 1- \frac{\var(Y^{\perp D,U})}{\var(Y^{\perp D})} = 1- \frac{\var(Y^{\perp D,U^*})}{\var(Y^{\perp D})} = R^2_{Y\sim U^*\vert D}.
\end{equation*}
The new random variable $U^*$ fully captures the effect of $U$ on $Y$
but does not capture the entire effect of $U$ on $D$ due to the
reduced dimension, i.e.\ $R^2_{D\sim U} \geq
R^2_{D\sim U^*}$. To prove this result, we rewrite $D^{\perp U}$ using
Lemma \ref{lem:proj-HS} (iii) as follows
\begin{equation*}
    D^{\perp U} = Q_{ P_{U^*} W + Q_{U^*} U } D = Q_{ (U^* ,Q_{U^*} U) } D = Q_{U^{\perp U^*}} D^{\perp U^*}.
\end{equation*}
Based on this equation, Lemma \ref{lem:proj-HS} (iv) yields the inequality
\begin{equation*}
    \var(D^{\perp U}) \leq \var(Q_{U^{\perp U^*}} D^{\perp U^*}) + \var(P_{U^{\perp U^*}} D^{\perp U^*}) = \var(D^{\perp U^*}),
\end{equation*}
which implies
\begin{equation*}
    R^2_{D\sim U} = 1 - \frac{\var(D^{\perp U})}{ \var(D)} \geq 1 - \frac{\var(D^{\perp U^*})}{ \var(D)} = R^2_{D\sim U^*}.
\end{equation*}
Returning to (\ref{eq:app-proof-multi}), we use the equality and
inequality derived for the $R^2$-values concerning $U^*\rightarrow Y$
and $U^* \rightarrow D$, respectively. Since $f^2$ is a monotone
transformation of $R^2$, we have
\begin{equation*}
    \lvert \beta_{Y\sim D\vert X,Z} - \beta_{Y\sim D\vert X,Z,U}\rvert^2 \leq R^2_{Y\sim U \vert D,X,Z}\,f^2_{D \sim U \vert X,Z}
    \frac{\sigma^2_{Y\sim D+X+Z}}{\sigma^2_{D\sim X+Z}}.\hfill \qedhere
\end{equation*}
\end{proof}

In presence of multiple unmeasured confounders, finding a precise and
interpretable charaterization of the difference $\beta_{Y\sim D\vert X, Z} - \beta_{Y\sim D\vert X,Z,U}$ is a challenging task. For instance, when $U$ is $l$-dimensional, one could repeatedly apply Proposition~\ref{lem:objective} to derive the following telescoping series:
\begin{multline*}
\beta_{Y\sim D\vert X,Z} - \beta_{Y\sim D\vert X,Z,U}
= \sum_{j=1}^l \beta_{Y\sim D\vert X,Z,U_{[j-1]}} - \beta_{Y\sim D\vert X,Z,U_{[j]}}\\
  = \sum_{j=1}^{l} R_{Y\sim U_{j}\vert X,Z,D,U_{[j-1]}}\,f_{D\sim U_{j}\vert X,Z,U_{[j-1]}}
  \sqrt{\frac{1-R^2_{Y\sim U_{[j-1]}\vert X,Z,D}}{1-R^2_{D \sim U_{[j-1]}\vert X,Z}}}
  \,\frac{\sigma_{Y\sim X+Z+D}}{\sigma_{D\sim X+Z}},
\end{multline*}
where $[j] := \{1,\ldots,j\}$ and $[0] := \emptyset$. However, this characterization requires $2l$ sensitivity parameters that need to be bounded; moreover, the sensitivity parameters are not symmetric in the set of partialed out variables which impedes their interpretation.

This issue can be circumvented under the additional assumption that the components of $U$ are partially uncorrelated given $(X,Z)$, i.e.\ $R_{U_i \sim U_j\vert X,Z}=0$ for all $1\leq i < j\leq l$. This is justified by the following result which is closely related to Wright's path analysis. Our proof, however, only relies on the algebraic relationships of the $R^2$-calculus and does not consult the underlying DAG.

\begin{lemma}\label{lem:independent-expansion}
Assume the setting of Proposition~\ref{lem:objective-multi} and further suppose that all components of $U$ are partially uncorrelated given $(X,Z)$. Then,
\begin{equation}\label{eq:independent-expansion}
    \beta_{Y\sim D\vert X,Z} - \beta_{Y\sim D\vert X,Z,U}
    = \sum_{j=1}^l \beta_{Y\sim U_j\vert X,Z,D,U_{-j}}\,\beta_{U_j\sim D\vert X,Z},
\end{equation}
where $U_{-j} = (U_1,\ldots,U_{j-1},U_{j+1},\ldots, U_l)$.
\end{lemma}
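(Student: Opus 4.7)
The plan is to bypass the $R^2$-calculus and argue directly from orthogonality of OLS residuals. After partialing out $(X,Z)$ throughout and writing $\tilde Y, \tilde D, \tilde U_1, \ldots, \tilde U_l$ for the residuals, I would invoke the Frisch--Waugh--Lovell theorem to identify $\beta_{Y\sim D\vert X,Z,U}$ and $\beta_{Y\sim U_j\vert X,Z,D,U_{-j}}$ as, respectively, the coefficients of $\tilde D$ and of $\tilde U_j$ in the OLS projection of $\tilde Y$ onto the span of $(\tilde D, \tilde U_1, \ldots, \tilde U_l)$. Likewise $\beta_{Y\sim D\vert X,Z} = \cov(\tilde Y, \tilde D)/\var(\tilde D)$ and $\beta_{U_j\sim D\vert X,Z} = \cov(\tilde U_j, \tilde D)/\var(\tilde D)$.

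With these identifications in hand, I would write the full projection as
\begin{equation*}
\tilde Y \;=\; \beta_{Y\sim D\vert X,Z,U}\,\tilde D \;+\; \sum_{j=1}^l \beta_{Y\sim U_j\vert X,Z,D,U_{-j}}\,\tilde U_j \;+\; \epsilon,
\end{equation*}
where the residual $\epsilon$ is orthogonal to $\tilde D$ and to each $\tilde U_j$. Taking the covariance with $\tilde D$ of both sides eliminates the $\epsilon$ contribution and produces
\begin{equation*}
\cov(\tilde Y, \tilde D) \;=\; \beta_{Y\sim D\vert X,Z,U}\,\var(\tilde D) \;+\; \sum_{j=1}^l \beta_{Y\sim U_j\vert X,Z,D,U_{-j}}\,\cov(\tilde U_j, \tilde D).
\end{equation*}
Dividing through by $\var(\tilde D)$ and substituting the three regression-coefficient formulas above yields exactly the claimed identity after rearrangement.

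The only technical prerequisite to check is that the OLS projection onto $(\tilde D, \tilde U_1, \ldots, \tilde U_l)$ is well-defined, i.e.\ that the covariance matrix of these residuals is invertible; this is guaranteed by positive definiteness of the covariance matrix of $(V,U)$. One remark worth flagging is that the argument above does not in fact use the partial orthogonality assumption $R_{U_i\sim U_j\vert X,Z}=0$ for $i\neq j$: the identity is a purely algebraic consequence of OLS orthogonality. That assumption is included because it is what makes each $\beta_{U_j\sim D\vert X,Z}$ interpretable as the ``pure'' partial effect of $U_j$ on $D$, so that the decomposition takes on the flavour of Wright's path analysis. A proof more in the spirit of the rest of the paper would instead start from the telescoping sum stated in the text and reduce each summand to the symmetric form using orthogonal additivity (rule [ii]) and reduction of partial correlation (rule [v]); that route genuinely uses the assumption but is algebraically considerably heavier, which is why I would prefer the FWL-based route above.
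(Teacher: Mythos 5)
Your proof is correct, and it takes a genuinely different route from the paper's. The paper proves the identity inside its Hilbert-space/$R^2$-calculus framework: it first decomposes $Y = Y^{\perp D,U} + P_{D^{\perp U}}Y + \sum_j P_{U_j}Y$, which already requires the partial uncorrelatedness of the $U_j$ (to split $P_U$ into $\sum_j P_{U_j}$), obtains the bias in terms of the \emph{marginal} coefficients $\cov(Y,U_j)/\var(U_j)$, and then spends the bulk of the proof converting these into the \emph{partial} coefficients $\beta_{Y\sim U_j\vert X,Z,D,U_{-j}}$ via repeated applications of rules [ii]--[vi] -- each of which again invokes $R_{U_i\sim U_j\vert X,Z}=0$. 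Your FWL argument reaches the partial coefficients directly: projecting $\tilde Y$ onto $\mathrm{span}(\tilde D,\tilde U_1,\dots,\tilde U_l)$ and taking the covariance with $\tilde D$ is the standard multivariate omitted-variable-bias identity, and it is indeed a two-line computation once the coefficient identifications are in place. Your flagged observation is also correct and worth emphasizing: the identity \eqref{eq:independent-expansion} is a purely algebraic consequence of residual orthogonality and holds \emph{without} the assumption $R_{U_i\sim U_j\vert X,Z}=0$ (a quick check with correlated $U_1,U_2$ confirms this); the assumption earns its keep only in the subsequent two-confounder proposition, where the paper needs $R_{U_j\sim U_{-j}\vert X,Z,D}=-f_{D\sim U_j\vert X,Z}f_{D\sim U_{-j}\vert X,Z}$ and similar simplifications to express everything through the sensitivity parameters $R_j, f_j$. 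What your route buys is brevity and generality; what the paper's route buys is that every step stays within the $R^2$-calculus it has built (so the lemma doubles as a demonstration of that calculus) and that the intermediate marginal-coefficient form makes the connection to Wright's path analysis visible. The one technical point you correctly attend to -- invertibility of the covariance matrix of the residuals -- is guaranteed by the standing positive-definiteness assumption, so nothing is missing.
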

\begin{proof}
For ease of notation, we only indicate partialing out $(X,Z)$ in the estimands and drop the $(X,Z)$-dependence in the other quantities.

Due to the assumption of partial uncorrelatedness of the components of $U$ and Lemma \ref{lem:proj-HS}~(ii), we can decompose $Y$ as follows
\begin{equation*}
    Y = Y^{\perp D,U} + P_{D^{\perp U}} Y + \sum_{j=1}^l P_{U_j} Y.
\end{equation*}
Plugging this relationship into the definition of $\beta_{Y\sim D\vert X,Z}$, using linearity of the covariance and the formula for projections on a one-dimensional space \eqref{eq:1d-proj} yields
\begin{align*}
    \beta_{Y\sim D\vert X,Z} &= \frac{\cov(Y,D)}{\var(D)} = 0 +
    \frac{\cov(P_{D^{\perp U}} Y, D)}{\var(D)} + \sum_{j=1}^l \frac{\cov(P_{U_j} Y, D)}{\var(D)}\\
    &= \frac{1}{\var(D)}\left[\cov\bigg(\frac{\cov(Y,D^{\perp U})}{\var(D^{\perp U})}\, D^{\perp U}, D\bigg) + \sum_{j=1}^l \cov\bigg(\frac{\cov(Y,U_j)}{\var(U_j)}\,U_j,D\bigg)\right]\\
    &= \frac{\cov(Y^{\perp U},D^{\perp U})}{\var(D)} + \sum_{j=1}^l \frac{\cov(Y,U_j)}{\var(U_j)} \frac{\cov(D,U_j)}{\var(D)}\\
    &= \beta_{Y\sim D\vert X,Z,U} \frac{\sigma^2_{D\sim U}}{\sigma^2_D} 
    + \sum_{j=1}^l R_{Y\sim U_j} \frac{\sigma_Y}{\sigma_{U_j}}\, \beta_{U_j\sim D}.
\end{align*}
By applying the definition of the $R^2$-value, we derive
\begin{equation*}
\beta_{Y\sim D\vert X,Z} - \beta_{Y\sim D\vert X,Z,U} =
-\beta_{Y\sim D\vert X,Z,U}\, R^2_{D\sim U} + \sum_{j=1}^l R_{Y\sim U_j} \frac{\sigma_Y}{\sigma_{U_j}} \, \beta_{U_j\sim D}.
\end{equation*}
Next, we use rule [ii] of the $R^2$-calculus -- independent additivity -- on $R^2_{D\sim U}$ and rewrite $\beta_{Y\sim D\vert X,Z,U}$ in terms of $R$-values and $\sigma$-values, i.e.\ standard deviations:
\begin{align*}
    \beta_{Y\sim D\vert X,Z} - \beta_{Y\sim D\vert X,Z,U}
    &\stackrel{\text{[ii]}}{=} 
    - R_{Y\sim D\vert U} \frac{\sigma_{Y\sim U}}{\sigma_{D\sim U}}
    \sum_{j=1}^l R^2_{D\sim U_j} + \sum_{j=1}^l R_{Y\sim U_j}  \frac{\sigma_Y}{\sigma_{U_j}} \, \beta_{U_j\sim D}\\
    &= \sum_{j=1}^l \beta_{U_j\sim D}\bigg[R_{Y\sim U_j}  \frac{\sigma_Y}{\sigma_{U_j}} - \frac{\sigma_D}{R_{D\sim U_j}\sigma_{U_j}} R_{Y\sim D\vert U}\frac{\sigma_{Y\sim U}}{\sigma_{D\sim U}}R^2_{D\sim U_j}\bigg]
\end{align*}
In order to extract the factor $\sigma_{Y\sim D + U_{-j}}/\sigma_{U_j\sim D+U_{-j}}$, we apply rule [iii] -- decomposition of unexplained variance -- six times and arrive at
\begin{multline}
\label{eq:indep-exp-intermediate}
    \beta_{Y\sim D\vert X,Z} - \beta_{Y\sim D\vert X,Z,U}
    \stackrel{\text{[iii]}}{=} \sum_{j=1}^l \beta_{U_j\sim D}\, \frac{\sigma_{Y\sim D+U_{-j}}}{\sigma_{U_j\sim D+U_{-j}}}
    \Bigg[R_{Y\sim U_j} \sqrt{\frac{1-R^2_{U_j\sim D + U_{-j}}}{1-R^2_{Y\sim D+U_{-j}}}}\\
    - R_{Y\sim D\vert U}\, R_{D\sim U_j} \sqrt{\frac{(1-R^2_{U_j\sim D + U_{-j}})(1-R^2_{Y\sim U_j\vert U_{-j}})}{(1-R^2_{D\sim U})(1-R^2_{Y\sim D\vert U_{-j}})}}\,\Bigg].
\end{multline}
We concentrate on the term in brackets, denoted by $T_j$. Invoking rule [v] -- reduction of partial correlation -- and the (conditional) independence assumption, we infer
\begin{gather*}
    R_{Y\sim U_j} \stackrel{\text{[v]}}{=} R_{Y\sim U_j\vert U_{-j}} \sqrt{1-R^2_{Y\sim U_{-j}}},\quad
    R_{D\sim U_j} \stackrel{\text{[v]}}{=} R_{D\sim U_j\vert U_{-j}} \sqrt{1-R^2_{D\sim U_{-j}}},\\
    R^2_{U_j\sim D + U_{-j}} = R^2_{U_j\sim D\vert U_{-j}}.
\end{gather*}
We insert these relationships into the expression of $T_j$ and simplify it via rule [iii]. Then, we apply rule [iv] -- recursion of partial correlation -- on $R_{Y\sim D\vert U}$ and simplify the resulting expression
{\allowdisplaybreaks
\begin{align*}
    T_j &=
    \begin{multlined}[t]
    R_{Y\sim U_j\vert U_{-j}} \sqrt{1-R^2_{U_j\sim D\vert U_{-j}}} \sqrt{\frac{1-R^2_{Y\sim U_{-j}}}{1-R^2_{Y\sim D+U_{-j}}}}\\
    - R_{Y\sim D\vert U}\, R_{D\sim U_j\vert U_{-j}}\sqrt{1-R^2_{D\sim U_{-j}}} \sqrt{\frac{(1-R^2_{U_j\sim D \vert U_{-j}})(1-R^2_{Y\sim U_j\vert U_{-j}})}{(1-R^2_{D\sim U})(1-R^2_{Y\sim D\vert U_{-j}})}}
    \end{multlined}\\
    &\stackrel{\text{[iii]}}{=} R_{Y\sim U_j\vert U_{-j}} \sqrt{\frac{1-R^2_{D\sim U_{j}\vert U_{-j}}}{1-R^2_{Y\sim D\vert U_{-j}}}} - R_{Y\sim D\vert U} R_{D\sim U_j\vert U_{-j}} \sqrt{\frac{1-R^2_{Y\sim U_j\vert U_{-j}}}{1-R^2_{Y\sim D\vert U_{-j}}}}\\
    &\stackrel{\text{[iv]}}{=} \! R_{Y\sim U_j\vert U_{-j}} \sqrt{\frac{1-R^2_{D\sim U_{j}\vert U_{-j}}}{1-R^2_{Y\sim D\vert U_{-j}}}} \!-\! R_{D\sim U_j\vert U_{-j}} \frac{R_{Y\sim D\vert U_{-j}}\!\! -\! R_{Y\sim U_j\vert U_{-j}}R_{D\sim U_j\vert U_{-j}}}{\sqrt{1-R^2_{Y\sim D\vert U_{-j}}} \sqrt{1-R^2_{D\sim U_j\vert U_{-j}}}}\\
    &= \frac{R_{Y\sim U_j\vert U_{-j}} (1-R^2_{D\sim U_{j}\vert U_{-j}}) - R_{Y\sim D\vert U_{-j}}R_{D\sim U_j\vert U_{-j}} - R_{Y\sim U_j\vert U_{-j}}R^2_{D\sim U_j\vert U_{-j}}}{\sqrt{1-R^2_{Y\sim D\vert U_{-j}}} \sqrt{1-R^2_{D\sim U_j\vert U_{-j}}}} \\
    &= \frac{R_{Y\sim U_j\vert U_{-j}} - R_{Y\sim D\vert U_{-j}}\, R_{U_j\sim D\vert U_{-j}} }{\sqrt{1-R^2_{Y\sim D\vert U_{-j}}} \sqrt{1-R^2_{U_j\sim D\vert U_{-j}}}}
    = R_{Y\sim U_j\vert D, U_{-j}}.
    \end{align*}
}
Returning to equation \eqref{eq:indep-exp-intermediate}, we plug in $T_j = R_{Y\sim U_j\vert D,U_{-j}}$ and thus finish the proof
\begin{equation*}
     \beta_{Y\sim D\vert X,Z} \!-\! \beta_{Y\sim D\vert X,Z,U} \!=\! \sum_{j=1}^l \beta_{U_j\sim D}\, \frac{\sigma_{Y\sim D+U_{-j}}}{\sigma_{U_j\sim D+U_{-j}}} R_{Y\sim U_j\vert D,U_{-j}}
     \!=\! \sum_{j=1}^l \beta_{Y\sim U_j\vert X,Z D,U_{-j}}\, \beta_{U_j\sim X,Z,D}.
     \hfill\qedhere
\end{equation*}
\end{proof}

Lemma \ref{lem:independent-expansion} helps us express the bias of the OLS estimand in terms of partial $R$-values which serve as sensitivity parameters. Whether these are intuitive, depends on the causal structure of the underlying DAG. In the case of two partially uncorrelated unmeasured variables $U_1$ and $U_2$ which confound or mediate $\beta$ -- the direct effect of $D$ on $Y$ --, the sensitivity parameters $(R_{D\sim U_1\vert X,Z}, R_{D\sim U_2\vert X,Z})$ and $(R_{Y\sim U_1\vert X,Z,D,U_2}, R_{Y\sim U_2\vert X,Z,D, U_1})$ are indeed intuitive. The former tuple targets the dependence between $D$ and $U$, the latter tuple focuses on the direct effects of $U$ on $Y$ regressing out the remaining variables.


The following proposition demonstrates how Lemma~\ref{lem:independent-expansion} can be applied. We identify the bias of the OLS estimand in terms of the intuitive sensitivity parameters.
\begin{proposition}
In the setting of Proposition~\ref{lem:objective-multi}, let $U = (U_1,U_2)$ be a two-dimensional random vector and assume $R_{U_1\sim U_2\vert X,Z}=0$. Then,
\begin{equation*}
    \beta = \beta_{Y\sim D\vert X,Z} - \sum_{j=1}^2
  \frac{R_j\,f_j}{\sqrt{1-f^2_j\, f^2_{-j}+\left(R_{-j}\sqrt{(1-R^2_j)\big/(1-R^2_{-j})}
  -R_j\,f_j\, f_{-j}\right)^{2}}} \frac{\sigma_{Y\sim X + Z + D}}{\sigma_{D\sim X + Z}},
\end{equation*}
where $R_j$ and $f_j$ abbreviate $R_{Y\sim U_{j}\vert X,Z,D,U_{-j}}\!$ and $f_{D\sim U_j\vert X,Z}$, respectively, for ${j\!\in\!\{1,2\}}$.
\end{proposition}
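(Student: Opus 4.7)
The plan is to apply Lemma~\ref{lem:independent-expansion} and then reduce each summand to the claimed expression via the $R^2$-calculus together with the assumption $R_{U_1\sim U_2\vert X,Z}=0$. Writing $\rho_j := R_{D\sim U_j\vert X,Z}$, each term on the right-hand side of \eqref{eq:independent-expansion} expands as
\[R_j\,\rho_j\cdot\frac{\sigma_{Y\sim X+Z+D+U_{-j}}}{\sigma_{U_j\sim X+Z+D+U_{-j}}}\cdot\frac{\sigma_{U_j\sim X+Z}}{\sigma_{D\sim X+Z}}.\]
Applying rule [iii] (decomposition of unexplained variance) and using $R^2_{U_j\sim U_{-j}\vert X,Z}=0$, I would obtain $\sigma_{U_j\sim X+Z} = \sigma_{U_j\sim X+Z+U_{-j}}$, so that
\[\frac{\sigma_{U_j\sim X+Z}}{\sigma_{U_j\sim X+Z+D+U_{-j}}} = \frac{1}{\sqrt{1-R^2_{D\sim U_j\vert X,Z,U_{-j}}}},\qquad \frac{\sigma_{Y\sim X+Z+D+U_{-j}}}{\sigma_{Y\sim X+Z+D}} = \sqrt{1-R^2_{Y\sim U_{-j}\vert X,Z,D}}.\]

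Next, two applications of rule [iv] (recursion of partial correlation), combined with $R_{U_j\sim U_{-j}\vert X,Z}=0$, yield $R_{D\sim U_j\vert X,Z,U_{-j}} = \rho_j/\sqrt{1-\rho^2_{-j}}$ and $R_{U_j\sim U_{-j}\vert X,Z,D} = -f_j f_{-j}$. A brief algebraic manipulation of the first identity gives $(1-R^2_{D\sim U_j\vert X,Z,U_{-j}})/(1-\rho^2_j) = 1-f^2_j f^2_{-j}$, hence
\[\frac{\rho_j}{\sqrt{1-R^2_{D\sim U_j\vert X,Z,U_{-j}}}} = \frac{f_j}{\sqrt{1-f^2_j f^2_{-j}}}.\]

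Finally, rule [vi] (three-variable identity) with $Y\equiv Y$, $X\equiv U_{-j}$, $W\equiv U_j$, partialing out $(X,Z,D)$, together with $R_{U_j\sim U_{-j}\vert X,Z,D}=-f_j f_{-j}$, leads to
\[A_j := R_{-j}\sqrt{(1-R^2_j)/(1-R^2_{-j})} - R_j f_j f_{-j} = f_{Y\sim U_{-j}\vert X,Z,D}\sqrt{1-f^2_j f^2_{-j}}.\]
Squaring and solving gives $1-R^2_{Y\sim U_{-j}\vert X,Z,D} = (1-f^2_j f^2_{-j})/(1-f^2_j f^2_{-j}+A^2_j)$. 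Plugging this together with the reductions above into the summand collapses the factors to $R_j f_j/\sqrt{1-f^2_j f^2_{-j}+A^2_j}\cdot \sigma_{Y\sim X+Z+D}/\sigma_{D\sim X+Z}$, which matches the expression claimed in the proposition. The hard part will be the bookkeeping: the sensitivity parameters $R_j$ and $f_j$ involve several distinct conditioning sets, and the derivation must pass between them through rules [iii], [iv], and [vi]; the assumption $R_{U_1\sim U_2\vert X,Z}=0$ is what ensures the intermediate expressions collapse to clean formulae in $R_j$ and $f_j$ rather than produce a much messier answer.
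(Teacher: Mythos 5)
Your proposal is correct and follows essentially the same route as the paper's own proof: invoke Lemma~\ref{lem:independent-expansion}, expand each summand into partial $R$-values and residual standard deviations, use rules [iii]--[v] with $R_{U_1\sim U_2\vert X,Z}=0$ to obtain $R_{U_j\sim U_{-j}\vert X,Z,D}=-f_jf_{-j}$ and collapse $\rho_j/\sqrt{1-R^2_{D\sim U_j\vert X,Z,U_{-j}}}$ to $f_j/\sqrt{1-f_j^2f_{-j}^2}$, and finally apply the three-variable identity [vi] together with $\sqrt{1-R^2}=1/\sqrt{1+f^2}$ to eliminate $R^2_{Y\sim U_{-j}\vert X,Z,D}$. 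The intermediate bookkeeping is organized slightly differently (you condition on $U_{-j}$ earlier than the paper does), but every key identity matches the paper's derivation.
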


\begin{proof}
Analogously to previous proofs, we only indicate partialing out $(X,Z)$ in the estimands. In order to rewrite and simplify the the difference of the OLS estimand and the causal effect, we invoke Lemma~\ref{lem:independent-expansion} and the rule on decomposition of unexplained variance
\begin{align}
\beta_{Y\sim D\vert X,Z} - \beta_{Y\sim D\vert X,Z,U} &= \sum_{j=1}^2 R_{Y\sim U_j\vert D,U_{-j}} \frac{\sigma_{Y \sim D + U_{-j}}}{\sigma_{U_j \sim D + U_{-j}}}
R_{D \sim U_j} 
\frac{\sigma_{U_j}}{\sigma_D}
\nonumber\\
&\stackrel{\text{[iii]}}{=} \sum_{j=1}^2 R_{Y\sim U_j\vert D,U_{-j}} R_{D \sim U_j}\sqrt{\frac{1-R^2_{Y\sim U_{-j}\vert D}}{1-R^2_{U_j \sim D + U_{-j}}}}
\frac{\sigma_{Y\sim D}}{\sigma_D}.\label{eq:app-proof-2conf1}
\end{align}
Leveraging the assumption $R_{U_1\sim U_2} = 0$, we rewrite $R_{U_j \sim U_{-j}\vert D}$ via the recursive partial correlation formula [iv]; moreover, we use the decomposition of unexplained variance [iii] on $1-R^2_{U_i \sim D + U_{-i}}$ as follows
{\allowdisplaybreaks
\begin{gather}
 R_{U_j \sim U_{-j}\vert D} \stackrel{\text{[iv]}}{=} \frac{R_{U_j\sim U_{-j}} - R_{U_j\sim D}\, R_{U_{-j}\sim D}}{\sqrt{1-R^2_{U_j \sim D}}\sqrt{1-R^2_{U_{-j}\sim D}}} = -f_{D\sim U_j}\,f_{D\sim U_{-j}},\label{eq:app-proof-2conf2}\\
 1-R^2_{U_i \sim D + U_{-i}} \stackrel{\text{[iii]}}{=} (1-R^2_{D\sim U_i})(1-R^2_{U_i\sim U_{-i}\vert D}).\nonumber
\end{gather}
}
Inserting these relationships into \eqref{eq:app-proof-2conf1}, we find
\begin{equation}\label{eq:app-proof-2conf3}
\begin{aligned}
    \beta_{Y\sim D\vert X,Z} - \beta_{Y\sim D\vert X,Z,U} &= \sum_{j=1}^2 R_{Y\sim U_j\vert D,U_{-j}}\,f_{D \sim U_j}\sqrt{\frac{1-R^2_{Y\sim U_{-j}\vert D}}{1-f^2_{D\sim U_1}f^2_{D\sim U_2}}}
\frac{\sigma_{Y\sim D}}{\sigma_D}\\
    &= \sum_{j=1}^2 R_j\,f_j \sqrt{\frac{1-R^2_{Y\sim U_{-j}\vert D}}{1-f^2_1 f^2_2}}
\frac{\sigma_{Y\sim D}}{\sigma_D}.
\end{aligned}
\end{equation}
Lastly, we aim to express $\sqrt{1-R^2_{Y\sim U_{-j}\vert D}}$ in terms of the other sensitivity parameters. To this end, we use the three-variable identity [vi] with $Y \equiv Y$, $X \equiv U_{-j}$, $W \equiv U_{j}$ and $Z \equiv D$, where we replace $R_{U_j\sim U_{-j}\vert D}$ according to \eqref{eq:app-proof-2conf2}:
\begin{equation*}
\frac{R_{-j}}{\sqrt{1-R^2_{-j}}}\sqrt{1-R^2_j}-f_1 f_2\, R_j \stackrel{\text{[vi]}}{=} f_{Y\sim U_{-j}\vert D}\sqrt{1-f^2_1 f^2_2}.
\end{equation*}
By definition, the identity $\sqrt{1-R^2} = 1/\sqrt{1+f^2}$ holds true for any (partial) $R^2$ and its corresponding $f^2$. Thus, we get
\begin{equation*}
    \sqrt{1-R^2_{Y\sim U_{-j}\vert D}} =
    \left[1+ \left(\frac{R_{-j}}{\sqrt{1-R^2_{-j}}}\sqrt{1-R^2_j}-f_1 f_2\, R_j\right)^2\!\!\bigg/\Big(1-f_1^2\, f_2^2\Big) \right]^{-1/2}.
\end{equation*}
Substituting the $\sqrt{1-R^2_{Y\sim U_{-j}\vert D}}$ term in (\ref{eq:app-proof-2conf3}) for the expression above proves the form of the second summand that was required.
\hfill
\end{proof}

\section{Details on the Sensitivity Model}\label{app:sens-model}

\subsection{Relation between Regression and IV Sensitivity Parameters}\label{sec:reg-iv-relationship}

In order to derive the equality constraint \eqref{eq:reg-iv-connection}, we apply the three-variable identity from the $R^2$-calculus twice: First, we set $Y \equiv
Y$, $X \equiv Z$, $W \equiv U$ and $Z \equiv (X,D)$; second, $Y \equiv U$, $X \equiv Z$, $W \equiv D$ and $Z \equiv X$. This yields

\begin{equation*}
    \begin{aligned}
        f_{Y\sim Z \vert X, U, D}\sqrt{1-R^{2}_{Y\sim U \vert X,Z,D}}
  &= f_{Y\sim Z\vert X,D}\sqrt{1-R^{2}_{Z\sim U \vert X,D}} - R_{Y\sim
    U\vert X, Z, D} R_{Z\sim U\vert X,D},\\
    f_{Z\sim U \vert X,D} \sqrt{1-R^{2}_{D\sim U\vert X,Z}}
  &= f_{Z \sim U\vert X} \sqrt{1-R^{2}_{D\sim Z\vert X}} - R_{D\sim Z\vert X} R_{D\sim U\vert X, Z}.
    \end{aligned}
\end{equation*}

This constraint suffices to recover point-identification of $\beta$ if $Z$ is a valid instrument. Indeed, if we set $R_{Z\sim U\vert X} = 0$ and $R_{Y\sim Z\vert X, U, D} = 0$ in the equations above and simplify them, we get the relationship
\begin{equation*}
    f_{D\sim U\vert X, Z}\,R_{Y\sim U\vert X, Z, D} = - \frac{f_{Y\sim Z \vert X, D}}{R_{D\sim Z \vert X}}.
\end{equation*}
Plugging this result into equation \eqref{eq:objective-iv}, we directly obtain $\beta = \beta_{D\sim Z\vert X, Y\sim Z\vert X}$.


\subsection{Optimization Constraints}\label{sec:optimization-contraints}

\begin{table}[t!]
    \centering
    \caption{Equality and inequality constraints stemming from comparative bounds.}
    \label{tab:comparative-constraints}
    \vspace{0.3cm}
    \begin{adjustbox}{max width=1.1\textwidth,center}
\begin{tabular}{ccr}
    \toprule
    Edge & Optimization constraint &\\
    \midrule
    \circled{1}   &  $\displaystyle R^2_{D \sim U \vert X,Z}
  \leq b_{U\! D} \frac{R^2_{D\sim \xd_J \vert \xt,\xd_I,Z}}{1-R^2_{D\sim \xd_{I^c}\vert \xt,\xd_I,Z}}$ & \addtag[eq:reg-cond-d2] \\[6ex]
    \circled{2}  & $\displaystyle R_{Y\sim U \vert X,Z,D} = \frac{R_{Y\sim U \vert X,Z} - R_{Y\sim D\vert X,Z}\, R_{D\sim U \vert X,Z}}{\sqrt{1-R^{2}_{Y\sim D\vert X,Z}}\sqrt{1-R^{2}_{D \sim U \vert X,Z}}}$ & \addtag[eq:reg-con-y22]\\[6ex]
    & $\displaystyle R^2_{Y\sim U \vert X,Z} \leq b_{UY}\frac{R^2_{Y\sim \xd_J\vert \xt, \xd_I,Z}}{1-R^2_{Y\sim \xd_{I^c}\vert \xt,\xd_I,Z}}$ & \addtag[eq:reg-con-y21] \\[6ex]
    & $\begin{multlined}
R_{Y\sim U \vert X,Z} = \frac{1}{\sqrt{1-R^{2}_{Y\sim \xd_{I^{c}}\vert \xt,\xd_{I},Z}}} \bigg[R_{Y\sim D
   \vert \xt,\xd_{I},Z}R_{D\sim U \vert X,Z}\sqrt{1-R^{2}_{D\sim \xd_{I^{c}}\vert \xt,\xd_{I},Z}} \\[-4ex]
+ R_{Y\sim U
   \vert \xt,\xd_{I},Z,D}\sqrt{1-R^{2}_{Y\sim D\vert
   \xt,\xd_{I},Z}}\sqrt{1-R^2_{D\sim U\vert X, Z}(1-R^{2}_{D\sim \xd_{I^{c}}\vert\xt, \xd_{I},Z})}\bigg]
  \end{multlined}$ & \addtag[eq:reg-con-y32] \\[9ex]
    \circled{3}  & $\displaystyle R^2_{Z\sim U\vert X} \leq b_{U\!Z}\, R^2_{Z\sim \xd_j\vert \xt,\xd_{-j}} \frac{1-R^2_{Z\sim \xd_j\vert \xt,\xd_{-j}}}{1- b_{U\!Z}\, R^4_{Z\sim \xd_j\vert \xt,\xd_{-j}}}$ & \addtag[eq:iv-con-z2]\\[6ex]
    \circled{4} & $\displaystyle \begin{multlined}
    f_{Y\sim \xd_j\vert \xt, \xd_{-j}, Z, U, D} \sqrt{1-R^2_{Y\sim U\vert X, Z, D}}
    = \bigg[f_{Y\sim \xd_j\vert \xt, \xd_{-j}, Z, D}\sqrt{1-R^2_{D\sim U\vert X,Z}} \\[-4ex]
    +R_{Y\sim U\vert X, Z, D}\,R_{D\sim \xd_j\vert \xt,\xd_{-j},Z}\,R_{D\sim U\vert X,Z} \bigg]\Big/\sqrt{1-R^2_{D\sim U\vert X,Z}(1-R^2_{D\sim \xd_j\vert \xt,\xd_{-j},Z})}
\end{multlined}$ & \addtag[eq:iv-con-ex22]\\[3ex]
    \bottomrule
    \end{tabular}
    \end{adjustbox}
\end{table}

In Table \ref{tab:constraints}, we list different sensitivity bounds that practitioners can specify. For direct constraints, practitioners need to specify a lower and an upper bound -- $B^l$ and $B^u$ -- on the values of the respective sensitivity parameters. For comparative constraints, they choose a comparison random variable (or group of random variables) and specify a positive number $b$ that relates the explanatory capability of $U$ to the chosen variable(s). While we can immediately add direct sensitivity bounds to the constraints of the optimization problem, we need to reformulate comparative bounds and possibly add equality constraints to relate them to the regression and IV sensitivity parameters. Table~\ref{tab:comparative-constraints} lists these optimization constraints emerging from the comparative bounds in Table~\ref{tab:constraints}. The derivations of the respective equalities and inequalities along with further explanations are given in the the following subsection.

If only constraints on $U\rightarrow D$ and $U\rightarrow Y$ are given, the user-specified sensitivity bounds in  Table~\ref{tab:constraints} and (additional) optimization constraints in Table~\ref{tab:comparative-constraints} can all be reformulated as bounds on the two sensitivity parameters $R_{Y\sim U\vert X,Z,D}$ and $R_{D\sim U\vert X,Z}$; this can be seen by substituting the equality into the inequality constraints. Hence, the resulting sensitivity model has two degrees of freedom. Similarly, if additionally constraints on $U\leftrightarrow U$ and/or $Z\rightarrow Y$ are specified, the sensitivity model can be parameterized in terms of $R_{Y\sim U\vert X,Z,D}$, $R_{D\sim U\vert X,Z}$ and $R_{Z\sim U\vert X}$ and is thus three-dimensional. 

In practice, we work with the formulation involving equality constraints and additional unestimable parameters, such as $R_{Y\sim U\vert X,Z}$, to keep the the algebraic expressions tractable. In spirit, this is similar to the use of slack variables in optimization.

\subsection{Constraints implied by Comparative Bounds}\label{sec:constraints-comparative}
In this subsection, we derive the optimization constraints implied by comparative sensitivity bounds which are listed in Table~\ref{tab:comparative-constraints} and comment on them. We recall the notation of the main article: We denote $[\pd]:=\{1,\ldots,\pd\}$; for $I \subseteq [\pd]$ and $\xd\in \R^{\dot{p}}$, define $\xd_I := (\xd_i)_{i\in I}$ and $I^c := [\pd]\setminus I$; lastly, $\xd_{-j} := \xd_{\{j\}^c}$ for any $j \in [\pd]$.

\subsubsection{Bound on $U \rightarrow D$}
The comparative bound
\begin{equation*}
  R^{2}_{D \sim U\vert \xt,\xd_{I},Z} \leq b_{U\!D} R^{2}_{D \sim
    \xd_{J}\vert \xt,\xd_{I},Z},~I \subset [\pd],~J \subseteq
  I^c,b_{U\! D} \geq 0
\end{equation*}
means that the unmeasured confounder $U$ can explain at most $b_{U\! D}$ times as much variance of $D$ as $\xd_J$ does, after accounting for the effect of $(\xt, \xd_I,Z)$ on $D$. For practical purposes, a good choice of the comparison sets is $J = \{j\}$ and $I = J^c$. We can relate $R_{D \sim U \vert \xt, \xd_I, Z}$ in the last bound to $R_{D \sim U \vert X,Z}$ via the $R^2$-calculus. By using $R^2_{U\sim \xd_{I^c}\vert \xt, \xd_I,Z}=0$, which follows from the assumption in \eqref{eq:xd-cond-indep}, and applying the reduction
of partial correlation with $Y \equiv U$, $X\equiv D$, $Z \equiv (\xt, \xd_I,Z)$ and $W \equiv \xd_{I^c}$, we obtain the optimization constraint \eqref{eq:reg-cond-d2}:
\begin{equation*}
  R^2_{D \sim U \vert X,Z}
  \overset{[\text{v}]}{=} \frac{R^2_{D\sim U \vert \xt, \xd_I,Z}}{1-R^2_{D\sim \xd_{I^c}\vert \xt,\xd_I,Z}}
  \leq b_{U\! D} \frac{R^2_{D\sim \xd_J \vert \xt,\xd_I,Z}}{1-R^2_{D\sim \xd_{I^c}\vert \xt,\xd_I,Z}}.
\end{equation*}

\subsubsection{Bound on $U\rightarrow Y$}
For the effect of $U$ on $Y$, we consider two types of comparative bounds depending on whether $D$ is regressed out:
\begin{equation*}
  R^2_{Y\sim U\vert \xt, \xd_I,Z} \leq b_{UY} R^2_{Y\sim \xd_J\vert \xt, \xd_I,Z},\qquad
  R^2_{Y\sim U\vert \xt, \xd_I, Z, D} \leq b_{UY} R^2_{Y\sim \xd_J\vert \xt, \xd_I,Z, D},
\end{equation*}
where $I \subset [\pd],~J \subseteq I^c,b_{U\! Y} \geq 0$. When comparing the explanatory capability of the variables $U$ and $\xd_J$, it is natural to regress out all other variables. However, regressing out $D$, a potential common child of $X$ and $U$, may introduce dependence between $U$ and $Y$; this is essentially the point made by \citet{hernan1999} in their criticism of \citet{lin}. Thus, we consider comparative bounds without and with $D$.

For the first comparative bound, we may apply rule [v] as in the previous subsection and obtain the optimization constraint \eqref{eq:reg-con-y21}:
\begin{equation*}
  R^2_{Y\sim U \vert X,Z} \stackrel{\text{[v]}}{=} \frac{R^2_{Y\sim U \vert \xt,\xd_I,Z}}{1-R^2_{Y\sim \xd_{I^c}\vert \xt,\xd_I,Z}} \leq b_{UY}\frac{R^2_{Y\sim \xd_J\vert \xt, \xd_I,Z}}{1-R^2_{Y\sim \xd_{I^c}\vert \xt,\xd_I,Z}}.
\end{equation*}
Since this inequality is not a constraint on the sensitivity parameter $R_{Y\sim U\vert X,Z,D}$, we require another constraint that links the above to the regression sensitivity parameters. To this end, we use rule [iv] of the $R^2$-calculus which yields constraint \eqref{eq:reg-con-y22}:
\begin{equation*}
  R_{Y\sim U \vert X,Z,D} \stackrel{\text{[iv]}}{=} \frac{R_{Y\sim U \vert X,Z} - R_{Y\sim D\vert X,Z}\, R_{D\sim U \vert X,Z}}{\sqrt{1-R^{2}_{Y\sim D\vert X,Z}}\sqrt{1-R^{2}_{D \sim U \vert X,Z}}}.
\end{equation*}
Hence, the first type of comparative bound implies the inequality and equality constraint above in the optimization problem.

The second type of comparative bounds, which additionally partials out $D$, involves the unknown quanitity $R_{Y\sim U\vert \xt, \xd_I, Z,D}$. In order to link it -- and thus the user-specified bound -- to the regression sensitivity parameters, we employ rule [v] -- reduction of partial correlation -- and the recursive partial correlation formula [iv] for $R_{Y\sim U \vert \xt, \xd_I,Z,D}$ and infer
\begin{multline*}
\!R_{Y\sim U \vert X,Z} \stackrel{\text{[v]}}{=}\!\!
    \frac{R_{Y\sim U \vert\xt,\xd_{I},Z}}{\sqrt{1\!-\!R^{2}_{Y\sim \xd_{I^{c}}\vert \xt,\xd_{I},Z}}}
    \!\!\stackrel{\text{[iv]}}{=}\!\! \frac{1}{\sqrt{1\!-\!R^{2}_{Y\sim \xd_{I^{c}}\vert \xt,\xd_{I},Z}}}\!
\Bigg[\!R_{Y\sim D\vert
   \xt,\xd_{I},Z}\,R_{D \sim U\vert \xt,\xd_{I},Z}\\
   + R_{Y\sim U
   \vert \xt,\xd_{I},Z,D}\sqrt{1-R^{2}_{Y\sim D\vert
   \xt,\xd_{I},Z}}\sqrt{1-R^{2}_{D \sim U\vert \xt,\xd_{I},Z}}
\Bigg].
\end{multline*}
This equation contains the unknown quantity $R_{D\sim U \vert \xt, \xd_I,Z}$ which can be expressed in terms of $R_{D\sim U\vert X,Z}$ via rule [v]
\begin{equation*}
    R_{D\sim U \vert X,Z} \stackrel{\text{[v]}}{=} \frac{R_{D\sim U \vert \xt, \xd_I,Z}}{\sqrt{1-R^2_{D\sim \xd_{I^c}\vert \xt,\xd_I,Z}}}.
\end{equation*}
Plugging this relationship into the equation above, we arrive at the optimization constraint \eqref{eq:reg-con-y32}:
\begin{multline*}
R_{Y\sim U\vert X,Z} = \frac{1}{\sqrt{1-R^{2}_{Y\sim \xd_{I^{c}}\vert \xt,\xd_{I},Z}}}
\Bigg[R_{Y\sim D\vert \xt, \xd_I,Z} \sqrt{1-R^2_{D\sim \xd_{I^c}\vert \xt,\xd_I,Z}} R_{D\sim U\vert X,Z}\\
+R_{Y\sim U\vert \xt,\xd_{I},Z,D}\sqrt{1-R^{2}_{Y\sim D\vert\xt,\xd_{I},Z}}\sqrt{1- R^{2}_{D\sim U \vert X,Z}\big(1-R^2_{D\sim \xd_{I^c}\vert \xt, \xd_I,Z}\big)}
\Bigg].
\end{multline*}
Since this equation contains not only the regression sensitivity parameters but also $R_{Y\sim U\vert X,Z}$, we add the equality constraint \eqref{eq:reg-con-y22} to the optimization problem as well.

\subsubsection{Bound on $U \leftrightarrow Z$}
The bound on the correlation between $U$ and $Z$ given by
\begin{equation*}
    R^2_{Z \sim U\vert \xt,\xd_{-j}} \leq b_{U\!Z} R^2_{Z \sim
      \xd_j\vert \xt, \xd_{-j}},~j \in [\dot{p}],~b_{U\!Z}\geq 0
\end{equation*}
is in fact equivalent to a constraint on the IV sensitivity parameter $R_{Z\sim U\vert X}$. To show this, we first relate $R_{Z \sim U\vert \xt,\xd_{-j}}$ to $R_{Z \sim U\vert X}$ via the conditional independence assumption. Using $R_{U\sim \xd_j \vert \xt, \xd_{-j},Z}=0$, which follows from the assumption $R^2_{U\sim \xd\vert \xt,Z}=0$, and recursion of partial correlation [iv], we find
\begin{gather*}
    0 = R_{U\sim \xd_j\vert \xt_I,\xd_{-j},Z} \stackrel{\text{[iv]}}{=}  \frac{R_{U\sim \xd_j\vert \xt,\xd_{-j}}-R_{Z\sim \xd_j\vert \xt,\xd_{-j}}R_{Z\sim U\vert \xt,\xd_{-j}}}{\sqrt{1-R_{Z\sim \xd_j\vert \xt,\xd_{-j}}^2}\sqrt{1-R_{Z\sim U\vert \xt,\xd_{-j}}^2}}\\*
    \Leftrightarrow\quad
    R_{U\sim \xd_j\vert \xt,\xd_{-j}}=R_{Z\sim \xd_j\vert \xt,\xd_{-j}}R_{Z\sim U\vert \xt,\xd_{-j}}.
\end{gather*}
Employing this relationship and rule [iv] again, we infer
\begin{align*}
    R_{Z\sim U\vert X} \!\stackrel{\text{[iv]}}{=}\! \frac{R_{Z\sim U\vert \xt,\xd_{-j}}\!\!\!\! -\!\! R_{Z\sim \xd_j\vert \xt,\xd_{-j}}R_{U\sim \xd_j\vert \xt,\xd_{-j}}}{\sqrt{1\!\!-\!\!R_{Z\sim \xd_j\vert \xt,\xd_{-j}}^2}\sqrt{1\!\!-\!\!R_{U\sim \xd_j\vert \xt,\xd_{-j}}^2}}
    = \frac{R_{Z\sim U\vert \xt,\xd_{-j}}\sqrt{1\!\!-\!\!R^2_{Z\sim \xd_j\vert \xt,\xd_{-j}}}}{\sqrt{1\!-\!R^2_{Z\sim \xd_j\vert \xt,\xd_{-j}}\!R_{Z\sim U\vert \xt,\xd_{-j}}^2}}.
\end{align*}
As the right-hand side above is monotone in $R_{Z\sim U\vert \xt, \xd_j}$, the user-specified bound is in fact equivalent to the optimization constraint \eqref{eq:iv-con-z2}:
\begin{equation*}
    R^2_{Z\sim U\vert X} \leq b_{U\!Z}\, R^2_{Z\sim \xd_j\vert \xt,\xd_{-j}} \frac{1-R^2_{Z\sim \xd_j\vert \xt,\xd_{-j}}}{1- b_{U\!Z}\, R^4_{Z\sim \xd_j\vert \xt,\xd_{-j}}}.
\end{equation*}

\subsubsection{Bound on $Z\rightarrow Y$}
The comparative bound on the direct effect of $Z$ on $Y$, as given by
\begin{equation*}
    R^2_{Y\sim Z\vert X,U,D} \leq b_{ZY} R^2_{Y\sim \xd_j\vert \xt,
      \xd_{-j},Z,U,D},~j \in [\dot{p}], b_{ZY} \geq 0,
\end{equation*}
 is unusual in the sense that the sets of variables that are regressed out are different in the two partial $R^2$-values. It is difficult to specify comparative bounds for the exclusion restriction as the corresponding sensitivity parameter $R_{Y\sim Z\vert X,U,D}$ partials out $U$. Therefore, we cannot directly compare $U$ to an observed covariate, e.g.\ $\xd_j$, and the right-hand side of the bound cannot be estimated. For this reason, we resort to the proposed adjustment sets because we can connect $R_{Y\sim \xd_j\vert \xt,\xd_{-j},Z,U,D}$ to the regression sensitivity parameters. To this end, we employ rule [vi] -- the three-variable identity -- with $Y \equiv Y$, $X \equiv \xd_j$, $W \equiv U$ and $Z \equiv (\xt,\xt_{-j},D)$ which yields
\begin{multline*}
    f_{Y\sim \xd_j\vert \xt, \xd_{-j}, Z, U, D} \sqrt{1-R^2_{Y\sim U\vert X, Z, D}}\\ \stackrel{\text{[vi]}}{=}
    f_{Y\sim \xd_j\vert \xt, \xd_{-j}, Z, D} \sqrt{1-R^2_{U\sim \xd_j\vert \xt, \xd_{-j}, Z, D}} - R_{Y\sim U\vert X, Z, D}\, R_{U\sim \xd_j\vert \xt, \xd_{-j}, Z, D}.
\end{multline*}
Furthermore, we use $R_{U\sim \xd_j\vert \xt,\xd_{-j},Z} =0$ both to simplify the following recursive partial correlation formula [iv] and to apply the reduction of partial correlation formula [v] to $R_{D\sim U\vert X,Z}$
\begin{align*}
    R_{U\sim \xd_j\vert \xt, \xd_{-j}, Z, D} &\stackrel{\text{[iv]}}{=}
    \frac{R_{U\sim \xd_j\vert \xt, \xd_{-j}, Z}-R_{D\sim \xd_j\vert \xt, \xd_{-j}, Z}\,R_{D\sim U\vert \xt, \xd_{-j}, Z}}{\sqrt{1-R_{D\sim \xd_j\vert \xt, \xd_{-j}, Z}^2}\sqrt{1-R_{D\sim U\vert \xt, \xd_{-j}, Z}^2}}\\
    &= - f_{D\sim \xd_j\vert \xt, \xd_{-j}, Z}\,f_{D\sim U\vert \xt, \xd_{-j}, Z},\\
    R_{D\sim U\vert \xt, \xd_{-j}, Z} &\stackrel{\text{[v]}}{=} R_{D\sim U\vert X,Z} \sqrt{1-R^2_{D\sim \xd_j\vert \xt,\xd_{-j},Z}}.
\end{align*}
Inserting these two relationships in the three-variable identity above and cancelling some terms, we arrive at the optimization constraint \eqref{eq:iv-con-ex22}:
\begin{equation*}
\begin{multlined}
    f_{Y\sim \xd_j\vert \xt, \xd_{-j}, Z, U, D} \sqrt{1-R^2_{Y\sim U\vert X, Z, D}}
    = \bigg[f_{Y\sim \xd_j\vert \xt, \xd_{-j}, Z, D}\sqrt{1-R^2_{D\sim U\vert X,Z}} \\
    +R_{Y\sim U\vert X, Z, D}\,R_{D\sim \xd_j\vert \xt,\xd_{-j},Z}\,R_{D\sim U\vert X,Z} \bigg]\Big/\sqrt{1-R^2_{D\sim U\vert X,Z}(1-R^2_{D\sim \xd_j\vert \xt,\xd_{-j},Z})}.
\end{multlined}
\end{equation*}



\section{Adapted Grid-Search Algorithm}\label{app:algorithm}

\begin{algorithm}
\setstretch{1.15}
    \SetKwComment{Comment}{\vspace{-1cm}/* }{ */}
    \SetSideCommentLeft
    \KwIn{
    bounds on $U \rightarrow D$: $B_{1,j}^l, B_{1,j}^u, B_{1,j}^l(\hth),B_{1,j}^u(\hth)$,\\ \phantom{Inputttt} bounds on $U\rightarrow Y$: $B_{2,j}^l, B_{2,j}^u, B_{3,j}^l(\hth),B_{3,j}^u(\hth), B_{4,j}^l(\hth),B_{4,j}^u(\hth)$,\\
    \phantom{Inputttt} bounds on $U\leftrightarrow Z$: $B_{6,j}^l, B_{6,j}^u, B_{6,j}^l(\hth),B_{6,j}^u(\hth)$,\\
    \phantom{Inputttt} bounds on $Z\rightarrow Y$:
    $B_{7,j}^l, B_{7,j}^u, b_7$,\\
    \phantom{Inputttt} number of grid points: $N_{\psi_1}, N_{\psi_2}, N_{\psi_5}$.}
    \KwOut{grid approximation}
        \caption{Grid approximation of $\Psi^*(\hth)$}
        \label{alg:grid-complex}
        \DontPrintSemicolon
        $\psi_{1}^l \gets \max_j\{B_{1,j}^l, B_{1,j}^l(\hth)\}$ \Comment*[r]{U -> D}
        $\psi_{1}^u \gets \min_j\{B_{1,j}^u, B_{1,j}^u(\hth)\}$\;
        \For{$i \in \{1,\ldots,N_{\psi_1}\}$}{

            $\psi_{1,i} \gets \psi_{1}^l + (i-1)\, (\psi_{1}^u-\psi_{1}^l)/(N_{\psi_1}-1)$\;

            $\psi_{3,i}^l \gets \max_j\{B_{3,j}^l(\hth)\} \vee \max_{j}\{g_3(B_{4,j}^l(\hth), \psi_{1,i}, \hth)\}$ \Comment*[r]{U -> Y}
            $\psi_{3,i}^u \gets \min_j\{B_{3,j}^u(\hth)\} \wedge \min_{j}\{g_3(B_{4,j}^u(\hth), \psi_{1,i}, \hth)\}$\;

            $\psi_{2,i}^l \gets \max_j\{B_{2,j}^l\} \vee g_2(\psi_{3,i}^l, \psi_{1,i}, \hth)$\;
            $\psi_{2,i}^u \gets \min_j\{B_{2,j}^u\} \wedge g_2(\psi_{3,i}^u, \psi_{1,i}, \hth)$\;
            \uIf{$\psi_{2,i}^l > \psi_{2,i}^u$}{
                $\psi_{1,i} \gets$ \texttt{NA};
                $\quad \psi_{2,i}^l \gets$ \texttt{NA};
                $\quad \psi_{2,i}^u \gets$ \texttt{NA}\;
            }\Else{
                $\psi_{6}^l \gets \max_j\{B_{6,j}^l, B_{6,j}^l(\hth)\}$ \Comment*[r]{U <-> Z}
        $\psi_{6}^u \gets \min_j\{B_{6,j}^u, B_{6,j}^u(\hth)\}$\; 
                $\psi_{5,i}^l \gets f^{-1}(\,g_{f,5}(\psi_6^l, \psi_{1,i},\hth)\,)$\;
                $\psi_{5,i}^u \gets f^{-1}(\,g_{f,5}(\psi_6^u, \psi_{1,i},\hth)\,)$\;
                \For{$k \in \{1,\ldots,N_{\psi_2}\}$}{
                    $\psi_{2,i,k} \gets \psi_{2,i}^l + (k-1)\, (\psi_{2,i}^u-\psi_{2,i}^l)/(N_{\psi_2}-1)$\;
                    $\mathrm{comp}_{i,k}\gets$ \texttt{False}\;
                    $\psi_{7,i,k}^l \gets \max_j\{B_{7,j}^l\} \vee - \sqrt{b_7} \cdot \lvert f^{-1}(\, g_{f,8}(\psi_{1,i}, \psi_{2,i,k},\hth) \,)\rvert$ \Comment*[r]{Z -> Y}
                    $\psi_{7,i,k}^u \gets \min_j\{B_{7,j}^u\} \wedge \sqrt{b_7} \cdot \lvert f^{-1}(\, g_{f,8}(\psi_{1,i}, \psi_{2,i,k},\hth) \,)\rvert$\;
                    \If{$\psi_{7,i,k}^l \leq \psi_{7,i,k}^u$}{
                        \For{$m \in \{1,\ldots,N_{\psi_5}\}$ and $\neg\mathrm{comp}_{i,k}$}{
                            $\psi_{5,i,m} \gets \psi_{5,i}^l + (m-1)\, (\psi_{5,i}^u-\psi_{5,i}^l)/(N_{\psi_5}-1)$\;
                            $\psi_{7,i,k,m} \gets f^{-1}(\,g_{f,7}(\psi_{2,i,k}, \psi_{5,i,m}, \hth)\,)$\;
                            $\mathrm{comp}_{i,k} \gets \psi_{7,i,k}^l \leq \psi_{7,i,k,m}$ \texttt{and} $\psi_{7,i,k,m} \leq \psi_{7,i,k}^u$\;
                        }
                    }
                }
                $\psi_{2,i}^l \gets \min_k\{\psi_{2,i,k}\colon \mathrm{comp}_{i,k}=\text{\texttt{True}}\}$\;
                $\psi_{2,i}^u \gets \max_k\{\psi_{2,i,k}\colon \mathrm{comp}_{i,k}=\text{\texttt{True}}\}$\;
            }
            
        }
        \KwRet{$\cup_{i=1}^{N_{\psi_1}} \{\psi_{1,i}\} \times \{\psi_{2,i}^l, \psi_{2,i}^u\}$}
\end{algorithm}

The proposed algorithm first constructs a set of equally spaced points that are (approximately) contained in $\Psi^*(\hat{\theta})$; then, it evaluates $\beta$ over this set and takes the minimum/maximum of the obtained $\beta$-values. The latter step is straightforward whereas the former is complex when multiple interlocking constraints are present. In order to keep the the notation short, we introduce some abbreviations:
\begin{alignat*}{4}
    \psi_1 &= R_{D\sim U\vert X,Z}, &\qquad
    \psi_2 &= R_{Y\sim U\vert X,Z,D}, &\qquad
    \psi_3 &= R_{Y\sim U\vert X,Z}, &\qquad
    \psi_4 &= R_{Y\sim U\vert \xt,\xd_I, Z,D},\\
    \psi_5 &= R_{Z\sim U\vert X,Z}, &\qquad
    \psi_6 &= R_{Z\sim U\vert X}, &\qquad
    \psi_7 &= R_{Y\sim Z\vert X,U,D}, &\qquad
    \psi_8 &= R_{Y\sim \xd_j\vert \xt, \xd_{-j},Z,U,D}.
\end{alignat*}
With slight abuse of notation, the parameters $\psi_4$ and $\psi_8$ can also be vectors if multiple constraints with different choices of $I$ and $j$, respectively, are specified. Further, the estimable parameters, e.g.\ $R_{Y\sim D\vert X,Z}$ or $R^2_{D\sim \xd_j\vert \xt, \xd_{-j},Z}$, are denoted by $\theta$ and the corresponding estimator by $\hth$. The inverse of the $f$-transformation $x \mapsto y = x/\sqrt{1-x^2}$ is denoted by $f^{-1}$ and given by $y \mapsto x = y/\sqrt{1+y^2}$. Both functions are monotonically increasing.

For direct constraints, $B_{k,j}^l$ and $B_{k,j}^u$ denote the $j$-th lower and upper bounds on $\psi_k$, respectively. Since comparative bounds are data-dependent, we add the argument $\hth$ to the respective bounds: For instance, $B_{3,1}^l(\hth)$ is the first lower bound on $\psi_3$ which follows from specifying a comparative constraint on $U\rightarrow Y$ without partialing out $D$ according to \eqref{eq:reg-con-y21}. The scalar or vector $b_7$ denotes the $b$-factor(s) in comparative bounds on ${Z\rightarrow Y}$. Beyond inequality constraints, comparative bounds also imply equality constraints $g_k$ and $g_{f,k}$ in the optimization problem. The former represents an equality constraint that determines $\psi_k$ as a function of other sensitivity and estimable parameters. For instance, $g_2$ stems from \eqref{eq:reg-con-y22} and $g_3$ from \eqref{eq:reg-con-y32}. The latter $g_{f,k}$, denotes a function that determines $f(\psi_k)$: $g_{f,7}$ and $g_{f,5}$ come from the first and second line of \eqref{eq:reg-iv-connection}, respectively, and $g_{f,8}$ stems from \eqref{eq:iv-con-ex22}. Inserting vectors instead of scalars as arguments of $g_k$ or $g_{f,k}$ is interpreted as componentwise evaluation. Like the inverse $f$-transformation, the constraints $g_k$ and $g_{f,k}$ are monotonically increasing in one of their arguments. This allows to ``push forward'' a bound on this argument onto $\psi_k$ and $f(\psi_k)$, respectively.

In order to compute a set of points that is approximately contained in $\Psi^*(\hat{\theta})$, we first discretize the interval of all possible $\psi_1$-values and construct the vector $(\psi_{1,i})_i \in \R^{N_{\psi_1}}$ which contains $N_{\psi_1}$ equally spaced points. (This corresponds to discretizing the interval $[\min\{\psi_1 \colon P_{\psi_1} \neq \emptyset\}, \max\{\psi_1 \colon {P_{\psi_1} \neq \emptyset}\}]$.) Second, we construct the vectors $(\psi_{2,i}^l)_i, (\psi_{2,i}^u)_i \in \R^{N_{\psi_1}}$ which approximate the corresponding minima and maxima of $\beta$ at the respective $\psi_1$-value. Thus, we can create the points
\begin{equation*}
    \bigcup_{i=1}^{N_{\psi_1}} \{\psi_{1,i}\} \times \{\psi_{2,i}^l, \psi_{2,i}^u\},
\end{equation*}
which are (approximately) contained and equally spaced in $\Psi^*(\hat{\theta})$. Evaluating the objective~$\beta$ over this set has complexity $\mathcal{O}(N_{\psi_1})$.

In case that only bounds on $U\rightarrow D$ and $U\rightarrow Y$ are specified, the computational complexity of generating the $\psi_1$-, $\psi_2^l$- and $\psi_2^u$-vectors grows linearly in $N_{\psi_1}$ and the computed points are actually elements of $\Psi^*(\hat{\theta})$ instead of merely approximating it. 
If at least one bound on $U \leftrightarrow Z$ or $Z \rightarrow Y$ is specified, determining the $\psi_2^l$- and $\psi_2^u$-vectors is more involved. We perform a two-dimensional grid search to determine $\psi_2$-values that are compatible with the imposed OLS- and IV-constraints. This ultimately amounts to checking if a $\psi_7$-value (that is $R_{Y\sim Z\vert X,U,D}$) exists that lies within the resulting bounds on it. Accordingly, the computational complexity is $\mathcal{O}(N_{\psi_1}\cdot N_{\psi_2} \cdot N_{\psi_5})$.

Algorithm~\ref{alg:grid-complex} contains the pseudocode. Note that the stated version of the algorithm is not the most computationally efficient but eases readability. 

\begin{remark}
    If the sensitivity model is well-specified, the probability of the estimated constraint set $\Psi(\hat{\theta})$ being empty is expected to converge to zero as the sample size grows. However, for moderate sample sizes an empty estimated constraint set may occasionally occur. In this case, we take a conservative approach and set the optimal value to $\infty$ or $-\infty$ depending on which end of the PIR is considered.
\end{remark}

\section{Simulation Study}\label{sec:simulation-study}
We investigate the empirical coverage of sensitivity intervals computed with the bootstrap in two scenarios: a regression model with one additional covariate and an instrumental variable model. In both set-ups, we set the nominal level to 90\%.

\subsection{Linear Regression Simulation}
We generate a sample of $n$ i.i.d.\
random vectors $(\eps_U, \eps_X,
\eps_X, \eps_Y)^T\sim \mathcal{N}(0,\Id)$ and compute the variables in
the model using the following linear structural equations:
\begin{equation*}
    U := \eps_U,\quad
    X := \eps_X,\quad
    D := X + U + \eps_D,\quad
    Y := D + 2X + U + \eps_Y.
\end{equation*}
Based on these structural equations, we derive the covariance matrix of the involved random variables
\begin{equation*}
    \var\bigg(
    \begin{pmatrix}
        U\\ X\\ D\\Y
    \end{pmatrix}
    \bigg)=
    \begin{pmatrix}
        1 & 0 & 1 & 2\\
        0 & 1 & 1 & 3\\
        1 & 1 & 3 & 6\\
        2 & 3 & 6 & 15
    \end{pmatrix}.
\end{equation*}
It can be used to compute (partial) $R$-values as well as the bias $\beta_{Y\sim D\vert X} - \beta = 1/2$. If the comparative constraints
\begin{equation*}
    R^2_{D\sim U} \leq R^2_{D\sim X},\qquad
    R^2_{Y\sim U} \leq \frac{4}{9}\, R^2_{Y\sim X}
\end{equation*}
are specified, the partially identified region is $[1,(3+\sqrt{3})/2]$. Hence, the true value $\beta=1$ equals the lower end of the PIR. The bounds above are sharp in the sense that the lower end of the partially identified range can only be reached when both inequalities are active, i.e.\ hold with equality.

In order to construct sensitivity intervals, we generate bootstrap
samples of the observed data and solve the corresponding optimization
problems. Then, we use percentile, basic and BCa bootstrap
\citep[chap.\ 5]{davison1997} to compute the
lower and upper end of the sensitivity interval. This approach is
compared to the heuristic sensitivity intervals described in Section \ref{sec:inference} as well as the oracle 90\% confidence interval, which could be computed if $U$
was observed.

We simulate data for different sample sizes $n$ and repeat each such experiment 1000 times to compute the empirical coverage and length of the sensitivity/confidence intervals. More specifically, we evaluate the empirical coverage of $\beta$ and the PIR for different sensitivity intervals and adapt the notion of length. In order to account for the fact that the length of typical confidence intervals approaches zero as $n\to \infty$ whereas the length of valid sensitivity intervals is lower bounded by the length of the PIR, we use the distance between the lower end of an interval and 1, when it covers 1, as length.

\begin{table}
  \caption{\label{tab:sim-reg} Simulation results of the linear regression
    example.}
  \centering
  \vspace{0.3cm}
    \begin{tabular}{llcccc}
    \hline
    \multicolumn{1}{c}{$n$} & \multicolumn{1}{c}{Method} & \multicolumn{2}{c}{Coverage} & \multicolumn{2}{c}{Length}\\ \cline{3-6}
                             & & $\beta$ & PIR & Mean & Median \\ \hline
    200 & Percentile bootstrap & 94.3\% & 91.6\% & 2.877 & 0.760 \\
    & Basic bootstrap & 86.2\% & 78.9\% & 0.523 & 0.306 \\
    & BCa bootstrap & 93.9\% & 90.6\% & 2.758 & 0.730 \\
    & Heuristic & 73.0\% & 47.9\% & 0.335 & 0.207 \\
    & Oracle & 91.0\% & - & 0.125 & 0.124 \\[1ex]
    500 & Percentile bootstrap & 95.7\% & 91.0\% & 0.455 & 0.329 \\
    & Basic bootstrap & 87.6\% & 81.5\% & 0.252 & 0.196 \\
    & BCa bootstrap & 95.2\% & 90.4\% & 0.439 & 0.318 \\
    & Heuristic & 72.3\% & 44.3\% & 0.161 & 0.121 \\
    & Oracle & 90.4\% & - & 0.079 & 0.077 \\[1ex]
    1000 & Percentile bootstrap &
    95.9\% & 91.7\% & 0.245 & 0.216 \\
    & Basic bootstrap & 90.7\% & 85.1\% & 0.175 & 0.151 \\
    & BCa bootstrap & 95.2\% & 91.1\% & 0.241 & 0.212 \\
    & Heuristic & 72.5\% & 46.6\% & 0.114 & 0.095 \\
    & Oracle & 89.2\% & - & 0.056 & 0.054 \\[1ex]
    2000 & Percentile bootstrap & 95.3\% & 91.7\% & 0.147 & 0.134 \\
    & Basic bootstrap & 89.1\% & 84.7\% & 0.119 & 0.104 \\
    & BCa bootstrap & 94.5\% & 91.0\% & 0.146 & 0.133 \\
    & Heuristic & 70.4\% & 43.9\% & 0.074 & 0.063 \\
    & Oracle & 90.6\% & - & 0.038 & 0.037\\
    \hline
    \end{tabular}
\end{table}
The results of this simulation study are summarized in Table \ref{tab:sim-reg}. Both percentile and BCa bootstrap exhibit coverage of the PIR close to the envisaged level of 90\%; its coverage of $\beta$ is close to 95\%. The latter is expected as the true value of $\beta$ is the lower end of the PIR. By comparison, the empirical coverage of sensitivity intervals constructed via basic bootstrap is 5 to 10 percentage points below the required level. Hence, we use BCa bootstrap to construct sensitivity intervals in the data example in the main text. Moreover, this simulation study illustrates that heuristic sensitivity intervals do not possess frequentist coverage guarantees: the empirical coverage of the PIR is consistently below 50\%. Finally, we see that sensitivity intervals are substantially longer than the oracle confidence interval. We attribute the increased length to the uncertainty stemming from estimating the constraints. In this simulation study, we did not encounter cases where the estimated constraint set was empty on a bootstrap sample.
\begin{figure}[t!]
    \centering
    \includegraphics[scale=0.45]{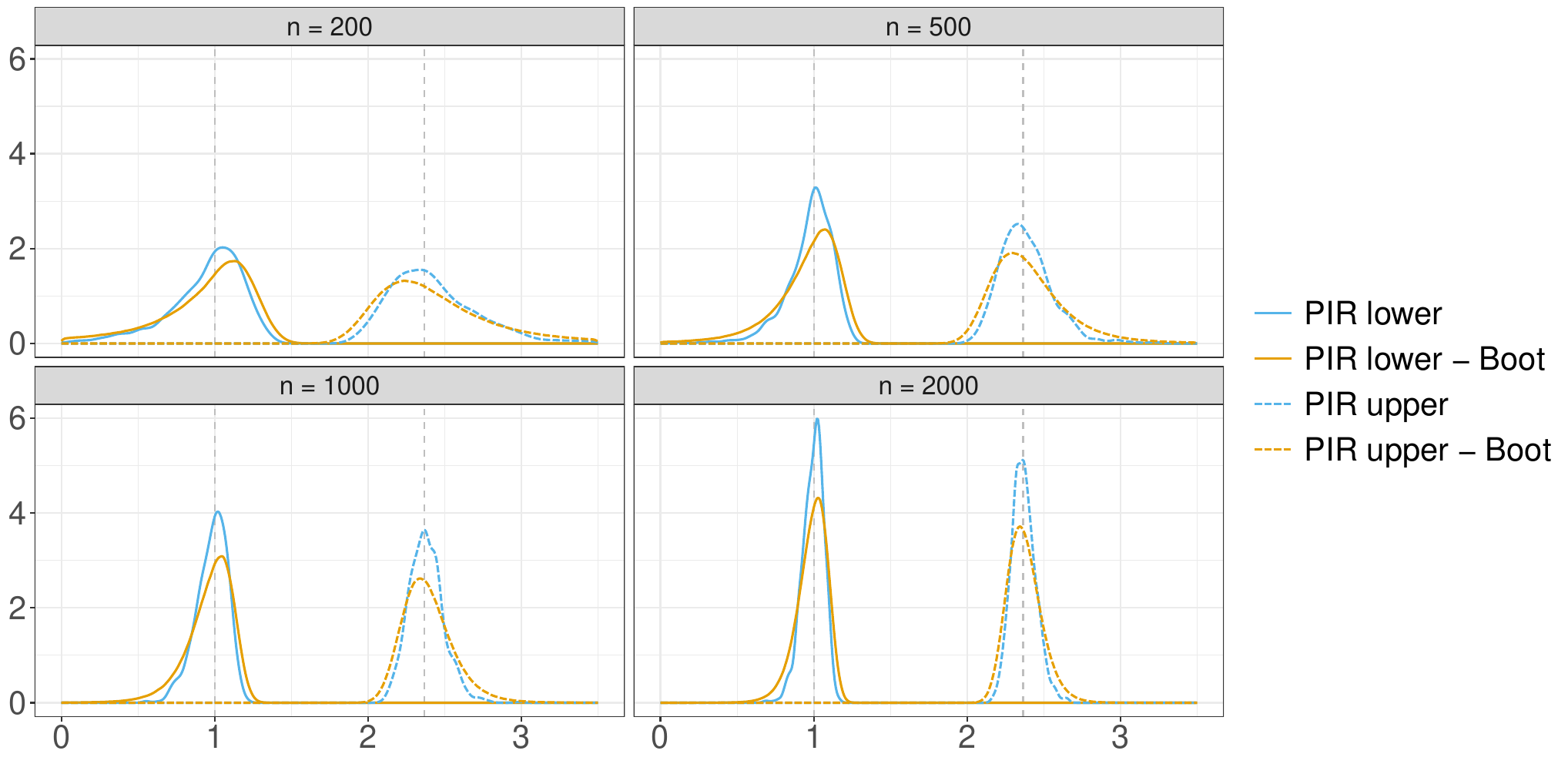}
    \caption{\label{fig:distribution-comparison} Empirical distribution of the lower and upper end of the PIR as well as the corresponding bootstrap distributions.}
\end{figure}

In order to investigate the coverage of bootstrap sensitivity intervals more closely, we consider the distribution of the estimated upper and lower end of the PIR as well as the corresponding bootstrap distributions. Figure \ref{fig:distribution-comparison} depicts the estimates of these distributions based on 1000 repetitions of the experiment. For small sample sizes $n$, we notice that the bootstrap distribution is both biased and skewed. Both phenomena diminish as $n$ grows so that the bootstrap distribution approximates the target distribution more closely. 

\subsection{Linear Instrumental Variable Simulation}
We generate 100 i.i.d.\ samples from the distribution $(\eps_U, \eps_Z, \eps_D, \eps_Y)^T \sim \mathcal{N}(0, \Id)$ and compute the variables of the model as follows
\begin{equation*}
    U := \eps_U,\quad
    Z := \eps_Z,\quad
    D := Z + U + \eps_D,\quad
    Y := D + U + \eps_Y.
\end{equation*}
This data-generating process fulfills the instrumental variable assumptions which renders $\beta = 1$ point identified. Hence, a sensitivity interval where the IV-related sensitivity parameters are set to zero ought to be comparable with the confidence interval that is based on the asymptotic normality of the TSLS estimator. In order to use Algorithm~\ref{alg:grid-complex}, we slightly relax the IV assumptions requiring $R_{Z\sim U\vert X}, R_{Y\sim Z\vert X,U,D}\! \in [-0.002, 0.002]$ and further set $R_{D\sim U\vert X, Z} \in [-0.999,0.999]$ to bound it away from $-1$ and $1$.

We compute the empirical coverage and length of sensitivity intervals constructed via percentile, basic and BCa bootstrap, the heuristic sensitivity intervals and the oracle confidence intervals over 1000 repetitions of the experiment. Due to the high computational costs, we conduct this simulation study only for sample size $n=100$.

\begin{table}[t!]
  \caption{\label{tab:sim-iv} Simulation results of the instrumental variable example.}
  \centering
  \vspace{0.3cm}
    \begin{tabular}{lccc}
    \hline
    \multicolumn{1}{c}{Method} & \multicolumn{1}{c}{Coverage} & \multicolumn{2}{c}{Length}\\ \cline{3-4}
    &  & Mean & Median \\ \hline
    Percentile bootstrap & 92.2\% & 0.328 & 0.304 \\
    Basic bootstrap & 94.1\% & 0.262 & 0.239 \\
    BCa bootstrap & 91.7\% & 0.330 & 0.301 \\
    Heuristic & 99.9\% & 1.721 & 1.037 \\
    Oracle & 88.8\% & 0.179 & 0.170\\
    \hline
    \end{tabular}
\end{table}
The results of this experiment are stated in Table \ref{tab:sim-iv}. We notice that the bootstrap sensitivity intervals are on par with the oracle confidence interval in terms of coverage and comparable in terms of length. By contrast, the heuristic sensitivity intervals exhibit very high coverage but their length is too long to be informative in practice. In this simulation study, 4 of the $10^3 \cdot 10^3 = 10^6$ constructed bootstrap samples led to an empty constraint set. 

\section{Comparison Points in $R$-contour Plots}\label{app:comparison-points}
In Section~\ref{sec:r-contours}, we add comparison points to the $R$-contour plots to provide context for assessing the strength of the unmeasured confounder $U$. First, we provide the formulas of the comparison points and discuss them; then, we state the mathematical derivations.

\subsection{Formulas of Different Comparison Points}
In the main text, we consider three different kinds of comparison points. The first approach follows a proposal by \citet{imbens_2003} and adds the points
\begin{equation*}\label{eq:informal-comp-point}
    \left(\sqrt{b}\, \rh_{D \sim X_j\vert X_{-j},Z},\, \sqrt{b}\, \rh_{Y\sim X_j\vert
  X_{-j},Z,D}\right),
\end{equation*}
for certain choices of $b > 0$ and $j \in [p]$ to the contour plot. However, this method of constructing benchmarks for $R_{D\sim U\vert X,Z}$ and $R_{Y\sim U\vert X,Z,D}$ is not entirely honest because different sets of covariates are conditioned on. Moreover, regressing out a potential collider $D$ may leave $\hat{R}^2_{Y\sim X_j\vert X_{-j},Z,D}$ hard to interpret.

Therefore, we aim to provide more rigorous comparison points. To this end, we introduce the abbreviations 
\begin{equation*}
    \rh_D :=\rh_{D\sim \xd_j\vert \xt, \xd_{-j}, Z},\qquad \rh_Y := \rh_{Y\sim \xd_j\vert\xt,\xd_{-j},Z,D},
\end{equation*}
where $j \in [\dot{p}]$; the corresponding $\rh^2$- and $\fh$-values are defined accordingly.

First, we consider the sensitivity parameter $R_{D\sim U\vert X,Z}$. According to \eqref{eq:reg-cond-d2}, if $U$ explains $b$ times as much variance of $D$ as $\xd_j$ and the respective correlations of $U$ and $\xd_j$ with $D$ have the same sign, then
\begin{equation}\label{eq:comp-point-d}
  R_{D\sim U \vert X,Z} = \sqrt{b}\, \frac{R_{D\sim \xd_j \vert \xt,
      \xd_{-j},Z}}{\sqrt{1-R^2_{D\sim \xd_{j}\vert \xt,\xd_{-j},Z}}} = \sqrt{b}\, f_{D\sim \xd_j \vert \xt,\xd_{-j},Z}.
\end{equation}

For the second sensitivity parameter $R_{Y\sim U\vert X,Z,D}$, we can make two types of comparisons: unconditional and conditional on $D$, cf.\ sensitivity bounds on $U\rightarrow Y$ in Table~\ref{tab:constraints}. In the former case, we assume that $U$ explains $b$ times as much variance in $Y$ as $\xd_j$ does, given $(\xt, \xd_{-j},Z)$, and that the respective correlations have the same sign. In the latter case, we additionally condition on $D$. For the first comparison, using the relationships \eqref{eq:reg-con-y22} and \eqref{eq:reg-con-y21} yields
\begin{equation}\label{eq:r-contour-comp1}
     R_{Y\sim U\vert X,Z,D} = \frac{1}{\sqrt{1-(1+b)R^2_{D\sim \xd_j\vert \xt, \xd_{-j},Z}}}\, \sqrt{b}\,f_{Y\sim \xd_j\vert \xt, \xd_{-j},Z,D}.
\end{equation}
For the second comparison, we apply a similar reasoning and obtain
\begin{equation}\label{eq:r-contour-comp2}
\begin{multlined}
    R_{Y\sim U\vert X,Z,D}\\[1ex] = \frac{\sqrt{1-(1+b)R_{D\sim \xd_j\vert \xt, \xd_{-j},Z}^2\!+b R_{D\sim \xd_j\vert \xt, \xd_{-j},Z}^4}\!+\!R_{D\sim \xd_j\vert \xt, \xd_{-j},Z}^2}{\sqrt{1-(1+b)R^2_{D\sim \xd_j\vert \xt, \xd_{-j},Z}}}\, \sqrt{b}\,f_{Y\sim \xd_j\vert \xt, \xd_{-j},Z,D}.
\end{multlined}
\end{equation}
Hence, in order to compare $U$ to $b$ times the explanatory capability of $\xd_j$, we can add the estimated comparison points
\begin{equation*}
    \left(\sqrt{b} \fh_D,\, \frac{1}{\sqrt{1-(1+b)\rh_D^2}}\, \sqrt{b}\,\fh_Y\right),\quad \text{or}\quad
    \left( \sqrt{b} \fh_D,\, \frac{\sqrt{1-(1+b)\rh_D^2+b \rh_D^4}+\rh_D^2}{\sqrt{1-(1+b)\rh_D^2}}\, \sqrt{b}\,\fh_Y\right),
\end{equation*}
depending on whether we compare $U$ and $\xd_j$ unconditional or conditional on $D$. The corresponding comparison points for $R^2$- instead of $R$-values were suggested by \citet{ch}. Moreover, we remark that the two proposed points are identical if $b=1$.

\subsection{Derivation of Comparison Points}\label{sec:derivation-comparison-points}
To shorten the notation in the following, we introduce the abbreviation $W := (\xt, \xd_{-j}, Z)$. In the main text, we consider the case where $U$ explains $b$ times as much variance in $D$ and $Y$ as $\xd_j$ does -- conditional on $W$ or $(W,D)$. Here, we consider the more general case, where the multiplicative constant $b$ may be different for the relationship between $U \rightarrow D$ and $U \rightarrow Y$. We indicate this with a corresponding subscript.

First, we consider the sensitivity parameter $R_{D\sim U\vert X,Z}$. Due to \eqref{eq:reg-cond-d2}, if $U$ explains $b_D$ times as much variance of $D$ as $\xd_j$ -- conditional on $W$ -- and the respective correlations of $U$ and $\xd_j$ with $D$ have the same sign, then
\begin{equation}\label{eq:comp-point-d}
  R_{D\sim U \vert X,Z} = \sqrt{b_D}\, \frac{R_{D\sim \xd_j \vert W}}{\sqrt{1-R^2_{D\sim \xd_{j}\vert W}}} = \sqrt{b_D}\, f_{D\sim \xd_j \vert W}.
\end{equation}

In the following, we derive a similar formula for $R_{Y\sim U\vert X,Z,D}$ both for comparisons unconditional on $D$ and conditional on $D$. In either case, we make use of the following equation which follows from applying rule~[vi] of the $R^2$-calculus with $Y \equiv Y$, $X \equiv \xd_j$, $W \equiv D$ and $Z = W$:
\begin{equation}\label{eq:derivation-comp-points}
    R_{Y\sim D\vert W, \xd_j} \stackrel{\text{[vi]}}{=} \frac{1}{R_{D\sim \xd_j\vert W}}\!\left[
    f_{Y\sim \xd_j\vert W}\,\sqrt{1-R^2_{D\sim \xd_j\vert W}} - f_{Y\sim \xd_j\vert W, D}\,\sqrt{1-R^2_{Y\sim D\vert W, \xd_j}}  
    \right]\!.
\end{equation}

\subsubsection{Comparison unconditional on $D$}\label{sec:comp-point-unconditional}
Assuming that $U$ explains $b_Y$ times as much variance in $Y$ as $\xd_j$ does -- conditional on $W$ -- and that the respective correlations have the same sign, \eqref{eq:reg-con-y21} implies
\begin{equation*}
    R_{Y\sim U\vert W,\xd_j} = \sqrt{b_Y}\,\frac{R_{Y\sim \xd_j\vert W}}{\sqrt{1-R^2_{Y\sim \xd_j\vert W}}} = \sqrt{b_Y}f_{Y\sim \xd_j\vert W}.
\end{equation*}
Inserting this relationship as well as \eqref{eq:comp-point-d} into \eqref{eq:reg-con-y22} yields
\begin{equation*}
    R_{Y\sim U\vert X,Z,D} = R_{Y\sim U\vert W,\xd_j,D} = \frac{\sqrt{b_Y}f_{Y\sim \xd_j\vert W} - R_{Y\sim D\vert W, \xd_j}\,\sqrt{b_D}f_{D\sim \xd_j\vert W}}{\sqrt{1-R^2_{Y\sim D\vert W, \xd_j}}\sqrt{1-b_D f_{D\sim \xd_j\vert W}^2}}.
\end{equation*}
We reformulate this expression by invoking \eqref{eq:derivation-comp-points}, multiplying numerator and denominator with $\sqrt{1-R^2_{Y\sim \xd_j\vert W}}$ and applying rule [iii] of the $R^2$-calculus
\begin{equation*}
    R_{Y\sim U\vert X,Z,D} = \frac{\left(\sqrt{b_Y}-\sqrt{b_D}\right)\,R_{Y\sim \xd_j\vert W} + \sqrt{b_D}\,\sqrt{\frac{1-R^2_{Y\sim D+\xd_j\vert W}}{1-R^2_{D\sim \xd_j\vert W}}}\, f_{Y\sim \xd_j\vert W, D}}{\sqrt{1-R^2_{Y\sim D+\xd_j\vert W}}\sqrt{1-b_D f_{D\sim \xd_j\vert W}^2}}.
\end{equation*}
Equation \eqref{eq:r-contour-comp1} follows from setting $b_Y = b_D = b$ and simplifying the resulting expression.

\subsubsection{Comparison conditional on $D$}\label{sec:comp-point-conditional}
Assuming that $U$ explains $b_Y$ times as much variance in $Y$ as $\xd_j$ does -- conditional on $W$ and $D$ -- and that the respective correlations have the same sign implies
\begin{equation*}
    R_{Y\sim U\vert W,D} = \sqrt{b_Y} R_{Y\sim \xd_j\vert W,D}.
\end{equation*}
Plugging this relationship as well as \eqref{eq:comp-point-d} into \eqref{eq:reg-con-y32} leads to
\begin{multline*}
    R_{Y\sim U\vert W,\xd_j} = \frac{1}{\sqrt{1-R^2_{Y\sim \xd_j\vert W}}} \bigg[R_{Y\sim D\vert W}\,\sqrt{b_D}\,R_{D\sim \xd_j\vert W}\\ + \sqrt{b_Y}\,R_{Y\sim \xd_j\vert W,D}\,\sqrt{1-R^2_{Y\sim D\vert W}}\sqrt{1-b_D R^2_{D\sim \xd_j\vert W}}\bigg].
\end{multline*}
To get a formula for $R_{Y\sim U\vert X,Z,D}$, we insert the equation above into \eqref{eq:reg-con-y22}, multiply numerator and denominator by $\sqrt{1-R^2_{Y\sim \xd_j\vert W}}$, simplify the resulting expression via rule [iii] and obtain
\begin{multline*}
        R_{Y\sim U\vert X,Z,D} = 
        -\frac{R_{Y\sim D\vert W, \xd_j}\,\sqrt{b_D}f_{D\sim \xd_j\vert W} \sqrt{1-R^2_{Y\sim\xd_j\vert W}}}{\sqrt{1-R^2_{Y\sim D+\xd_j\vert W}}\sqrt{1-b_D f_{D\sim \xd_j\vert W}^2}}\\[1.5ex]
        +\frac{R_{Y\sim D\vert W}\,\sqrt{b_D}\,R_{D\sim \xd_j\vert W}
        + \sqrt{b_Y}\,R_{Y\sim \xd_j\vert W,D}\,\sqrt{1-R^2_{Y\sim D\vert W}}\sqrt{1-b_D R^2_{D\sim \xd_j\vert W}}}{\sqrt{1-R^2_{Y\sim D+\xd_j\vert W}}\sqrt{1-b_D f_{D\sim \xd_j\vert W}^2}}.
\end{multline*}
Invoking \eqref{eq:derivation-comp-points} in the first equality below and applying rule [iv] with $Y \equiv Y$, $X \equiv \xd_j$, $W \equiv D$ and $Z \equiv W$ in the second, we arrive at
{\allowdisplaybreaks
\begin{align*}
    R_{Y\sim U\vert X,Z,D}&=
    \begin{multlined}[t]
    \frac{\sqrt{b_D}(R_{Y\sim D\vert W}\,R_{D\sim \xd_j\vert W}-R_{Y\sim \xd_j\vert W})
        +\sqrt{b_D}\, \sqrt{\frac{1-R^2_{Y\sim D+\xd_j\vert W}}{1-R^2_{D\sim X_j\vert W}}}\, f_{Y\sim \xd_j\vert W,D}}{\sqrt{1-R^2_{Y\sim D+\xd_j\vert W}}\sqrt{1-b_D f_{D\sim \xd_j\vert W}^2}}\\[1ex]
        + \frac{\sqrt{b_Y}\,R_{Y\sim \xd_j\vert W,D}\,\sqrt{1-R^2_{Y\sim D\vert W}}\sqrt{1-b_D R^2_{D\sim \xd_j\vert W}}}{\sqrt{1-R^2_{Y\sim D+\xd_j\vert W}}\sqrt{1-b_D f_{D\sim \xd_j\vert W}^2}}
    \end{multlined}\\[2ex]
    &=
    \begin{multlined}[t]
    \frac{\left(\sqrt{b_Y} \sqrt{1-b_D R^2_{D\sim \xd_j\vert W}} - \sqrt{b_D}\sqrt{1-R^2_{D\sim \xd_j\vert W}}\right)R_{Y\sim \xd_j\vert W,D}\,\sqrt{1-R^2_{Y\sim D\vert W}}}{\sqrt{1-R^2_{Y\sim D+\xd_j\vert W}}\sqrt{1-b_D f_{D\sim \xd_j\vert W}^2}}\\[1ex]
        + \frac{\sqrt{b_D}\, \sqrt{\frac{1-R^2_{Y\sim D+\xd_j\vert W}}{1-R^2_{D\sim X_j\vert W}}}\, f_{Y\sim \xd_j\vert W,D}}{\sqrt{1-R^2_{Y\sim D+\xd_j\vert W}}\sqrt{1-b_D f_{D\sim \xd_j\vert W}^2}}.
    \end{multlined}
\end{align*}
}
Equation \eqref{eq:r-contour-comp2} now follows from setting $b_Y = b_D = b$ and simplifying the resulting expression.

\section{Additional $R$-contour Plot}
\begin{figure}[t!]
    \centering
    \makebox{\includegraphics[scale=0.55]{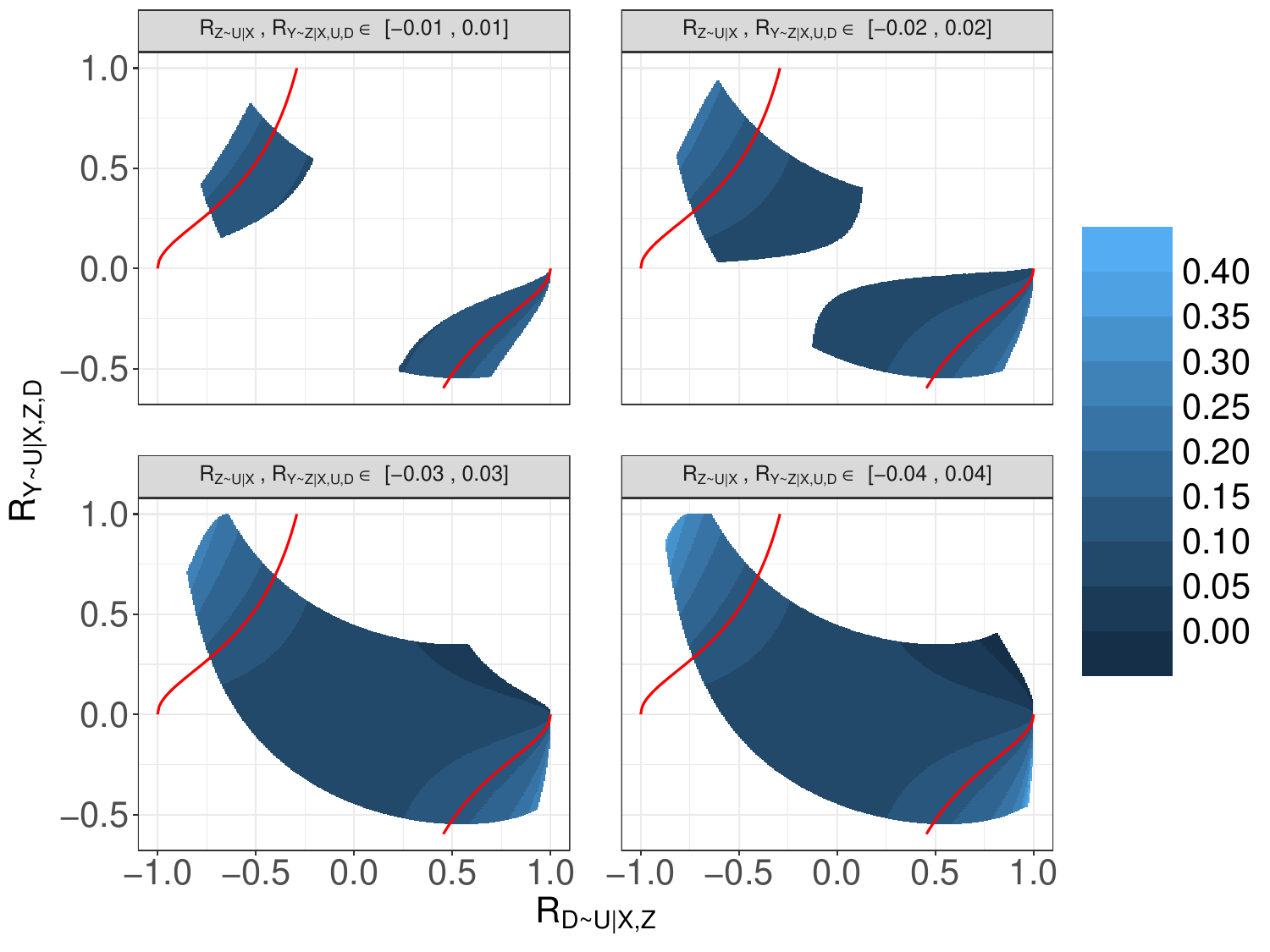}}
    \caption{\label{fig:r-contour-iv}$R$-sensitivity contours for the lower end of the estimated PIR: The lines correspond to the values of $R_{D\sim U\vert X,Z}$ and $R_{Y\sim U\vert X,Z,D}$ that conform with the IV-assumptions according to \eqref{eq:reg-iv-connection}.}
\end{figure}

In this section, we illustrate the utility of the $R$-contour plot as a way to visualize the estimated feasible set $\Psi(\hth)$. To this end, we again use the dataset analysed in \citet{card}. Sensitivity analysis with multiple bounds often entails a complex set of constraints. Consider the following sensitivity model
\begin{gather*}
    R_{Z\sim U\vert X},\, R_{Y\sim Z\vert X, U, D} \in [-r, r],\quad r\in\{0.01, 0.02, 0.03, 0.04\},\\
    R^2_{Y\sim U\vert \xt, \xd_{-j}, Z, D} \leq 5\, R^2_{Y\sim \xd_j\vert \xt, \xd_{-j}, Z, D},\qquad
   R_{D\sim U\vert X,Z} \in [-0.99, 0.99],
\end{gather*}
where $r$ parameterizes the degree of deviation from the instrumental variables assumptions and the covariate $\xd_j$ is the indicator for being black.

Figure \ref{fig:r-contour-iv} shows the estimated feasible set $\Psi(\hat{\theta})$ for different values of $r$. For $r=0.01$, the feasible set is small and concentrated around the lines that correspond to $R_{D\sim U\vert X,Z} = R_{Y\sim  U\vert X,Z,D} = 0$ (the instrumental variable is valid). As $r$ increases, the feasible set becomes larger as expected. The curved shape of the region of feasible values is a result of the comparative bound on $U\rightarrow Y$ and the associated constraints. Note that -- despite seemingly simple constraints -- $\Psi(\hat{\theta})$ is not convex and for small values of $r$ not even connected. Moreover, we observe that $\beta$ assumes its most extreme values as $R_{D\sim U\vert X,Z}$ approaches 1. This highlights the importance of bounding $R_{D\sim U\vert X,Z}$ away from $-1$ and $1$ to ensure that the PIR has finite length.

\section{Choice of Hyper-parameters}
\begin{table}[tbh]
  \caption{\label{tab:hyperparameters} Hyper-parameters for different plots and simulation examples.}
  \centering
  \vspace{0.3cm}
    \begin{tabular}{lccc}
    \hline
    & $N_{\text{grid}}$ & $ N_{\text{b-contour}} $ & $N_{\text{boot}}$ \\\hline
    Figure \ref{fig:iv-sensint} & 200 & - & 3500\\
    Figure \ref{fig:b-contour-reg} & 200 & 30 & - \\
    Figure \ref{fig:b-contour-iv} & 200 & 30 & - \\
    Figure \ref{fig:r-contour-reg} & 400 & - & -\\
    Figure \ref{fig:r-contour-iv} & 300 & - & -\\
    Table \ref{tab:sim-reg} & 200 & - & 2500\\
    Table \ref{tab:sim-iv} & 200 & - & 1000\\
    \hline
    \end{tabular}
\end{table}

In Table \ref{tab:hyperparameters}, we list the hyper-parameters of Algorithm \ref{alg:grid-complex} that were used for different data analyses. The mesh size of the grid is the same in every dimension, that is the numbers of points considered per dimension $N_{\psi_1}$, $N_{\psi_2}$, and $N_{\psi_5}$ are equal. We define 
$N_{\text{grid}}:= N_{\psi_1} =~N_{\psi_2}=~N_{\psi_5}$. The number of points per dimension for $b$-contour plots and
the number of bootstrap samples are denoted by $N_{\text{b-contour}}$
and $N_{\text{boot}}$, respectively.

In the simulation study and data example in this work, we found that the PIR estimates change only marginally for values of $N_{\text{grid}}$ larger than 200. We recommend to consider at least 100 points per grid dimension, i.e.\ $N_{\text{grid}} = 100$. The rough structure of the $b$-contours often becomes apparent for $N_{\text{b-contour}}$ as low as $10$. Due to the computational costs of the optimization problem, we choose a relatively low number of bootstrap samples. The simulation studies empirically confirm that percentile and BCa bootstrap sensitivity intervals achieve good coverage nonetheless.

\bibliographystyle{jasa3}
\bibliography{reference}

\end{document}